\newtheorem{thm}{Theorem}[]
\newtheorem{prop}[thm]{Proposition}
\newtheorem{lem}[thm]{Lemma}
\newtheorem{cor}[thm]{Corollary}
\theoremstyle{definition}
\newtheorem{rem}[thm]{Remark}
\renewcommand{\phi}{\varphi}
\newcommand{\eps}{\varepsilon}
\newcommand{\ud}{\mathrm{d}}
\newcommand{\ue}{\mathrm{e}}
\newcommand{\ui}{\mathrm{i}}
\newcommand{\R}{\mathbb{R}}
\newcommand{\N}{\mathbb{N}}
\newcommand{\norm}[1]{\ensuremath{\left\lVert #1 \right\rVert}}
\newcommand{\abs}[1]{\ensuremath{\left\lvert #1 \right\rvert}}
\newcommand{\hilb}{\mathscr{H}}
\newcommand{\uppar}[1]{\ensuremath{^{(#1)}}}
\title{A nonrelativistic quantum field theory with point interactions in three dimensions}
\author{Jonas Lampart 
\thanks{CNRS \& Laboratoire interdisciplinaire Carnot de Bourgogne (UMR 6303), Université de Bourgogne Franche-Comté, 9 Av. A. Savary, 21078 Dijon Cedex, France.
\texttt{jonas.lampart@u-bourgogne.fr}}
}
\begin{document}

\maketitle  

\begin{abstract}

We construct a Hamiltonian for a quantum-mechanical model 
of nonrelativistic particles in three dimensions interacting  via the creation and annihilation of a second type of nonrelativistic particles, which are bosons. The interaction between the two types of particles is a point interaction concentrated on the points in configuration space where the positions of two different particles coincide. 
We define the operator, and its domain of self-adjointness, in terms of co-dimension-three boundary conditions on the set of collision configurations relating sectors with different numbers of particles.
\end{abstract}

\section{Introduction}

In this article we introduce a mathematical model for an interacting system composed of two kinds of nonrelativistic quantum particles.
The number of the first kind of particles, which we call $x$-particles, is conserved. These particles interact by creating and annihilating bosons of a second kind, called $y$-particles. 
The interaction is supported by the set of collision configurations between at least one $x$-particle and one $y$-particle.
Formally, it is given by the linear coupling $a(\delta_x)+a^*(\delta_x)$, where $\delta_x$ denotes the delta-distribution at the position $x$ of an $x$-particle and $a^*,a$ are the creation and annihilation operators of the $y$-particles. 

A very similar model is the BEC polaron, used in physics to describe the interaction of impurities, the $x$-particles, with a dilute Bose-Einsetin condensate~\cite{grusdt2015, grusdt2016, vlietinck2015}. In this context, it is natural to assume point interactions due to the dilute nature of the system. Furthermore, in the Bogoliubuv approximation, the quasi-particles describing excitations of the condensate have a dispersion relation $\omega(k)=\sqrt{a k^2 + bk^4}$ that grows quadratically for large momenta, so they behave simlarly to nonrelativistic particles.

Our model is also closely related to the local, Galilean invariant, Lee model~\cite{levy1967, schrader1968}. In that model, there are three types of nonrelativistic particles, usually called $V$, $\Theta$ and $N$. The $V$-particles can create $\Theta$-particles, whereupon they are transformed into $N$-particles. Conversely, a $\Theta$ and an $N$-particle can combine to form a $V$-particle. These processes are such that the total mass is conserved, $m_V=m_\Theta+m_N$. Since the $N$ and $\Theta$-particles cannot create any further ones, this model is composed of invariant sectors with a bounded total number of particles.
This, and the constraint on the masses, are the essential differences to the model we will consider.

Because of the singular interaction, the formal Hamiltonian of such a model is ultraviolet divergent and ill defined. 
We will define a self-adjoint and bounded-from-below Hamiltonian for our model and describe its domain using (generalised) boundary conditions that relate the (singular) behaviour of the wavefunction near the collision configurations and the wavefunction with one $y$-particle less. 
As a general approach to the ultraviolet problem, such boundary conditions were proposed by Teufel and Tumulka~\cite{TeTu15, TeTu16}, who called them interior-boundary conditions. 
A similar method was applied by Thomas~\cite{thomas1984} to a model with one or two $x$-particles and at most one $y$-particle, which closely resembles one of the sectors of the Galilean Lee model.
A variant of our model, where the $x$-particles are not dynamical but fixed at certain locations, was treated by Schmidt, Teufel, Tumulka and the author~\cite{IBCpaper}. Schmidt and the author~\cite{LaSch18} showed that one can define the Hamiltonian for less singular models with dynamical $x$-particles using this type of boundary conditions. Examples include the Nelson model and the two-dimensional variant of our problem. These results were generalised by Schmidt to general dispersion relations for the $x$-particles~\cite{schmidt2018}, and massless models~\cite{schmidt2019}. We expect that our analysis can be similarly generalised.

For many models in nonrelativistic quantum field theory the Hamiltonian may be defined by a renormalisation procedure. 
To our knowledge, no rigorous method was known to work for our problem, so far.
It was was observed numerically in~\cite{grusdt2015} (for a model with the same ultraviolet behaviour as ours) that, after subtracting the expected (linearly) divergent quantity, there is still a (logarithmic) divergence in the ultraviolet cutoff. 
Concerning rigorous results, the less singular cases treated in~\cite{LaSch18}
can be renormalised using a technique due to Nelson~\cite{nelson1964, GrWu18} -- but this does not seem to work for the three-dimensional model (the conditions of e.g.~\cite[Thm 3.3]{GrWu18} are not satisfied, and the predicted divergence would be in conflict with the numerical results of~\cite{grusdt2015}). 
Schrader~\cite{schrader1968} used a reordered resolvent expansion to renormalise the Hamiltonian for the Galilean Lee model. However, later generalisations of this method~\cite{eckmann1970, GMW18, wuensch17} do not cover our specific model either. This seems to be related to the fact that the  constraint on the masses in the Galilean Lee model alters the structure of the singularities in a specific way (see Remark~\ref{rem:cancellations}).

In this article, we will adapt the techniques of~\cite{LaSch18} to define the Hamiltonian for our model.
We explain how this result can be understood in the language of renormalisation in Remark~\ref{rem:Dressing}.
We also give an explicit characterisation of the domain, which is generally not easy to obtain by renormalisation. 

\section{Overview and Results}

In this section we present our main results and the reasoning behind them, while leaving the technical details for the later sections.
Let us first introduce the necessary notation. We consider a fixed number $M$ of $x$-particles, so the Hilbert space of our problem is
\begin{equation*}
 \hilb:=L^2(\R^{3M})\otimes \Gamma(L^2(\R^3))\,,
\end{equation*}
where $\Gamma(L^2(\R^3))$ is the bosonic Fock space over $L^2(\R^3)$. We will denote the sector of $\hilb$ with $n$ $y$-particles by $\hilb\uppar{n}$, by $\psi\uppar{n}$ the component of $\psi\in \hilb$ in this sector and by $N=\ud \Gamma(1)$ the number operator.
The Hamiltonian for the non-interacting model is
\begin{equation}\label{eq:Hfree}
 L=-\frac{1}{2m}\sum_{\mu=1}^M \Delta_{x_\mu} + \ud \Gamma(-\Delta_y+1),
\end{equation}
where $m$ is the mass of the $x$-particles, and we have set the mass of the $y$-particles to $\frac12$ and their rest-energy to one. Its domain $D(L)$ is given by
\begin{equation*}
D(L)=\left\{ \psi \in D(N) \Big\vert \forall n\in \N: \psi\uppar{n}\in H^2(\R^{3(M+n)}) \right \}.  
\end{equation*}
The interaction operator is formally given by
\begin{equation*}
 \sum_{\mu=1}^M \left( a^*(\delta_{x_\mu})+a(\delta_{x_\mu}) \right).
\end{equation*}
The obvious problem with this operator is that the creation operator is not a densely defined operator on $\mathscr{H}$, as
\begin{equation*}
 a^*(\delta_{x_\mu})\psi\uppar{n}= \frac{1}{\sqrt {n+1}} \sum_{i=1}^n \delta_{x_\mu}(y_i)\psi\uppar{n}(X, \hat Y_i)
\end{equation*}
(where $X=(x_1, \dots, x_M)$, $ \hat Y_i=(y_1,\dots ,y_{i-1}, y_{i+1},\dots, y_{n+1})\in \R^{3n}$), is not an element of $\hilb\uppar{n+1}$ for any nonzero $\psi\uppar{n}\in \hilb\uppar{n}$. Note, however, that the creation operator is a well-defined operator from $\hilb$ to a space of distributions. The annihilation operator is less problematic, since
\begin{equation}\label{eq:a}
 a(\delta_{x_\mu})\psi\uppar{n}= \frac{1}{\sqrt {n}} \sum_{i=1}^n \psi\uppar{n}(X, Y)\vert_{y_i=x_\mu}= \sqrt{n} \psi\uppar{n}(X, Y)\vert_{y_n=x_\mu},
\end{equation}
 is well defined for $\psi \in D(L)$.

 Our approach is not to give a meaning to the interaction operator directly, but rather to implement an interaction using boundary conditions on the sets
 \begin{equation}\label{eq:Cdef}
\mathscr{C}^n=\left\{(X,Y)\in \R^{3M}\times \R^{3n} \Bigg\vert \prod_{\mu=1}^M\prod_{i=1}^n |x_\mu-y_i|=0 \right\}  
 \end{equation}
of collision configurations between the $x$-particles and the $y$-particles. 
We will then see that the resulting operator includes an interaction of the desired form when interpreted in the sense of distributions.
This is similar to the representation of point interactions, with fixed particle number, by boundary conditions, see e.g.~\cite{DeFiTe1994, Co_etal15, MoSe17}.

 In order to impose boundary conditions for $L$, we first restrict $L$ to the domain
 \begin{equation}\label{eq:DL_0}
 D(L_0)=\left\{ \psi \in D(N) \Big\vert \forall n\in \N: \psi\uppar{n}\in H^2_0(\R^{3(M+n)}\setminus \mathscr{C}^n) \right \}
 \end{equation}
of functions that vanish on $\mathscr{C}^n$. The adjoint $L_0^*$ of $L_0=L\vert_{D(L_0)}$ is then an extension of $L$. Its domain contains functions of the form 
\begin{equation}\label{eq:Gdef}
G\phi\uppar{n}:=  - \sum_{\mu=1}^M L^{-1} a^*(\delta_{x_\mu}) \phi\uppar{n}\in \hilb\uppar{n+1},
\end{equation}
with $\phi \in \hilb$.
These functions diverge like $|x_\mu-y_i|^{-1}$ near the set where $|x_\mu-y_i|=0$, see Eq.~\eqref{eq:G exp}. 
We interpret
\begin{align*}
 \big(B_\mu\psi\uppar{n+1}\big)(X,Y)=
  \beta\lim_{ y_{n+1}\to x_\mu} \abs{x_\mu-y_{n+1}} \psi\uppar{n+1}(X,Y,y_{n+1}),
\end{align*}
with some constant $\beta$ (depending on $m$ and $n$), as a singular boundary value of such functions (the limit exists almost everywhere, see Lemma~\ref{lem:B}). We set $B=\sum_{\mu=1}^M B_\mu$ and choose $\beta$ so that $B G\phi=M \phi$. The free operator $L$ then corresponds to the operator with the Dirichlet-type boundary condition $ B\psi=0$.

There is a second relevant boundary value operator, given by the finite part of $\psi\uppar{n+1}$ at $y_{n+1}=x_\mu$. More precisely
\begin{align*}
&\big(A_\mu \psi\uppar{n+1}\big)(X,Y)\\
&= \lim_{r \to 0}\frac{\sqrt{n+1}}{4\pi} \int\limits_{S^2} \left(\psi\uppar{n+1}(X,Y,x_\mu +r \omega)- \frac{\alpha}{r}(B\psi\uppar{n+1})(X,Y)\right)\ud \omega,
\end{align*}
where $\alpha=(M\beta)^{-1}$ and the limit is taken in the distributional sense, see Lemma~\ref{lem:A}.
Note that this  is a local boundary value operator extending the evaluation at $y_{n+1}=x_\mu$. We set $A=\sum_{\mu=1}^M A_\mu$, which then extends the annihilation operator~\eqref{eq:a}.

It has been shown for variants of our model (with fixed $x$-particles in~\cite{IBCpaper}, and in two dimensions in~\cite{LaSch18}) that the operator $L_0^*+A$ is self-adjoint on a domain characterised by the boundary conditions $B \psi\uppar{n+1}=M \psi\uppar{n}$, for all $n\in \N$. These conditions should be viewed as relations on the space $D(L)\oplus G \hilb \subset D(L_0^*)$.  Since 
any vector in that space can be written as $\psi = \xi + G \phi$ with unique $\xi\in D(L)$, $\phi \in \hilb$, and because $BG \phi = M \phi$, the condition $B\psi = M\psi$ is equivalent to $\psi=\phi$, or $\xi=\psi-G\psi \in D(L)$.
To see how this boundary condition is related to the creation and annihilation operators, first observe that $G\psi$ is in the kernel of $L_0^*$, for every $\psi\in \hilb$. That is, for any $\phi \in D(L_0)$, which vanishes on the set $\mathscr{C}^n$ of collision configurations by definition, we have
\begin{align}
 \langle L_0^* G\psi, \phi \rangle_\hilb &=  \langle  \psi,G^* L_0 \phi \rangle_\hilb\notag\\
 &=-\sum_{n=0}^\infty\sum_{\mu=1}^M \langle \psi\uppar{n},a(\delta_{x_\mu})\phi\uppar{n+1} \rangle_{\hilb\uppar{n}}=0.
 \label{eq:G ker}
\end{align} 
Then, taking any $\psi\in \hilb$ such that $\psi-G\psi\in D(L)$, gives 
\begin{equation}\label{eq:H_0 create}
L_0^*\psi + A\psi = L(\psi - G\psi) + A \psi = L\psi + \sum_{\mu=1}^M a^*(\delta_{x_\mu})\psi+A\psi,
\end{equation}
in the sense of distributions. Thus $L_0^*+A$ represents an operator with creation and annihilation operators up to a choice of domain, including the boundary condition, and the choice of the extension $A$. The latter choice is made in such a way that the operator is local.

In the proof of self-adjointess for the two-dimensional model~\cite{LaSch18}, a key step is then to use again that $G^*L=-\sum_{\mu=1}^M a(\delta_{x_\mu})$, and that $A$ extends the annihilation operator, to rewrite the Hamiltonian as
\begin{align}
 L_0^*\psi + A\psi&=L(1 - G)\psi + A \psi \notag\\
 &= (1-G)^*L(1-G)\psi + A \psi - \sum_{\mu=1}^M a(\delta_{x_\mu})(1-G)\psi\notag\\
 &=(1-G)^*L(1-G)\psi + AG\psi.\label{eq:H_0 trafo}
\end{align}
The operator $T=AG$ is a symmetric operator on a domain $D(T)\supset D(L)$, and the proof is then essentially reduced to proving appropriate  bounds for this operator on the domain of $L_0^*$ with the boundary condition $\psi-G\psi \in D(L)$. This operator also appears in the theory of point interactions, where it is known as the Skornyakov--Ter-Martyrosyan operator.  

In the three-dimensional case, the range of $G$ is not contained in the domain $D(T)=D(L^{1/2})$ of $T$. Since a vector $\psi$ satisfying $\psi-G\psi \in D(L)$ is in $D(L^{1/2})$ if and only if $G\psi$ is, $T$, and thus also $A$, do not map all vectors satisfying $(1-G)\psi \in D(L)$, to $\hilb$  (see also Lemma~\ref{lem:T} and Remark~\ref{rem:N=1}). Consequently, this is not a good domain for $L_0^* +A$. 
This forces us to to modify our approach, by treating $T$ as a perturbation to the free part $L$.
We let 
\begin{equation}
 K=L+T
\end{equation}
with $D(K)=D(L)$. This operator is self-adjoint and bounded from below (see Section~\ref{sect:K}). We may restrict it to the kernel of $a(\delta_x)$, as above, and define
\begin{equation*}
 K_0= K\vert_{D(L_0)}.
\end{equation*}
The domain of the adjoint $K_0^*$ contains functions of the form
\begin{equation*}
G_T\phi\uppar{n}:=  - \sum_{\mu=1}^M (K+c_0)^{-1} a^*(\delta_{x_\mu}) \phi\uppar{n} \in \hilb\uppar{n+1},
\end{equation*}
where $c_0>0$ is chosen appropriately. These functions are in the kernel of $K_0^*+c_0$, by the reasoning of Eq.~\eqref{eq:G ker}.
Their main divergence is still proportional to $|x_\mu-y_i|^{-1}$, but their asymptotic expansion also contains a logarithmically divergent term, as we will show.
Thus, the boundary operator $B$ will still be defined on such functions and we can extend $a(\delta_{x_\mu})$, by taking the finite part in this expansion, which yields an operator $A_T$. 

\begin{align*}
&\big(A_T\psi\uppar{n+1}\big)(X,Y)\\
&=
\lim_{r\to 0}\sum_{\mu=1}^M
\frac{1}{4\pi}\int\limits_{S^2}
\left(\sqrt{n+1} \psi\uppar{n+1}(X,Y,x_\mu+r\omega)+ f(r) (B\psi\uppar{n+1})(X,Y)\right) \ud \omega,\notag
\end{align*}
where $f(r)=\alpha r^{-1}+\gamma \log r$ is a fixed function, with explicit constants $\alpha$, as for $A$, and $\gamma$ (depending on $m$, see Eq.~\eqref{eq:alpha_m}). This local boundary value operator then extends the sum of the $a(\delta_{x_\mu})$ to elements of the range of $G_T$, see Proposition~\ref{prop:A_T}.
Its action there is given by
\begin{equation*}
 A_T G_T =T +S,
\end{equation*}
with an operator $S$ that is symmetric on $D(S)=D(L^\eps)$ for any $\eps>0$.

For any $\psi \in D(K_0^*)$ such that $\psi-G_T\psi\in D(L)$, we then have, as in~\eqref{eq:H_0 create},
\begin{align}
 L\psi + \sum_{\mu=1}^M a^*(\delta_{x_\mu})+A_T\psi
 &= (K+c_0)(1-G_T)\psi -c_0\psi - T\psi+ A_T\psi\notag\\
 &= K_0^*\psi + (A_T-T)\psi.\label{eq:H create}
\end{align}
The left hand side is similar to~\eqref{eq:H_0 create}, but we have chosen a different domain $D(K_0^*)$, and thus also a different extension $A_T$.
This is the expression that will replace $L_0^*+A$ in our case of the three-dimensional model. Note that $A_T-T=A_T(1-G_T) + S$, so this operator is better behaved than $A_T$ alone and it will map vectors satisfying $(1-G_T)\psi\in D(L)=D(K)$ to $\hilb$.

Using $G_T^*(K+c_0)=-\sum_{\mu=1}^M a(\delta_{x_\mu})$, we can further rewrite the Hamiltonian, as in~\eqref{eq:H_0 trafo}, and obtain
\begin{align}\label{eq:H trafo}
K_0^* + (A_T-T)&=(K+c_0)(1-G_T)+\sum_{\mu=1}^M a(\delta_{x_\mu}) (1-G_T) +S- c_0\notag\\
&=(1-G_T)^*(K+c_0)(1-G_T)+S -c_0. 
\end{align}

Our main result is:

\begin{thm}\label{thm:sa}
Let $A$ and $A_T$, be the extensions of $\sum_{\mu=1}^M a(\delta_{x_\mu})$ defined above, $AG=T$ and $A_TG_T-T=S$.
Let
\begin{align*}
 D(H)=\left\{ \psi \in D(N) \Big\vert \forall n\in \N: \psi\uppar{n+1} - G_T\psi\uppar{n}\in H^2\left(\R^{3(M+n+1)}\right) \right\}, 
\end{align*}
then the operator
 \begin{align*}
  H&=K_0^* +(A_T-T)\\
  &=(1-G_T)^*(K+c_0)(1-G_T)+S -c_0
 \end{align*}
is self-adjoint on $D(H)$ and bounded from below. Furthermore, for any $n\in \N$  and $\psi\in D(H)$ we have (setting $\psi\uppar{-1}=0$)
\begin{equation*}
 H\psi\uppar{n} = L\psi\uppar{n} +  \sum_{\mu=1}^M a^*(\delta_{x_\mu})\psi\uppar{n-1}  +A_T\psi\uppar{n+1}
\end{equation*}
as elements of $H^{-2}(\R^{3(M+n)})$.
\end{thm}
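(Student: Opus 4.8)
The plan is to take the factored representation $H=(1-G_T)^*(K+c_0)(1-G_T)+S-c_0$ as the working definition, read off self-adjointness and semiboundedness from it, and then recover the other two assertions by unwinding the formal identities \eqref{eq:H create}--\eqref{eq:H trafo} on the precise domain $D(H)$. Write $\Phi:=1-G_T$, and fix $c_0$ large enough that, besides the requirements from Section~\ref{sect:K}, $\norm{G_T}<1$ on $\hilb$ (this is part of what ``chosen appropriately'' means; note that $D(H)$ is itself independent of the choice of $c_0$, as two resolvents differ by an $H^2$-function). Then $\Phi$ is a bounded bijection of $\hilb$ with bounded inverse $\Phi^{-1}=\sum_{k\ge0}G_T^k$, and since $G_T$ raises the boson number by one, $\Phi^{\pm1}$ also restrict to bijections of $D(N)$. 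With $D(K)=D(L)$ this identifies
\[
 \Phi^{-1}D(K)=\{\psi\in D(N):\Phi\psi\in D(K)\}=D(H)
\]
(interpreting the defining condition of $D(H)$ with $\psi\uppar{-1}=0$, so that $\Phi\psi\in D(L)$ is the conjunction of $\psi\uppar0\in H^2$ and $\psi\uppar{n+1}-G_T\psi\uppar n\in H^2$ for all $n$).

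Let $m_0>0$ be a lower bound for $K+c_0$. Then $R_0:=\Phi^{-1}(K+c_0)^{-1}(\Phi^{-1})^*$ is bounded, self-adjoint, positive and injective, with range $\Phi^{-1}D(K)=D(H)$, and a direct computation shows it is a two-sided inverse of $\Phi^*(K+c_0)\Phi$; hence $\tilde H:=\Phi^*(K+c_0)\Phi$ is self-adjoint on $D(\tilde H)=D(H)$, with $\tilde H=R_0^{-1}\ge\norm{R_0}^{-1}\ge m_0\norm{\Phi^{-1}}^{-2}>0$. Consequently $H_0:=\tilde H-c_0$ is self-adjoint on $D(H)$ and bounded below. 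It remains to treat $S$ as an infinitesimally small perturbation of $H_0$. For $\psi\in D(H)$ set $\xi:=\Phi\psi\in D(L)$, so $\psi=\xi+G_T\psi$. Since $\ran G_T\subset D(L^\eps)$ for $\eps$ small (the logarithmic term in the expansion of $G_T$ being exactly what prevents $\eps=\tfrac12$) and $S(L+1)^{-\eps}$ is bounded (the content of ``$S$ symmetric on $D(L^\eps)$''), we get $\psi\in D(S)$ and $\norm{S\psi}\lesssim\norm{L^\eps\xi}+\norm{\psi}$. Combining the interpolation inequality for $L^\eps$, the infinitesimal $L$-boundedness of $T$ (used already for the self-adjointness of $K$), and the identity $(H_0+c_0)\psi=\Phi^*(K+c_0)\xi$ — which yields $\norm{(K+c_0)\xi}\le\norm{\Phi^{-1}}\,\norm{(H_0+c_0)\psi}$ and hence, absorbing $\norm{T\xi}$, the bound $\norm{L\xi}\lesssim\norm{H_0\psi}+\norm{\psi}$ — one obtains $\norm{S\psi}\le\delta\norm{H_0\psi}+C_\delta\norm{\psi}$ for every $\delta>0$. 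By the Kato--Rellich theorem, $H=H_0+S$ is self-adjoint on $D(H_0)=D(H)$ and bounded below.

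Finally I would check the operator identity $K_0^*+(A_T-T)=\tilde H+S-c_0$ on $D(H)$ and the distributional formula, by making \eqref{eq:H create}--\eqref{eq:H trafo} rigorous there. For $\psi\in D(H)$: $\Phi\psi\in D(K)$ and $G_T\psi\in\ker(K_0^*+c_0)$ (as noted after the definition of $G_T$, cf.\ \eqref{eq:G ker} applied with $L$ replaced by $K+c_0$), so $\psi\in D(K_0^*)$ with $K_0^*\psi=(K+c_0)\Phi\psi-c_0\psi$; since $\psi\uppar{n+1}=(\Phi\psi)\uppar{n+1}+G_T\psi\uppar n$ with the first summand in $H^2$, $A_T\psi$ is defined and, acting as $\sum_\mu a(\delta_{x_\mu})$ on the regular part and as $T+S$ on $G_T\psi$ by Proposition~\ref{prop:A_T}, equals $\sum_\mu a(\delta_{x_\mu})\Phi\psi+(T+S)\psi$; and $G_T^*(K+c_0)=-\sum_\mu a(\delta_{x_\mu})$ cancels the terms $\sum_\mu a(\delta_{x_\mu})\Phi\psi$, leaving exactly $\tilde H\psi+S\psi-c_0\psi$. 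Reading \eqref{eq:H create} off in the $n$-boson sector then gives $H\psi\uppar n=L\psi\uppar n+\sum_\mu a^*(\delta_{x_\mu})\psi\uppar{n-1}+A_T\psi\uppar{n+1}$, and one checks that each summand separately lies in $H^{-2}(\R^{3(M+n)})$ even though their sum lies in $\hilb$ — this cancellation of singularities being precisely what the boundary condition in $D(H)$ encodes.

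Essentially all of the analysis is packed into the inputs imported from earlier sections: that $G_T$ is bounded on $\hilb$ and into $D(L^\eps)$ with norm control uniform over all boson sectors, that $K=L+T$ is self-adjoint (hence $T$ infinitesimally $L$-bounded), and above all that $A_TG_T-T=S$ with $S$ bounded on $D(L^\eps)$, i.e.\ that the $|x_\mu-y_i|^{-1}$ and logarithmic singularities produced by $G_T$ cancel against the counterterm $T$ down to a mild remainder. Granting these, the argument above is bookkeeping plus one application of Kato--Rellich; the genuine subtlety is to verify that every operator identity holds pointwise on the domain $D(H)$, and not merely in the formal manipulations \eqref{eq:H create}--\eqref{eq:H trafo}.
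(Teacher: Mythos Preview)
Your proof is correct and follows essentially the same route as the paper: show that $(1-G_T)^*(K+c_0)(1-G_T)$ is self-adjoint and bounded below via the invertibility of $1-G_T$, then prove that $S$ is infinitesimally bounded relative to this operator by splitting $S\psi=S(1-G_T)\psi+SG_T\psi$, and conclude by Kato--Rellich; the distributional identity is recovered from \eqref{eq:H create}--\eqref{eq:H trafo} on $D(H)$ exactly as you describe.

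The one point where you deviate is the invertibility of $\Phi=1-G_T$: you want to force $\norm{G_T}<1$ by taking $c_0$ large, and claim this is part of what ``chosen appropriately'' means. In the paper, however, $c_0>-\min\sigma(K)$ is the only requirement, and invertibility of $1-G_T$ (together with the $D(N)$-mapping property) is proved in Proposition~\ref{prop:G_T} for any such $c_0$, using the structure that $G_T$ raises boson number by one so that the Neumann series is finite on each sector, combined with the $n^{-1/4}$-decay from the $G_T$-bound. Your contraction argument is plausible (one can indeed show $\norm{G_T}\to0$ as $c_0\to\infty$ by first controlling $-(L+c_0)^{-1}a^*$ and then using the resolvent identity), but since you already import the bound $G_T:\hilb\to D(L^\eps)$ from Proposition~\ref{prop:G_T}, you might as well cite the invertibility from there too and drop the extra hypothesis on $c_0$. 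Either way, the argument goes through.
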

The condition $\psi\uppar{n+1} - G_T\psi\uppar{n}\in H^2(\R^{3(M+n+1)})$ means that functions in $D(H)$ satisfy the boundary condition $B\psi=M\psi$, among all functions of the form $\phi + G_T\xi$ with $\phi\in D(L)$ and $\xi\in \hilb$.


The rest of the article is devoted to the proof of Theorem~\ref{thm:sa}.
We start by discussing in detail the properties of the maps $G$ and $G_T$ as well as the domain of $H$.
 We then define the operator $A_T$ as a (distribution-valued) extension of the annihilation operator to $D(L)\oplus G_T \hilb$ and give some estimates on its regularity and growth in the particle number in Section~\ref{sect:A}. These results are put together in the proof of Theorem~\ref{thm:sa} in Section~\ref{sect:proof}. 

\section{The domain}\label{sect:dom}

Before discussing the domain of $H$, let us recall some properties of extensions of the Laplacian $L_0$,
the restriction of $L$ to vectors vanishing on the collision configurations $\mathscr{C}^n$, cf.~Eq.~\eqref{eq:DL_0}.
%
Spelling out the definition of the operator $G$, Eq.~\eqref{eq:Gdef}, we have
\begin{align*}
G\psi\uppar{n}=&  -  \left(-\frac{1}{2m}\Delta_X - \Delta_Y+n+1\right)^{-1}\frac{1}{\sqrt{n+1}}\sum_{\mu=1}^M\sum_{i=1}^{n+1} \delta_{x_\mu}(y_i)\psi\uppar{n}(X,\hat Y_i),
\end{align*}
where we recall that $X=(x_1,\dots ,x_M)\in \R^{3M}$ stands for the positions of the $x$-particles, $ Y=(y_1,\dots, y_{n+1})\in \R^{3(n+1)}$ for those of the $y$-particles, and $\hat Y_i\in \R^{3n}$ is $Y$ without the $i$-th entry.
We view $G$ both as an operator on $\hilb$ and an operator from $\hilb\uppar{n}$ to $\hilb\uppar{n+1}$, without distinguishing these by the notation.

The functions in the range of $G$ have a specific singular behaviour on the set $\mathscr{C}^{n+1}$ (defined in Eq.~\eqref{eq:Cdef})  of collision configurations.
Since
\begin{equation*}
 (-\Delta_x + \lambda^2)^{-1}\delta= \frac{\ue^{- \lambda |x|}}{4\pi |x|},
\end{equation*}
the function $G\psi\uppar{n}$ will diverge like $|x_\mu-y_i|^{-1}$ on the plane $\{x_\mu=y_i\}$. The singular set $\mathscr{C}^{n+1}$ is just the union of these planes and, due to symmetry, the divergence on each of these is exactly the same. 
To be more explicit, first observe that
the Fourier transform of $\delta_{x_\mu}(y_i)\psi\uppar{n}(X,\hat Y_i)$ equals
\begin{equation}\label{eq:fourier delta}
 \frac{1}{(2\pi)^{3/2}}\widehat \psi\uppar{n}(P+e_\mu k_i, \hat K_i),
\end{equation}
where $e_\mu$ is the inclusion of the $\mu$-th summand in $\R^{3M}=\bigoplus_{\mu=1}^M \R^3$, and $p_\mu$, $x_\mu$; $k_i$, $y_i$ are conjugate Fourier variables.  
Now consider, for simplicity, the case $M=1$. Choosing the centre of mass coordinate $s=\frac{2m}{2m+1}x+ \frac{1}{2m+1}y_{i}$ and the relative coordinate $r=\sqrt{\frac{2m}{2m+1}}(y_{i}-x)$ for the pair $x$, $y_i$ gives
\begin{align}
  &\left(-\frac{1}{2m}\Delta_x - \Delta_Y+n+1\right)^{-1}\delta_{x}(y_{i})\psi\uppar{n}(x,\hat Y_{i})\notag\\
 &=\frac{1}{(2\pi)^{3(n+1)/2+3}}\int \frac{\ue^{\ui px+\ui KY}\widehat \psi\uppar{n}(p+ k_{i}, \hat K_{i})}{\frac1{2m}p^2 + K^2+n+1}\ud p \ud K\label{eq:G fourier}\\
&=\frac{1}{(2\pi)^{3(n+1)/2}}\left(\frac{2m}{2m+1}\right)^{3/2} \frac{1}{4\pi |r|}
\int
\begin{aligned}[t]
 &\ue^{-|r|\sqrt{n+1+\frac{1}{2m+1}\xi^2 + \hat K_{i}^2}}\\
& \times\ue^{\ui \hat K_{i} \hat Y_{i}+\ui \xi s}\widehat \psi\uppar{n}(\xi, \hat K_{i})\ud \xi \ud \hat K_{i}.
 \end{aligned}\label{eq:G exp}
 \end{align}
This function diverges like $\frac{m}{2\pi(2m+1)\abs{x-y_i}}\psi\uppar{n}\left(\frac{2m}{2m+1}x+ \frac{1}{2m+1}y_{i},\hat Y_i\right)$ as $r\to 0$, i.e. $\abs{x-y_i}\to 0$.
Observe also that~\eqref{eq:G exp} is a smooth function of $s$ and $\hat Y_{i}$ for $|r|>0$ because of the  exponential decay of its Fourier transform. 

\subsection{Properties of $G$}

We will now discuss the mapping properties of the operator $G$.
As already noted in Eq.~\eqref{eq:G ker}, $G$ maps elements of $\hilb$ to the kernel of $L_0^*$.
Furthermore, $G$ is bounded from $\hilb$ to $D(L^s)$, $s<\tfrac14$:

\begin{lem}\label{lem:G bound}
 Let $0\leq s < \tfrac14$. There exists a constant $C$ such that for all $\psi\in \hilb$
 \begin{equation*}
  \norm{L^s G\psi}_\hilb \leq C \|N^{s-\frac14} \psi\|_\hilb.
 \end{equation*}
\end{lem}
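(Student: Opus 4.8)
The plan is to estimate $\|L^s G\psi\|_\hilb$ sector by sector and then recombine. Working on a fixed sector $\hilb\uppar{n}$, recall that $G = -L^{-1}\sum_\mu a^*(\delta_{x_\mu})$, so $L^s G\psi\uppar{n} = -L^{s-1}\sum_\mu a^*(\delta_{x_\mu})\psi\uppar{n}$. Since $a^*(\delta_{x_\mu})$ is, by symmetrisation, $\sqrt{n+1}$ times the insertion of a single delta at $y_{n+1}=x_\mu$ (composed with symmetrisation, which has norm $\le 1$), it suffices to bound, for a single term, the operator norm of $L^{s-1}$ applied to the distribution $\delta_{x_\mu}(y_{n+1})\psi\uppar{n}(X,\hat Y)$. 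I would pass to Fourier space using \eqref{eq:fourier delta}: the Fourier transform of this distribution is $(2\pi)^{-3/2}\widehat\psi\uppar{n}(P+e_\mu k_{n+1},\hat K)$, and $L^{s-1}$ acts as multiplication by $\bigl(\tfrac1{2m}P^2 + K^2 + n+1\bigr)^{s-1}$.

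Then the square of the norm of one such term is, up to constants,
\begin{equation*}
\int \Bigl(\tfrac1{2m}P^2 + K^2 + n+1\Bigr)^{2s-2}\,\bigl|\widehat\psi\uppar{n}(P+e_\mu k_{n+1},\hat K)\bigr|^2\,\ud P\,\ud K.
\end{equation*}
I would change variables $p_\mu \mapsto p_\mu - k_{n+1}$ (equivalently shift so that the argument of $\widehat\psi\uppar{n}$ becomes the free variable $P'$), leaving an integral over $k_{n+1}$ of the kernel $\bigl(\tfrac1{2m}(P'-e_\mu k_{n+1})^2 + \hat K^2 + k_{n+1}^2 + n+1\bigr)^{2s-2}$ against $|\widehat\psi\uppar{n}(P',\hat K)|^2$. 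The inner $k_{n+1}$-integral is of the form $\int_{\R^3}(a + b k_{n+1}^2 + \text{linear})^{2s-2}\,\ud k_{n+1}$ with $a \ge n+1 \ge 1$; completing the square in $k_{n+1}$ and scaling shows this is finite and bounded by $C\,a^{2s-2+3/2} = C\,a^{2s-1/2}$ precisely when $2(2-2s) > 3$, i.e. $s < 1/4$, which is exactly the stated range. Since $a = \tfrac1{2m}P'^2 + \hat K^2 + (\text{const})\cdot k_{n+1}^2\text{-independent part} + n+1 \ge n+1$, we get the bound $C(n+1)^{2s-1/2}\|\psi\uppar{n}\|^2$ for each of the $M$ terms (and the cross terms between different $\mu$ are handled by Cauchy–Schwarz, or absorbed into a factor $M^2$). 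Summing over $n$ and recognising $(n+1)^{2s-1/2} = (n+1)^{2(s-1/4)}$ gives $\|L^s G\psi\|_\hilb^2 \le C\sum_n (n+1)^{2(s-1/4)}\|\psi\uppar{n}\|^2 = C\|N^{s-1/4}\psi\|_\hilb^2$ (with $N+1$ versus $N$ a harmless adjustment on the vacuum sector).

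The main obstacle is the $k_{n+1}$-integral and making the $n$-dependence explicit and uniform: one must check that completing the square in the denominator really does leave a residual lower bound proportional to $n+1$ (so that the convergence-generating power of the ``mass'' $a$ can be traded for the factor $(n+1)^{2s-1/2}$ with a constant independent of $P'$, $\hat K$ and $n$), and that the linear-in-$k_{n+1}$ cross term $-\tfrac1m P'\cdot e_\mu k_{n+1}$ genuinely disappears upon shifting $k_{n+1}$ without worsening the constant. A clean way is to bound the denominator from below by $\tfrac12\bigl(\tfrac1{2m}(P'-e_\mu k_{n+1})^2 + \hat K^2 + n+1\bigr) + \tfrac12 k_{n+1}^2$ type estimates, or simply to note $\tfrac1{2m}(P'-e_\mu k_{n+1})^2 + k_{n+1}^2 \ge c(P'^2 + k_{n+1}^2)$ for a dimensional constant $c = c(m)>0$, after which the integral $\int (c k_{n+1}^2 + cP'^2/? + \hat K^2 + n+1)^{2s-2}\,\ud k_{n+1} \le C(cP'^2/?+\hat K^2+n+1)^{2s-1/2}$ follows by a one-line scaling, and the prefactor is then comparable to $\bigl(\tfrac1{2m}P'^2+\hat K^2+n+1\bigr)^{2s-1/2} \le (n+1)^{2s-1/2}\cdot\bigl(\tfrac1{2m}P'^2+\hat K^2+1\bigr)^{2s-1/2}\cdot(\cdots)$; since $2s-1/2 < 0$ the $P',\hat K$-dependent factor is bounded by $1$, leaving exactly $(n+1)^{2s-1/2}$. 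This is the only place where the constraint $s<1/4$ is used, and it should be stated as such.
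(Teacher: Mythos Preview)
There is a genuine gap in your argument concerning the $n$-dependence. You correctly write $a^*(\delta_{x_\mu})\psi\uppar{n}=\sqrt{n+1}\,\mathrm{Sym}\bigl[\delta_{x_\mu}(y_{n+1})\psi\uppar{n}(X,\hat Y_{n+1})\bigr]$ and then bound the single unsymmetrised term. Your computation for that single term is correct and gives
\[
\bigl\|L^{s-1}\delta_{x_\mu}(y_{n+1})\psi\uppar{n}\bigr\|^2_{L^2}\le C(n+1)^{2s-1/2}\|\psi\uppar{n}\|^2.
\]
But you then drop the prefactor $\sqrt{n+1}$ when assembling the final estimate. If you reinstate it, the symmetrisation bound $\|\mathrm{Sym}[\phi]\|\le\|\phi\|$ only yields
\[
\|L^{s}G\psi\uppar{n}\|^2\le (n+1)\cdot C(n+1)^{2s-1/2}\|\psi\uppar{n}\|^2=C(n+1)^{2s+1/2}\|\psi\uppar{n}\|^2,
\]
i.e.\ $\|L^sG\psi\|\le C\|N^{s+1/4}\psi\|$, which is weaker than the claimed $\|N^{s-1/4}\psi\|$ by a full power of $N^{1/2}$. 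The symmetrisation projection has norm exactly $1$, so this cannot be improved generically; the issue is that the $n+1$ translates $\phi_i=L^{s-1}\delta_{x_\mu}(y_i)\psi\uppar{n}(X,\hat Y_i)$ are far from orthogonal, and bounding $\bigl\|\tfrac{1}{n+1}\sum_i\phi_i\bigr\|$ by $\|\phi_{n+1}\|$ throws away exactly the cancellation you need.

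What is missing is a direct estimate of the off-diagonal terms $\langle L^{s-1}\phi_i,L^{s-1}\phi_j\rangle$ for $i\ne j$. The paper does this by expanding the full double sum and applying the weighted Cauchy--Schwarz inequality
\[
\mathrm{Re}\bigl(\overline{\hat\psi}(p+k_i,\hat K_i)\hat\psi(p+k_j,\hat K_j)\bigr)\le \tfrac12\Bigl(\tfrac{k_j^2}{k_i^2}|\hat\psi(p+k_i,\hat K_i)|^2+\tfrac{k_i^2}{k_j^2}|\hat\psi(p+k_j,\hat K_j)|^2\Bigr),
\]
which, after the change of variables $p\mapsto p-k_{n+1}$, converts the double sum into a single sum $\sum_{j=1}^{n+1}\int k_j^2\,L(p,K)^{2s-2}k_{n+1}^{-2}\,\ud k_{n+1}$. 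Splitting off $j=n+1$ and scaling then shows this sum is bounded by $C(n+1)^{2s-1/2}$ uniformly in $(p,\hat K_{n+1})$, recovering the correct power. This sharp $n$-decay is not cosmetic: it is what makes $\sum_j G^j$ converge in the proof that $1-G$ is invertible, so your weaker bound would not suffice downstream.
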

\begin{proof}
 Since we will not aim for optimal estimates w.r.t.~the number $M$ of $x$-particles, we can just treat the case $M=1$ and then bound the sum over the $x$-particles by the Cauchy-Schwarz inequality.
 
 For $M=1$ and arbitrary $n\in \N$, we start from the expression~\eqref{eq:G fourier} and use the simple inequality
 \begin{align*}
 &\mathrm{Re}\left( \overline{\hat\psi\uppar{n}}(p+k_i,\hat K_i)\hat\psi\uppar{n}(p+k_j,\hat K_j)\right)\\
 &\leq \frac12 \left(\frac{|\hat\psi\uppar{n}(p+k_i,\hat K_i)|^2 k_j^2 }{k_i^2} + \frac{|\hat\psi\uppar{n}(p+k_j,\hat K_j)|^2 k_i^2 }{k_j^2}\right).
 \end{align*}
This implies (writing $L(p,K)$ for the Fourier-representation of $L$)
 \begin{align}
  &\norm{L^s G\psi\uppar{n}}^2_{\hilb\uppar{n+1}}\notag
  \leq \frac{1}{(2\pi)^3 (n+1)} \sum_{i,j=1}^{n+1} \int \frac{|\hat\psi\uppar{n}(p+k_i,\hat K_i)|^2 k_j^2 }{L(p,K)^{2-2s}k_i^2} \ud p \ud K\notag\\
  &\leq \frac{\norm{\psi\uppar{n}}^2_{\hilb\uppar{n}}}{(2\pi)^3} \sup_{(p,\hat K_{n+1})\in \R^{3(n+1)}}\sum_{j=1}^{n+1} \int \frac{k_j^2 \ud k_{n+1}}{\left(\frac1{2m}(p-k_{n+1})^2 + K^2 + n+1\right)^{2-2s} k_{n+1}^2} \notag.
  %
 \end{align}
 By splitting the summand with $j=n+1$ from the rest and changing variables $k_{n+1}\mapsto k_{n+1}/\sqrt{n+1}$, respectively $k_{n+1}\mapsto k_{n+1}/\sqrt{n+1+\hat K_{n+1}^2}$, we can further bound this by
 \begin{align}
 &\norm{L^s G\psi\uppar{n}}^2_{\hilb\uppar{n+1}} \label{eq:G bound}\\
    &\leq \frac{\norm{\psi\uppar{n}}^2_{\hilb\uppar{n}}}{(2\pi)^3}
  \left( \int \frac{ \ud k_{n+1}}{(k_{n+1}^2 + n+1)^{2-2s}}  + \sup_{\hat K_{n+1}} \hat K_{n+1}^2 \int \frac{\ud k_{n+1}}{(K^2 +n+1)^{2-2s}k_{n+1}^2} \right)\notag\\ 
  &\leq C^2 (n+1)^{2s-\frac12}\norm{\psi\uppar{n}}^2_{\hilb\uppar{n}}.
  \notag
 \end{align}
This proves the claim.
\end{proof}

As an immediate Corollary, we have that $\sum_{\mu=1}^M a(\delta_{x_\mu})$ maps $D(L)$ to $\hilb$, continuously.
\begin{cor}\label{cor:aL^-1}
 The Operator $-G^*=\sum_{\mu=1}^M a(\delta_{x_\mu})L^{-1}$ is bounded on $\hilb$.
\end{cor}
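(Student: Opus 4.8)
The statement follows directly from Lemma~\ref{lem:G bound}, once it is reformulated sector by sector. Specializing the estimate~\eqref{eq:G bound} to $s=0$ gives
\[
 \norm{G\psi\uppar{n}}_{\hilb\uppar{n+1}}^2 \leq C^2 (n+1)^{-1/2}\norm{\psi\uppar{n}}_{\hilb\uppar{n}}^2 \leq C^2 \norm{\psi\uppar{n}}_{\hilb\uppar{n}}^2 ,
\]
for every $n\in\N$. Since $G$ maps the sector $\hilb\uppar{n}$ into $\hilb\uppar{n+1}$ and these sectors are mutually orthogonal, summing over $n$ shows that $G$ extends to a bounded operator on $\hilb$ with $\norm{G}\leq C$. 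Consequently its adjoint $G^*$ is bounded as well, with $\norm{G^*}=\norm{G}\leq C$.

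It then remains only to identify $-G^*$ with $\sum_{\mu=1}^M a(\delta_{x_\mu})L^{-1}$. For any $\psi\in\hilb$ the vector $L^{-1}\psi$ belongs to $D(L)$, so each $a(\delta_{x_\mu})L^{-1}\psi$ is a well-defined element of $\hilb$ by~\eqref{eq:a}, and the right-hand side is an operator defined on all of $\hilb$. To see that it coincides with $-G^*$, fix $\psi\in\hilb$ and pair against an arbitrary $\phi\in\hilb$. Using the definition~\eqref{eq:Gdef} of $G$ and the self-adjointness of $L^{-1}$,
\[
 \langle \phi, -G^*\psi\rangle_\hilb = -\langle G\phi,\psi\rangle_\hilb = \sum_{\mu=1}^M \big\langle a^*(\delta_{x_\mu})\phi,\, L^{-1}\psi\big\rangle ,
\]
where the last pairing is the duality between the distribution $a^*(\delta_{x_\mu})\phi$ and the function $L^{-1}\psi$, whose components lie in $H^2$ and hence, by the Sobolev embedding $H^2(\R^3)\hookrightarrow C(\R^3)$, have well-defined restrictions to the collision planes $\{x_\mu=y_i\}$. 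This duality is precisely such that $\big\langle a^*(\delta_{x_\mu})\phi,\, L^{-1}\psi\big\rangle=\big\langle \phi,\, a(\delta_{x_\mu})L^{-1}\psi\big\rangle_\hilb$, with $a(\delta_{x_\mu})$ acting as in~\eqref{eq:a}. As $\phi$ was arbitrary, $-G^*\psi=\sum_{\mu=1}^M a(\delta_{x_\mu})L^{-1}\psi$ for every $\psi\in\hilb$, and the boundedness of $-G^*$ transfers to this operator.

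The only point that needs care is this last identification, that is, making rigorous the formal adjoint relation between the non-densely-defined, distribution-valued creation operator $a^*(\delta_{x_\mu})$ and the annihilation operator $a(\delta_{x_\mu})$ restricted to $D(L)$; I expect this to be the main (and only real) obstacle, and it is resolved by the three-dimensional Sobolev embedding above, which gives meaning to the traces of $H^2$-functions on the collision planes and renders all the pairings legitimate. Everything else is immediate bookkeeping on top of Lemma~\ref{lem:G bound}.
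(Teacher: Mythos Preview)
Your proof is correct and follows the same route the paper intends: the corollary is stated as an immediate consequence of Lemma~\ref{lem:G bound}, and you have simply spelled out the two obvious steps (boundedness of $G$ at $s=0$, hence of $G^*$, plus the formal adjoint identification). One small imprecision: the justification for restricting $L^{-1}\psi\in H^2(\R^{3(M+n+1)})$ to the codimension-three plane $\{x_\mu=y_{n+1}\}$ is not the embedding $H^2(\R^3)\hookrightarrow C(\R^3)$ but the Sobolev trace theorem (trace from $H^2$ onto a codimension-$3$ plane lands in $H^{1/2}$, since $2>3/2$); the paper already takes this for granted around Eq.~\eqref{eq:a}, so the ``main obstacle'' you flag is in fact already absorbed into the framework.
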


\begin{rem}[The domain of $L_0^*$]\label{rem:L_0^*}
The operator $G$ can be used to parametrise the domain of $L_0^*$.
 From the proof of Lemma~\ref{lem:G bound} one easily infers that $G:\hilb \to \ker L_0^*$ can be extended to $D(L^{-1/4})$. 
 We then have a characterisation of $D(L_0^*)$ by
 \begin{equation*}
  D(L_0^*) = D(L)\oplus G D(L^{-1/4}).
 \end{equation*}
 Since we only work on the subspace of $D(L_0^*)$ where $G$ acts on $\hilb$ we do not need this parametrisation and we will not give a detailed proof. The central argument to obtain this parametrisation is to show that the norm $\norm{G\psi}_{\hilb\uppar{n+1}}$ is equivalent to the norm on $H^{-1/2}(\R^{3(M+n)})$ by a generalisation of~\cite[Lem B.2]{Co_etal15} to an arbitrary number of particles. This implies the parametrisation above by~\cite[Prop.2.9]{BeMi14}. 
\end{rem}

As an operator on $\hilb$, $G$ has another important property:

\begin{prop}\label{prop:G}
 The operator $1-G$ on $\hilb$ has a bounded inverse, and both $1-G$ and $(1-G)^{-1}$ map the domain $D(N)$ of the number operator to itself.
\end{prop}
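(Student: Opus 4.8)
The plan is to show that $G$ is a bounded operator on $\hilb$ with norm strictly less than one, from which the existence of a bounded inverse for $1-G$ follows immediately by the Neumann series. The key observation is that Lemma~\ref{lem:G bound} with $s=0$ gives $\norm{G\psi}_\hilb \leq C\|N^{-1/4}\psi\|_\hilb$. Since $N\geq 1$ on the orthogonal complement of the vacuum sector (and $G$ annihilates nothing problematic there — one should check the action on $\hilb\uppar{0}$ separately, where $N^{-1/4}$ is interpreted appropriately since $G$ raises the particle number), the factor $\|N^{-1/4}\psi\|_\hilb \leq \norm{\psi}_\hilb$, but this only gives boundedness, not a norm below one. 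To get the strict bound I would revisit the explicit estimate~\eqref{eq:G bound}: for $s=0$ the bound reads $\norm{G\psi\uppar{n}}^2 \leq C^2 (n+1)^{-1/2}\norm{\psi\uppar{n}}^2$, and one needs the constant $C$ (coming from the two elementary integrals $\int (k^2+n+1)^{-2}\ud k$ and $\sup_{\hat K}\hat K^2\int (K^2+n+1)^{-2}k^{-2}\ud k$, evaluated after the rescalings) to be small enough, uniformly in $n$, that $C^2(n+1)^{-1/2} < 1$ for all $n\geq 0$. If the bare constant is not below one, the standard remedy is to exploit that $G$ gains decay in $n$: one decomposes $\hilb = \bigoplus_n \hilb\uppar{n}$ and notes $G$ maps $\hilb\uppar{n}\to\hilb\uppar{n+1}$, so its operator norm is $\sup_n \|G\restriction_{\hilb\uppar{n}}\|$, and for large $n$ this is small by the $(n+1)^{-1/4}$ factor; for the finitely many small $n$ one may need a sharper estimate than~\eqref{eq:G bound}, or one notes that in the application only $1-G$ restricted to suitable sectors matters. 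Alternatively — and this is likely the cleaner route the author intends — one proves $1-G$ is invertible not by a small-norm argument but by showing $G$ is (a limit of) quasi-nilpotent-type operators: since $G$ strictly raises the particle number, on the subspace $D(N)$ with $N\leq n_0$ the operator $G$ is genuinely nilpotent, $G^{n_0+1}=0$, so $1-G$ is invertible there with inverse $\sum_{k=0}^{n_0} G^k$; combined with the uniform bound $\|G\restriction_{\hilb\uppar{n}}\| \leq C(n+1)^{-1/4} \to 0$ one patches these together to get a globally bounded inverse on all of $\hilb$.

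For the second assertion, that $1-G$ and $(1-G)^{-1}$ preserve $D(N)$, the natural approach is to track how $G$ interacts with $N$. Since $G$ raises the particle number by exactly one, one has, on nice vectors, the intertwining relation $N G = G(N+1)$, equivalently $[N,G] = G$ as operators on $D(N)\cap(\text{range considerations})$. Therefore $N(1-G) = (1-G)N - G = (1-G)(N-\tfrac{G}{1-G})$ formally; more carefully, from $NG\psi\uppar{n} = (n+1)G\psi\uppar{n}$ and $\|G\psi\uppar{n}\|\leq C(n+1)^{-1/4}\|\psi\uppar{n}\|$ one estimates
\begin{equation*}
 \|N G\psi\|_\hilb^2 = \sum_n (n+1)^2\|G\psi\uppar{n}\|^2 \leq C^2\sum_n (n+1)^{3/2}\|\psi\uppar{n}\|^2 \lesssim \|N\psi\|_\hilb^2,
\end{equation*}
so $G$ maps $D(N)$ boundedly into $D(N)$; hence so does $1-G$. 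For the inverse, since $(1-G)^{-1} = \sum_{k\geq 0} G^k$ converges in operator norm on $\hilb$ (by the first part), and each $G^k$ maps $D(N)$ to $D(N)$ with $\|NG^k\psi\| \lesssim \|G^{k-1}\psi\|_{D(N)} \lesssim \cdots$, one checks the series $\sum_k NG^k\psi$ converges in $\hilb$ for $\psi\in D(N)$ — using that $G$ contracts by a factor $\leq C(n+1)^{-1/4}$ at each step while $N$ only grows linearly, so the geometric-type decay wins — whence $(1-G)^{-1}\psi\in D(N)$ and $N(1-G)^{-1}$ is bounded on $D(N)$.

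The main obstacle is getting the operator norm of $G$ (or the relevant combination controlling $1-G$) strictly below one, or equivalently organizing the nilpotency-plus-decay patching cleanly: the estimate~\eqref{eq:G bound} is tailored to the $L^s$-smoothing with $s>0$ and for $s=0$ the constant need not be small, so some care — either a sharper bound for the low particle-number sectors, or an argument that only uses nilpotency on bounded-$N$ subspaces combined with the asymptotic smallness — is required. The $D(N)$-preservation, by contrast, should be essentially routine once the norm estimates of Lemma~\ref{lem:G bound} are in hand, since $G$'s shift of the particle number by one makes the commutator with $N$ trivial to compute.
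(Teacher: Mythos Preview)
Your instinct that the small-norm argument is a dead end is correct: the constant $C$ from Lemma~\ref{lem:G bound} is not shown to be less than one, and there is no reason it should be. Your alternative route via nilpotency-plus-decay is the right idea and would work, but you have not identified the mechanism that makes the ``patching'' clean, and without it the argument remains incomplete.

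The paper's proof avoids any patching by extracting a \emph{factorial} decay from iteration of the sector-wise bound. Since $G:\hilb\uppar{n-1}\to\hilb\uppar{n}$ with $\|G\psi\uppar{n-1}\|\leq C n^{-1/4}\|\psi\uppar{n-1}\|$, applying $G$ a total of $j$ times to $\psi\uppar{n-j}$ picks up the product $\prod_{i=0}^{j-1}(n-i)^{-1/4}\leq (j!)^{-1/4}$, so
\[
\|G^{j}\psi\uppar{n-j}\|_{\hilb\uppar{n}}\leq \frac{C^{j}}{(j!)^{1/4}}\,\|\psi\uppar{n-j}\|_{\hilb\uppar{n-j}}.
\]
The Neumann series $(1-G)^{-1}=\sum_{j\geq 0}G^{j}$ is then finite on each sector (this is your nilpotency observation) and bounded on $\hilb$ by $\sum_{j}C^{j}(j!)^{-1/4}<\infty$, \emph{regardless of the size of $C$}. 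This single estimate replaces both your small-norm attempt and your finite/large-$n$ patching; it is the key point you are missing.

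For the $D(N)$-invariance your argument is essentially the paper's. One small correction: to show $(1-G)^{-1}$ preserves $D(N)$ you cannot simply sum $\|NG^{k}\psi\|$ using a single application of the $D(N)$-bound for $G$, because that loses a factor of $n$ at every step. The paper handles this by absorbing the factor $n$ into the first four applications of $G$ (using $n\big((n-3)(n-2)(n-1)n\big)^{-1/4}\leq\mathrm{const}$) and then summing the remaining tail with the same $(j!)^{-1/4}$ bound as above; equivalently, it suffices to show $\sum_{j\geq 4}G^{j}$ maps $\hilb$ into $D(N)$.
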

\begin{proof}
 We claim that 
 \begin{equation*}
  (1-G)^{-1}=\sum_{j=0}^\infty G^j
 \end{equation*}
is a bounded operator on $\hilb$. 
To prove this, we first note that, since $G$ maps $\hilb\uppar{j}$ to $\hilb\uppar{j+1}$, the sum on $\hilb\uppar{n}$ is actually finite,
\begin{equation*}
  \left((1-G)^{-1}\psi\right)\uppar{n}=\sum_{j=0}^n G^{j} \psi\uppar{n-j}.
 \end{equation*}
 The operator is thus well defined and one easily checks that it is an inverse to $1-G$.
 To show boundedness, we use Lemma~\ref{lem:G bound} to obtain

\begin{align}\label{eq:G^k est}
\norm{ G^{j} \psi\uppar{n-j}}_{\hilb\uppar{n}} \leq C^j \left({\prod_{i=0}^{j-1} (n-i)}\right)^{-\frac14}\norm{\psi\uppar{n-j}} 
\leq \frac{C^j}{(j!)^{\frac14}}  \norm{\psi\uppar{n-j}}_{\hilb\uppar{n-j}}.
\end{align}
This gives a bound on $(1-G)^{-1}$ by
\begin{equation*}
 \norm{(1-G)^{-1}\psi}_\hilb
 \leq \sum_{j=0}^\infty \norm{ G^{j} \psi}
 \leq \norm{\psi}_\hilb \left(\sum_{j=0}^\infty \frac{C^j}{(j!)^{\frac14}} \right).
\end{equation*}

The operator $1-G$ maps $D(N)$ to itself if $G$ does. From the estimate of Eq.~\eqref{eq:G bound}, we see that
\begin{equation*}
 \norm{ n (G\psi)\uppar{n}} \leq C \norm{ n^{\frac34} \psi\uppar{n-1}} \leq 2 C \norm{ (n-1) \psi\uppar{n-1}},
\end{equation*}
which proves that $G$ leaves $D(N)$ invariant.

To prove the same for $(1-G)^{-1}$, observe that $G^j$ maps $D(N)$ to itself for any finite $j$, because $G$ does. It is thus sufficient to prove the claim for $\sum_{j=j_0}^\infty G^j$, for some $j_0\in \N$. From Lemma~\ref{lem:G bound} we obtain for $j\geq 4$, as in Eq.~\eqref{eq:G^k est}, 
\begin{align*}
 \norm{ n G^j \psi\uppar{n-j}}_{\hilb\uppar{n}} 
&\leq \frac{n}{((n-3)(n-2)(n-1)n)^{\frac14}}\frac{C^j}{((j-4)!)^{\frac14}}  \norm{\psi\uppar{n-j}}_{\hilb\uppar{n-j}}\\
&\leq \tilde C  \frac{C^j}{((j-4)!)^{\frac14}}  \norm{\psi\uppar{n-j}}_{\hilb\uppar{n-j}}.
\end{align*}
This shows that $\sum_{j=4}^\infty G^j$ maps $\hilb$ to $D(N)$ and completes the proof.
\end{proof}

\subsection{Singular boundary values}\label{sect:bv}

 The asymptotic behaviour on $\mathscr{C}^{n+1}$ can be used to define local boundary operators on $D(L_0^*)$. We will define these only on the range of $G$ (acting on $\hilb$).
These boundary values will be defined by certain limits as $x_\mu-y_{n+1}\to0$. These limits exist only almost everywhere or in the sense of distributions. In the following, an expression such as $\lim_{\abs{x_\mu-y_{n+1}}\to 0} \Phi(X,Y,y_{n+1})$ should thus be read as the limit for $\eps\to 0$ of the functions $\Phi_\eps(X,Y)$ given by $\Phi(X,Y,y_{n+1})$ with $\eps=y_{n+1}-x_\mu$ fixed.
The boundary values we define can be extended to $D(L_0^*)$ (cf.~Remark~\ref{rem:L_0^*}) in the sense of distributions on the boundary without the singular set $\mathscr{C}^{n}$ (see~\cite[Lem.6]{IBCpaper} for the case of a fixed source), but we will not need this here.
Set, for suitable $\phi\in \hilb\uppar{n+1}$,
\begin{align}\label{eq:Bdef}
 \left(B\phi\right)(X,Y):
 =-\frac{2\pi (2m+1)}{m}\sqrt {n+1}\sum_{\mu=1}^M \lim_{|x_\mu-y_{n+1}|\to 0} |x_\mu-y_{n+1}|\phi(X,Y,y_{n+1}).
\end{align}
We then have:
\begin{lem}\label{lem:B}
Let $n\in \N$ and $\psi\in \hilb\uppar{n}$.
The element $G\psi\in \hilb\uppar{n+1}$ has a representative such that the limit in Equation~\eqref{eq:Bdef} exists for almost every $(X,Y)$ and the equality
 \begin{align}\notag
 \left(BG\psi\right)(X,Y) = M \psi(X, Y)\notag
 \end{align}
 holds.
\end{lem}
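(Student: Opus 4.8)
The plan is to compute the limit directly from the explicit Fourier representation of $G\psi$, specialising to $M=1$ (the general case follows by summing over $\mu$, since by symmetry the divergence of $G\psi$ on each hyperplane $\{x_\mu=y_i\}$ is the same, and for fixed $\mu$ the divergence of $G\psi$ on $\{x_\mu=y_{n+1}\}$ picks up a factor $n+1$ from the $n+1$ equal contributions indexed by $i$, which is exactly cancelled by the $\tfrac{1}{\sqrt{n+1}}$ in the definition of $G$ together with the $\sqrt{n+1}$ in~\eqref{eq:Bdef}). Concretely, I would start from Equation~\eqref{eq:G exp}, which already exhibits $G\psi\uppar{n}$, in the centre-of-mass and relative coordinates $s=\tfrac{2m}{2m+1}x+\tfrac{1}{2m+1}y_{n+1}$ and $r=\sqrt{\tfrac{2m}{2m+1}}(y_{n+1}-x)$ for the pair $(x,y_{n+1})$, as
\begin{equation*}
 \frac{1}{(2\pi)^{3n/2}}\left(\frac{2m}{2m+1}\right)^{3/2}\frac{1}{4\pi|r|}\int \ue^{-|r|\sqrt{n+1+\frac{1}{2m+1}\xi^2+\hat K_{n+1}^2}}\,\ue^{\ui\hat K_{n+1}\hat Y_{n+1}+\ui\xi s}\,\widehat{\psi}\uppar{n}(\xi,\hat K_{n+1})\,\ud\xi\,\ud\hat K_{n+1}.
\end{equation*}

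The key step is then to multiply by $|r|$ and let $|r|\to 0$: the factor $|r|$ cancels the $|r|^{-1}$ singularity, and the exponential $\ue^{-|r|\sqrt{\cdots}}\to 1$. Since $\widehat{\psi}\uppar{n}\in L^2$, the remaining integral over $(\xi,\hat K_{n+1})$ need not converge absolutely, so the limit is not pointwise for an arbitrary representative; this is why the statement only claims existence of the limit for a.e.\ $(X,Y)$ and only for a suitable representative. I would handle this by fixing the representative of $G\psi\uppar{n}$ to be the one given by this Fourier integral for $|r|>0$ — which, as noted after~\eqref{eq:G exp}, is genuinely a smooth function of $(s,\hat Y_{n+1})$ for each $|r|>0$ — and then arguing that, as $|r|\to0$, $|r|\,G\psi\uppar{n}$ converges in, say, $L^2_{\mathrm{loc}}$ (or in $\hilb\uppar{n}$ after restriction) to the function whose Fourier transform in $(s,\hat Y_{n+1})$ is $\tfrac{1}{(2\pi)^{3n/2}}\bigl(\tfrac{2m}{2m+1}\bigr)^{3/2}\tfrac{1}{4\pi}\widehat{\psi}\uppar{n}(\xi,\hat K_{n+1})$; a subsequence then converges a.e. The inverse Fourier transform of the right-hand side is $\tfrac{1}{4\pi}\bigl(\tfrac{2m}{2m+1}\bigr)^{3/2}\psi\uppar{n}(s,\hat Y_{n+1})$, evaluated at $s=x=y_{n+1}$ on the collision set, i.e.\ it equals $\tfrac{1}{4\pi}\bigl(\tfrac{2m}{2m+1}\bigr)^{3/2}\psi\uppar{n}(x,\hat Y_{n+1})$ up to the change-of-variables Jacobian $\bigl(\tfrac{2m}{2m+1}\bigr)^{-3/2}$ relating $\ud s$ to $\ud x$; collecting the constants gives $\tfrac{m}{2\pi(2m+1)}\tfrac{1}{|x-y_{n+1}|}\psi\uppar{n}(x,\hat Y_{n+1})$ for the leading singularity of $G\psi\uppar{n}$ (as already stated right after~\eqref{eq:G exp}). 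Multiplying by the normalisation in~\eqref{eq:Bdef}, namely $-\tfrac{2\pi(2m+1)}{m}\sqrt{n+1}$ with $\sqrt{n+1}$ absorbing the symmetrisation, and accounting for the sign convention in~\eqref{eq:Gdef} (the minus sign in $G=-\sum L^{-1}a^*(\delta)$ cancels the overall minus in~\eqref{eq:Bdef}), yields $(BG\psi)(X,Y)=\psi(X,Y)$ for $M=1$, hence $M\psi(X,Y)$ in general.

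The main obstacle is the justification of the interchange of limit and integral — i.e.\ turning the formal pointwise computation into a statement valid for a.e.\ $(X,Y)$ for a suitable representative. The clean way is to work in the Fourier/$L^2$ picture: show that $\||r|\,G\psi\uppar{n}(\cdot) - \Psi\|_{\hilb\uppar{n}}\to 0$ as $|r|\to 0$, where $\Psi$ is the claimed limit, by dominated convergence in Fourier space (the integrand is bounded by $|\widehat{\psi}\uppar{n}(\xi,\hat K_{n+1})|^2$ uniformly in $r$ and converges pointwise), and then extract an a.e.-convergent subsequence; one must also check that the "fixed-$\eps$" functions $\Phi_\eps$ in the sense described in Section~\ref{sect:bv} are exactly these representatives. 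A secondary but routine point is bookkeeping of the constants and of the combinatorial factor $n+1$ versus the $\sqrt{n+1}$ normalisations, which I would verify by tracking the symmetry reduction carefully (the $i$-th term in $\sum_{i=1}^{n+1}\delta_{x}(y_i)\psi\uppar{n}(X,\hat Y_i)$ contributes the same singular profile on $\{x=y_{n+1}\}$ only through the $i=n+1$ term; the other terms are bounded near that hyperplane, since $\psi\uppar{n}(X,\hat Y_i)$ for $i\le n$ involves $y_{n+1}$ only as a smooth argument — so only one term survives, but the bosonic symmetry of $G\psi$ then reinstates the factor, consistent with~\eqref{eq:a}).
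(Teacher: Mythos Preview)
Your approach mirrors the paper's: use the representation~\eqref{eq:G exp}, note that the terms with $(\nu,i)\neq(\mu,n+1)$ have a smooth representative away from $\{x_\nu=y_i\}$ and hence contribute zero after multiplication by $|x_\mu-y_{n+1}|$, and recognise the surviving diagonal term as $\ue^{-|r|\sqrt{-\Delta_{s,\hat Y}+\,\cdots}}\psi\uppar{n}$, which converges to $\psi\uppar{n}$ in $L^2$ by strong continuity of the semigroup (your dominated-convergence argument in Fourier variables is exactly this).

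The one genuine gap is in the passage from $L^2$ convergence to the pointwise statement. You propose to extract an a.e.-convergent subsequence, but the lemma asserts that the \emph{full} limit $\lim_{|x_\mu-y_{n+1}|\to 0}$ exists for almost every $(X,Y)$; this is a limit over a continuous three-dimensional parameter, and subsequence extraction from $L^2$ convergence does not deliver it. The paper closes this by appealing to the semigroup $\ue^{-t\sqrt{-\Delta+\lambda^2}}$: for such Poisson-type semigroups the convolution kernel is a radially decreasing approximate identity, the associated maximal operator is bounded on $L^2$, and hence $\ue^{-t\sqrt{-\Delta+\lambda^2}}f\to f$ pointwise a.e.\ as $t\to 0^+$, not merely along subsequences. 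You need to invoke this (or an equivalent maximal-function argument) rather than subsequence extraction. A minor side remark: your reason for discarding the $i\le n$ terms is slightly misstated --- the relevant fact is not how $\psi\uppar{n}(X,\hat Y_i)$ depends on $y_{n+1}$, but that $L^{-1}\delta_{x}(y_i)\psi\uppar{n}(X,\hat Y_i)$ is singular only on $\{x=y_i\}$ and therefore smooth near $\{x=y_{n+1}\}$ away from the lower-dimensional intersection, which is how the paper phrases it.
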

\begin{proof}
 The fact that  $G\psi\in \hilb\uppar{n+1}$ was proved in Lemma~\ref{lem:G bound}. It follows from the analogue of Eq.~\eqref{eq:G exp} for arbitrary $M$ that $L^{-1}\delta_{x_\nu}(y_i)\psi(X,\hat Y_i)$ has a smooth representative outside of the plane $\{x_\nu=y_i\}$. After multiplying by $|x_\mu-y_{n+1}|$, the limit as $|x_\mu-y_{n+1}|\to 0$ is thus zero a.e., except if $\mu=\nu$ and $i=n+1$. In the latter case, the limit equals $\psi(X, Y)$, in $L^2$ and thus almost everywhere, by strong continuity of the semi-group $\ue^{-t\sqrt{-\Delta + \lambda^2}}$. There is one such contribution for every $\mu$, giving the prefactor $M$ in the statement.
\end{proof}

The limit~\eqref{eq:Bdef} defining $B$ vanishes on $D(L)$, because $\psi\uppar{n}\in H^2(\R^{3(M+n)})$ has a well-defined evaluation, the Sobolev trace, on the set $\{x_\mu=y_{n+1}\}$. We can thus define an operator
\begin{equation*}
 B:D(L)\oplus G\hilb \to \hilb, \qquad \psi + G\phi\mapsto M\phi,
\end{equation*}
which can be calculated using the local expression~\eqref{eq:Bdef} outside of the singular sets.

The annihilation operator
\begin{equation*}
 a(\delta_{x_\mu})\psi\uppar{n+1}
 = \sqrt{n+1} \psi\uppar{n}(X, Y)\vert_{y_{n+1}=x_\mu},
\end{equation*}
is not defined on the singular functions in the range of $G$. It can, however, be extended by considering only the finite part at $y_{n+1}=x_\mu$, i.e. subtracting the explicit divergence before evaluation:
\begin{align}\label{eq:Adef}
&\left(A\phi\right)(X,Y) \\ 
&:=\lim_{r\to 0}\sum_{\mu=1}^M
\frac{1}{4\pi}\int\limits_{S^2}\left(\sqrt{n+1} \phi(X,Y,x_\mu+r\omega)+ \frac{m}{2\pi(2m+1)} \frac{1}{r} \frac{(B\phi)(X,Y)}{M}\right) \ud \omega.\notag
\end{align}
At least formally, $T\psi=AG\psi$ then defines an operator which preserves the number of particles. 
This operator also appears in the theory of point interactions, where it is known as the Skornyakov--Ter-Martyrosyan operator.
For fixed $n\in \N$, $T$ is composed of two contributions. First,  the finite part of the term $L^{-1}\delta_{x_\mu}(y_{n+1})\psi(X,Y)$, which actually diverges at $x_\mu=y_{n+1}$, we call this the diagonal part $T_\mathrm{d}$. The second contribution is just the evaluation of terms of the form $L^{-1}\delta_{x_\nu}(y_i)\psi(X,\hat Y_{i})$ with either $\nu\neq \mu$ or $i\neq n+1$. These diverge on different planes and are smooth on $\{x_\mu=y_{n+1}\}$ outside of the lower-dimensional set $\mathscr{C}^{n}$. We call this the off-diagonal part $T_\mathrm{od}$.
The action of $T$ can be read off from Eqs.~\eqref{eq:G fourier},\eqref{eq:G exp} and is conveniently expressed in Fourier variables. For an $n$-particle wavefunction $\psi\uppar{n}$ we have

\begin{align}
 \widehat {T_\mathrm{d} \psi\uppar{n}}(P,K)
 =&\frac{1}{4\pi}\left(\frac{2m}{2m+1}\right)^{\tfrac 32} \sum_{\mu=1}^M \sqrt{n+1 + \tfrac1{2m+1} p_\mu^2 + \tfrac1{2m}\hat P_\mu^2 + K^2} \hat \psi\uppar{n}(P,K),\label{eq:Tdiag}
 \end{align}
 which is obtained from~\eqref{eq:G exp} by developing the exponential, and
 \begin{align}
 \widehat {T_\mathrm{od} \psi\uppar{n}}(P,K)=&-\frac{1}{(2\pi)^{3}}\bigg( 
 \begin{aligned}[t]
&\sum_{\mu,\nu=1}^M \sum_{i=1}^n 
\int \frac{\hat \psi\uppar{n}(P-e_\mu \xi + e_\nu k_i,\hat K_i,\xi)}{n+1+\frac1{2m} (P-e_\mu \xi )^2 +K^2 + \xi^2} \ud \xi\\
& + \sum_{\mu\neq \nu=1}^M \int \frac{\hat \psi\uppar{n}(P-e_\mu \xi + e_\nu \xi),K)}
{n+1+\frac1{2m} (P-e_\mu \xi )^2 +K^2 + \xi^2}\ud \xi\bigg),
 \end{aligned}\label{eq:Tod}
\end{align}
which corresponds to the evaluation of~\eqref{eq:G fourier} at $y_{n+1}=x$.


It is a well-known result in the theory of point interactions (see e.g.~\cite{Co_etal15,MoSe17}) that, for a fixed number $n$ of particles,  $T=T_\mathrm{d}+T_\mathrm{od}$ is bounded from $H^{1}(\R^{3(M+n)})$ to $\hilb\uppar{n}$, and symmetric. Since, in our problem, $n$ is not fixed, the dependence of the bound on $n$ is important.
A bound on $T_\mathrm{od}$ which is independent of $n$, even though the number of terms grows linearly in $n$, was proved for the norm of $T_\mathrm{od}$ as an operator form $H^{1/2}$ to $H^{-1/2}$ by Moser and Seiringer~\cite{MoSe17} (with $M=1$). We adapt their method to prove the same for the operator from $H^1$ to $\hilb\uppar{n}$, and arbitrary $M$, in Lemma~\ref{lem:T_od}. This gives:

\begin{lem}\label{lem:T}
For all $n\in \N$  define the operator $T=T_\mathrm{d}+T_\mathrm{od}$ by~\eqref{eq:Tdiag},~\eqref{eq:Tod} 
on the domain $D(T)\uppar{n}= \hilb\uppar{n}\cap H^1(\R^{3(M+n)})$ and denote the induced operator on $\hilb$ by the same symbol.
The operator $T$ is symmetric on $D(T)=D(L^{1/2})$
and there exists a constant $C$ such that for all  $\psi \in D(L^{1/2})$
\begin{equation*}
 \norm{T\psi}_{\hilb} \leq C \|(1+L^{1/2}) \psi\|_{\hilb}.
\end{equation*}
\end{lem}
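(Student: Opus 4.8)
The plan is to treat the diagonal part $T_{\mathrm d}$ and the off-diagonal part $T_{\mathrm{od}}$ separately, since they require quite different arguments. For the diagonal part the estimate is essentially by inspection of formula~\eqref{eq:Tdiag}: the square root $\sqrt{n+1+\tfrac1{2m+1}p_\mu^2+\tfrac1{2m}\hat P_\mu^2+K^2}$ is pointwise bounded by $(1+L(P,K))^{1/2}$ up to a constant depending on $m$, $M$, uniformly in $n$, so $\|T_{\mathrm d}\psi\uppar{n}\|_{\hilb\uppar{n}}\leq C\|(1+L^{1/2})\psi\uppar{n}\|_{\hilb\uppar{n}}$ with $n$-independent constant; summing over sectors gives the bound on $\hilb$. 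For the symmetry of $T_{\mathrm d}$ one just notes that it acts as multiplication by a real, symmetric-in-the-labels function in Fourier space, hence is symmetric on $D(L^{1/2})$ (where the multiplier is form-bounded by $L^{1/2}$). So the real content is the off-diagonal part.

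For $T_{\mathrm{od}}$ the plan is to invoke Lemma~\ref{lem:T_od} (the $n$-uniform bound, promised in the excerpt, adapting the Moser--Seiringer argument from $H^{1/2}\to H^{-1/2}$ to $H^1\to L^2$). Granting that lemma, which provides a constant $C$, independent of $n$, with $\|T_{\mathrm{od}}\psi\uppar{n}\|_{\hilb\uppar{n}}\leq C\|\psi\uppar{n}\|_{H^1(\R^{3(M+n)})}\leq C\|(1+L^{1/2})\psi\uppar{n}\|$, one again sums over $n$ using $\|\psi\|_{D(L^{1/2})}^2=\sum_n\|(1+L^{1/2})\psi\uppar{n}\|^2$ to conclude that $T_{\mathrm{od}}$ is bounded from $D(L^{1/2})$ to $\hilb$. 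Combining with the $T_{\mathrm d}$ bound gives the stated inequality, and in particular shows $D(L^{1/2})\subseteq D(T)$; conversely $\psi\uppar{n}\in H^1$ for every $n$ together with $\psi\in D(N^{1/2})$ (which is implied by $\psi\in D(L^{1/2})$, since $N\leq L$) shows $D(T)=D(L^{1/2})$ as claimed, once one checks the finiteness of $\sum_n\|\psi\uppar{n}\|_{H^1}^2$ is equivalent to $\psi\in D(L^{1/2})$.

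Symmetry of $T_{\mathrm{od}}$ on $D(L^{1/2})$ should be verified directly from~\eqref{eq:Tod}: changing variables $\xi\leftrightarrow k_i$ (respectively relabelling $\mu\leftrightarrow\nu$ in the second sum) and using that the denominator $n+1+\tfrac1{2m}(P-e_\mu\xi)^2+K^2+\xi^2$ is invariant under the simultaneous swap of the integration variable with the appropriate momentum component, one matches $\langle T_{\mathrm{od}}\psi,\phi\rangle$ with $\langle\psi,T_{\mathrm{od}}\phi\rangle$ on the dense subspace of smooth compactly supported (in Fourier) functions; the $n$-uniform bound then extends symmetry to all of $D(L^{1/2})$ by density. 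Alternatively one can note that $T_{\mathrm{od}}$ is the $L^2$-realisation of $-G^*_{\mathrm{od}}$-type kernels arising from $a(\delta)L^{-1}a^*(\delta)$ with the diagonal removed, which is manifestly formally symmetric, and the analytic work is precisely the bound that makes this rigorous.

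The main obstacle is Lemma~\ref{lem:T_od}: getting a bound on $T_{\mathrm{od}}$ from $H^1$ to $\hilb\uppar{n}$ that does \emph{not} deteriorate as $n\to\infty$, despite the sum over $i=1,\dots,n$ in~\eqref{eq:Tod} having $n$ terms. The point is that each term individually is only of size $\sim n^{-1/2}$ in the relevant operator norm (the denominator carries the full $K^2$, so integrating out $k_i$ after rescaling by $\sqrt{n}$ produces the decay), and a Schur-test / Cauchy--Schwarz argument in the spirit of Moser--Seiringer must be arranged so that the $n$ terms, each small, sum to something $O(1)$. Everything else — the $T_{\mathrm d}$ estimate, the symmetry, and the bookkeeping over Fock-space sectors — is routine given that lemma.
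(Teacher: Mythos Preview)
Your proposal is correct and follows essentially the same route as the paper: the diagonal part is handled by inspection of the Fourier multiplier, the off-diagonal bound is deferred to Lemma~\ref{lem:T_od}, and symmetry of $T_{\mathrm{od}}$ is argued from its structure. The only cosmetic difference is that the paper states symmetry via the representation $T_{\mathrm{od}}=-\sum \tau_{x_\mu}(y_{n+1})L^{-1}\delta_{x_\nu}(y_i)$ together with $\tau_x^*=\delta_x$, which is precisely your ``alternative'' observation about the $a(\delta)L^{-1}a^*(\delta)$ structure; your primary suggestion of a direct Fourier change of variables amounts to the same computation written out in coordinates.
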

 
\begin{proof}
The statement is trivial for $T_\ud$.  The bound on $T_\mathrm{od}$ is proved in Lemma~\ref{lem:T_od}.
To show symmetry of $T_\mathrm{od}$ on its domain, one may use the representation
\begin{align}\label{eq:T sym}
 T_\mathrm{od} \psi\uppar{n}(X,\hat Y_{n+1}) = -\sum_{(\mu,n+1)\neq (\nu, i)} \tau_{x_\mu}(y_{n+1})L^{-1}\delta_{x_\nu}(y_i) \psi \uppar{n}(X,\hat Y_i),
\end{align}
where $\tau_{x_\mu}(y_{n+1})$ denotes evaluation at $y_{n+1} = x_\mu$ (outside of $\mathscr{C}^{n}$). This proves the claim, because of the bounds on $T_\mathrm{od}$ obtained before and because  $\tau^*_x=\delta_x$, as a map from smooth functions to distributions.
\end{proof}

For the boundary operator $A$ we can now prove:

\begin{lem}\label{lem:A}
 Let $\psi\in \hilb\uppar{n}$.
 The limit in Equation~\eqref{eq:Adef} with $\phi=G\psi$ exists in $H^{-1}(\R^{3(M+n)})$ and $AG\psi=T\psi$.
\end{lem}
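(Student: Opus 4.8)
The plan is to evaluate the limit in~\eqref{eq:Adef} for $\phi=G\psi\uppar{n}$ by inserting the explicit representation~\eqref{eq:G fourier}--\eqref{eq:G exp} of $G$, separating the unique summand of $G\psi\uppar{n}$ that is singular on the collision plane $\{y_{n+1}=x_\mu\}$, on which the limit is taken, from the remaining ``off-diagonal'' summands, and recognising the two contributions as $T_{\mathrm d}\psi\uppar{n}$ and $T_{\mathrm{od}}\psi\uppar{n}$ of~\eqref{eq:Tdiag},~\eqref{eq:Tod}. Note first that $T\psi\uppar{n}$ is a meaningful element of $H^{-1}(\R^{3(M+n)})$ for every $\psi\uppar{n}\in\hilb\uppar{n}$: since $T$ is symmetric and, by Lemma~\ref{lem:T}, bounded from $D(L^{1/2})\uppar{n}$ to $\hilb\uppar{n}$, its adjoint furnishes a bounded extension $T\colon\hilb\uppar{n}\to H^{-1}(\R^{3(M+n)})$ (here $H^{-1}(\R^{3(M+n)})$ is the dual of $D(L^{1/2})\uppar{n}$, which for fixed $n$ coincides with $H^1(\R^{3(M+n)})$ up to equivalence of norms). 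As in Lemma~\ref{lem:G bound}, I would present the case $M=1$; for $M>1$ the only additional terms are off-diagonal ones with $\mu\neq\nu$, handled exactly like those with $i\leq n$ below.

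Decompose $G\psi\uppar{n}=\Phi_{\mathrm d}+\Phi_{\mathrm{od}}$, with $\Phi_{\mathrm d}=-\tfrac{1}{\sqrt{n+1}}L^{-1}\big(\delta_x(y_{n+1})\psi\uppar{n}(X,\hat Y_{n+1})\big)$ and $\Phi_{\mathrm{od}}=-\tfrac{1}{\sqrt{n+1}}\sum_{i=1}^{n}L^{-1}\big(\delta_x(y_i)\psi\uppar{n}(X,\hat Y_i)\big)$. By Lemma~\ref{lem:B}, $(BG\psi\uppar{n})(X,Y)=\psi\uppar{n}(X,Y)$ for $M=1$, and only $\Phi_{\mathrm d}$ is singular on $\{y_{n+1}=x\}$; hence~\eqref{eq:Adef} with $\phi=G\psi\uppar{n}$ splits into a ``diagonal'' part, which carries the $\tfrac1r$-counterterm, plus an ``off-diagonal'' part, and I claim these converge in $H^{-1}(\R^{3(M+n)})$ to $T_{\mathrm d}\psi\uppar{n}$ and $T_{\mathrm{od}}\psi\uppar{n}$ respectively.

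For the diagonal part I use~\eqref{eq:G exp}: upon setting $y_{n+1}=x+r\omega$ its relative coordinate has magnitude $\kappa r$ with $\kappa=\sqrt{2m/(2m+1)}$, and its centre-of-mass variable equals $x+\tfrac{r\omega}{2m+1}$, so taking the Fourier transform in $(x,\hat Y_{n+1})$ and averaging $\tfrac{1}{4\pi}\int_{S^2}\ud\omega$ turns $\sqrt{n+1}\,\Phi_{\mathrm d}(X,\hat Y_{n+1},x+r\omega)$ into the Fourier multiplier $-\tfrac{m}{2\pi(2m+1)r}\,\ue^{-\kappa r\Lambda}\,\mathrm{sinc}\!\big(\tfrac{r|p|}{2m+1}\big)$ on $\widehat\psi\uppar{n}$, where $\Lambda=\Lambda(p,\hat K)=\sqrt{n+1+\tfrac1{2m+1}p^2+\hat K^2}$, $\mathrm{sinc}(u)=\sin(u)/u$, and the $\mathrm{sinc}$-factor arises from $\tfrac1{4\pi}\int_{S^2}\ue^{\ui p\cdot r\omega/(2m+1)}\ud\omega$. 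Adding the counterterm $\tfrac{m}{2\pi(2m+1)r}\,\widehat\psi\uppar{n}$ leaves the multiplier $\tfrac{m}{2\pi(2m+1)r}\big(1-\ue^{-\kappa r\Lambda}\mathrm{sinc}(\tfrac{r|p|}{2m+1})\big)$, which converges pointwise, as $r\to0$, to $\tfrac{m}{2\pi(2m+1)}\kappa\Lambda=\tfrac1{4\pi}\big(\tfrac{2m}{2m+1}\big)^{3/2}\Lambda$, the symbol of $T_{\mathrm d}$ in~\eqref{eq:Tdiag}. Using $1-\ue^{-t}\leq t$ and $|1-\mathrm{sinc}(u)|\leq u^2/6$ bounds its modulus by $\min\!\big(\tfrac2r,\,\kappa\Lambda+\tfrac{r\,p^2}{6(2m+1)^2}\big)$, which divided by $\langle(p,\hat K)\rangle$ is $\leq C(n,m)$ uniformly for $r\in(0,1]$; dominated convergence then yields the asserted $H^{-1}$-convergence of the diagonal part to $T_{\mathrm d}\psi\uppar{n}$.

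For the off-diagonal part, each summand $L^{-1}\big(\delta_x(y_i)\psi\uppar{n}(X,\hat Y_i)\big)$ with $i\leq n$ is, by the analogue of~\eqref{eq:G exp} for the pair $(x,y_i)$ (whose exponential decay rate is proportional to $|x-y_i|$), smooth in $y_{n+1}$, indeed in all variables away from the lower-dimensional set $\mathscr{C}^{n}$; so evaluating at $y_{n+1}=x+r\omega$ makes sense, and in Fourier variables with $\xi$ conjugate to $y_{n+1}$ the spherical average at radius $r$ is represented by the integral of~\eqref{eq:Tod} (equivalently the trace in~\eqref{eq:T sym}) with an extra factor $\mathrm{sinc}(r|\xi|)$, of modulus $\leq1$, in the $\xi$-integrand; as $r\to0$ this factor tends to $1$. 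By dominated convergence in $\xi$ the corresponding functions converge pointwise to $\widehat{T_{\mathrm{od}}\psi\uppar{n}}$; and since their modulus is bounded, uniformly in $r$, by the integral of~\eqref{eq:Tod} with $\widehat\psi\uppar{n}$ replaced by $|\widehat\psi\uppar{n}|$ and without the $\mathrm{sinc}$-factor, which defines an element of $H^{-1}(\R^{3(M+n)})$ by the estimates of Lemma~\ref{lem:T_od} (whose proof, a Schur-type argument, applies verbatim to $|\widehat\psi\uppar{n}|$, and by duality since its kernel is symmetric), a second application of dominated convergence gives $H^{-1}$-convergence of the off-diagonal part to $T_{\mathrm{od}}\psi\uppar{n}$. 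Together with the preceding paragraph this proves $AG\psi\uppar{n}=T\psi\uppar{n}$ in $H^{-1}$. I expect the off-diagonal contribution to be the only real obstacle: it cannot be controlled in $H^{-1}$ for merely square-integrable $\psi\uppar{n}$ by a naive trace estimate and rests essentially on the $n$-uniform Moser--Seiringer-type bounds of Lemma~\ref{lem:T_od}, whereas the diagonal part, the exact cancellation of the $r^{-1}$-singularity by the $B$-counterterm, and the vanishing of the centre-of-mass correction after averaging over $S^2$ are all elementary once~\eqref{eq:G exp} is available.
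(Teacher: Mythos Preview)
Your proof is correct and follows essentially the same strategy as the paper's: split into diagonal and off-diagonal parts, use the explicit exponential representation~\eqref{eq:G exp} together with dominated convergence for the diagonal contribution, and the uniform Schur-type bounds of Lemma~\ref{lem:T_od} plus dominated convergence for the off-diagonal one. The only cosmetic difference is that you perform the $S^2$-average from the outset (producing the $\mathrm{sinc}$ factor that absorbs the centre-of-mass shift), whereas the paper treats this shift separately via~\eqref{eq:av 0} and phrases the diagonal limit as the derivative at $r=0$ of the semigroup $\ue^{-rT_\mathrm{d}}$.
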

\begin{proof}
 It is sufficient to prove the claim for a fixed $\mu\in \{1,\dots, M\}$. We start with the diagonal part $T_\mathrm{d}$, which is clearly an operator from $\hilb\uppar{n}$ to $H^{-1}(\R^{3(M+n)})$. Consider the representation of $G$ in Eq.~\eqref{eq:G exp}. This shows, for $M=1$, that the limit for $|x-y_{n+1}|\to 0$ of
 \begin{equation*}
 \left(L^{-1}\delta_x(y_{n+1})\psi\right)(x,Y) - \frac{m}{2\pi(m+1)} \frac{1}{|x-y_{n+1}|} \psi(\tfrac{2m}{2m+1}x+\tfrac{1}{2m+1}y_{n+1},Y)
 \end{equation*}
exists in $H^{-1}$ for $\psi\in L^2$, and equals $-(2\pi)^3T_\mathrm{d}\psi$ because this limit is (up to a prefactor) just the derivative at $r=0$ of the semi-group $\ue^{-r T_\mathrm{d}}$. To complete the proof for $M=1$,
we need to show that the error made by replacing $\psi(x,Y)$ with $\psi(\tfrac{2m}{2m+1}x+\tfrac{1}{2m+1}y_{n+1},Y)$ converges to zero as $r\to0$. This follows, by duality, from the fact that for $f\in H^1(\R^{3(1+n)})$
\begin{align}
 \lim_{r\to 0}&\,\frac{1}{r}\int\limits_{S^2} \left(f(x-\tfrac{1}{2m+1}r\omega,Y)- f(x,Y)\right) \ud \omega
 = -\frac{1}{2m+1}\int\limits_{S^2} \omega\cdot \nabla f(x,Y) \ud \omega=0\label{eq:av 0}
\end{align}
in $\hilb\uppar{n}$. The generalisation to arbitrary $M$ is straightforward, completing the argument for $T_\ud$.

The off-diagonal part $T_\mathrm{od}$ is a bounded operator from $\hilb\uppar{n}$ to $H^{-1}(\R^{3(M+n)})$ by Lemma~\ref{lem:T} and duality. The estimates of Lemma~\ref{lem:T_od} also yield a uniform bound for the operator obtained by evaluation at $y_{n+1}-x_\mu=\eps$, whose  integral kernel differs from that of $T_\mathrm{od}$ by a factor $\ue^{\ui \eps \xi}$. This implies dominated convergence for $\eps\to 0$.
\end{proof}
We have thus defined a second boundary value operator 
\begin{equation*}
 A:D(L)\oplus G\hilb \to D(L^{-1/2}), \qquad \psi + G\phi\mapsto \sum_{\mu=1}^M a(\delta_{x_\mu})\psi + T\phi,
\end{equation*}
where $T$ is extended to $T:\hilb\uppar{n}\to H^{-1}(\R^{3(M+n)})$ by duality.
This is an extension of the annihilation operator, originally defined on $D(L)$. There are of course many such extensions, e.g. the map $G\phi\mapsto 0$ provides an example. The extension $A$ is special in that it is also a sum of local operators, in the sense that $(A\psi)\uppar{n}(X,Y)$ is a sum over $\mu$ in which each term is completely determined by $\psi$ restricted to any neighbourhood of the point $(X,Y,x_\mu)\in \mathscr{C}^{n+1}$.

\begin{rem}[The model with one $y$-particle]\label{rem:N=1}
 The results we have obtained so far are sufficient to discuss the model with at most one $y$-particle. For the cases $M=1,2$ this is essentially the model introduced in~\cite{thomas1984}. Certain sectors of the Galilean Lee model~\cite{levy1967, schrader1968} can also be described in a very similar way.
 
 Consider the subspace $D$ of $\hilb\uppar{0}\oplus\hilb\uppar{1}$ formed by elements $\psi=(\psi\uppar{0},\psi\uppar{1})$ with $\psi\uppar{0}\in D(L)=H^2(\R^{3M})$, $\psi\uppar{1}\in D(L)\oplus G\hilb\uppar{0}$. 
 On  this space, both $B\psi\uppar{1}$ and $A\psi\uppar{1}$ are well defined. Since $BG\phi=M\phi$ we also have 
 \begin{equation*}
 \psi\uppar{1}- GB\psi\uppar{1}/M\in D(L)=H^2(\R^{3(M+1)}) .
 \end{equation*}
 Using first that $L_0^*G=0$ and then the symmetry of $(L,D(L))$,
 we find the identity for $\psi, \phi\in D$ 
 \begin{align*}
  &\langle L_0^* \psi\uppar{1},\phi\uppar{1} \rangle -  \langle  \psi\uppar{1},L_0^*\phi\uppar{1} \rangle\\
  &=\langle L_0^* (\psi\uppar{1}-GB\psi\uppar{1}/M),\phi\uppar{1} \rangle -  \langle  \psi\uppar{1},L_0^*(\phi\uppar{1}-GB\phi\uppar{1}/M) \rangle\\
  %
  &=\langle L (\psi\uppar{1}-GB\psi\uppar{1}/M),GB\phi\uppar{1}/M \rangle
  -  \langle GB \psi\uppar{1}/M,L(\phi\uppar{1}-GB\phi\uppar{1}/M)\rangle.
  %
\end{align*}
Since $LG=-\sum_{\mu=1}^M a^*(\delta_{x_\mu})$ this equals
\begin{align*}
 \frac{1}{M}\sum_{\mu=1}^M \bigg(-&\langle a(\delta_{x_\mu})(\psi\uppar{1}-GB\psi\uppar{1}/M), B\phi\uppar{1} \rangle\\
 &+  \langle B \psi\uppar{1},a(\delta_{x_\mu})(\phi\uppar{1}-GB\phi\uppar{1}/M) \rangle\bigg).
\end{align*}
If $B\psi\uppar{1}$ and $B\phi\uppar{1}$ are elements of the domain $D(T)\uppar{0}= H^1(\R^{3M})$  of the symmetric operator $T$ we can add the term $(\langle T B\psi\uppar{1}, B\phi\uppar{1}\rangle - \langle  B\psi\uppar{1},T B\phi\uppar{1}\rangle)/M^2=0$ to this equation. Since $AG=T$ it then becomes
\begin{equation}
 \langle L_0^* \psi\uppar{1},\phi\uppar{1} \rangle -  \langle  \psi\uppar{1},L_0^*\phi\uppar{1} \rangle
 =\langle B\psi\uppar{1}/M,A\phi\uppar{1}\rangle -\langle A\psi\uppar{1},B\phi\uppar{1}/M\rangle.
\end{equation}
This implies that the operator 
\begin{equation*}
H\uppar{1}(\psi\uppar{0},\psi\uppar{1}) = (L\psi\uppar{0} + A\psi\uppar{1}, L_0^* \psi\uppar{1}) 
\end{equation*}
 is symmetric if we impose the boundary condition $B\psi\uppar{1}=M\psi\uppar{0}$, i.e.~on the domain
 \begin{equation*}
  D(H\uppar{1})=\Big\{ \psi \in \hilb\uppar{0}\oplus\hilb\uppar{1} \Big\vert \psi\uppar{0}\in D(L),\, \psi\uppar{1}\in D(L)\oplus G\hilb\uppar{0},\, B\psi\uppar{1}=M\psi\uppar{0}\Big\}.
 \end{equation*}

 One can prove that $H\uppar{1}$ is self-adjoint, for example by constructing its resolvent along the lines of~\cite{levy1967, thomas1984}, or by adapting our proof in Section~\ref{sect:proof}. 
For $\psi\in D(H\uppar{1})$ we have
\begin{equation}\label{eq:L+a^*}
L_0^* \psi\uppar{1}=L(\psi\uppar{1}-G\psi\uppar{0})=L\psi\uppar{1} + \sum_{\mu=1}^M a^*(\delta_{x_\mu})\psi\uppar{0},
\end{equation}
where the right hand side is a sum in $H^{-2}(\R^{3(M+1)})$. We can thus also write 
\begin{equation*}
H\uppar{1}\psi=L\psi + A\psi\uppar{1}+ \sum_{\mu=1}^M a^*(\delta_{x_\mu})\psi\uppar{0}.
\end{equation*}

It is important to note that we have used the fact that $\psi\uppar{0}\in H^2(\R^{3M})\subset D(T)\uppar{0}$. This does not carry over to cases with more $y$-particles, since  $B\psi\uppar{1}=\psi\uppar{0}$ implies that $\psi\uppar{1}$ is \emph{not} in $H^1(\R^{3(M+1)})$ (or even $H^{1/2}(\R^{3(M+1)})$), if $\psi\uppar{0}\neq 0$.
\end{rem}
\subsection{The Operator $K$ and its extension}\label{sect:K}
Up to now we have developed the theory very much in parallel to~\cite{LaSch18}, where we treated in particular the two-dimensional variant of our model. There, we proved that $L_0^*+A$  is self-adjoint of the domain $D(L)\oplus G\hilb$ with the boundary condition $B\psi=M\psi$. For the three-dimensional model, this cannot be true as such, since $AG=T$ is defined on $D(T)$ but if $B\psi\uppar{n}=\psi\uppar{n-1}\neq 0$, then $\psi\uppar{n}\notin H^1(\R^{3(M+n)})\cap\hilb\uppar{n}=D(T)\uppar{n}$, since such a function must diverge like $\psi\uppar{n-1}(X, \hat Y_n)|x_\mu-y_n|^{-1}$ as $y_n\to x_\mu$. Consequently, we should not expect $L_0^*+A$ to map this domain to $\hilb$. This problem cannot be remedied by simply interpreting the operators as quadratic forms, since one can also show that $G\hilb\uppar{n-1}\cap H^{1/2}   (\R^{3(M+n)}) =\{0\}$ (see ~\cite[Prop.4.2]{LaSch18}).

\begin{rem}\label{rem:cancellations}
 Of course, the fact that $G\psi$ is not in the domain $D(T)=D(L^{1/2})$ of $T$ does not immediately imply that $TG\psi\notin \hilb$, but only that $T_\ud G\psi\notin \hilb$ and $T_\mathrm{od}G\psi \notin \hilb$, separately.
 Cancellations between these two terms can make $TG$ well defined on $D(L)\oplus G\hilb$. In this case, one can proceed with the operator $L_0^*+A$. This happens in the variant of our model with fixed $x$-particles~\cite{IBCpaper}, which formally corresponds to taking $m=\infty$ (see Eq.~\eqref{eq:alpha_m} below). Similar cancellations occur in the Galilean Lee model~\cite{levy1967, schrader1968}, when formulated in our language, due to the constraint on the masses of the different particles. 
 Proposition~\ref{prop:A_T} below shows that, in our model, there are no such cancellations.
\end{rem}

Our solution to the problem of defining $AG$ is to change the regularity of  elements of the domain in such a way that the singularities of $A$ and $L_0^*$ cancel, similar to the cancellation in $L+a^*$ on $D(H\uppar{1})\subset D(L_0^*)$ in the case of one $y$-particle, see Eq.~\eqref{eq:L+a^*}.  
Let $K$ be the operator
\begin{equation*}
K=L+T 
\end{equation*}
with $D(K)=D(L)$ and $T$ given by~\eqref{eq:Tdiag},~\eqref{eq:Tod}. By Lemma~\ref{lem:T} and the Kato-Rellich theorem, $K$ is self-adjoint and bounded from below. Let $K_0$ be the restriction of $K$ to $D(L_0)$, the vectors $\psi\in D(L)$ with $\psi\uppar{n}\in H^2_0(\R^{3(M+n)}\setminus\mathscr{C}^n)$.  In analogy with $G$, we define a map
\begin{equation}\label{eq:GTdef}
 G_T:\hilb \to \ker (K_0^* + c_0), \qquad  G_T\psi:=  - \sum_{\mu=1}^M (K+c_0)^{-1} a^*(\delta_{x_\mu}) \psi,
\end{equation}
where $c_0>-\min \sigma(K)$ is a fixed constant.  
The important properties of $G$ carry over to $G_T$ by perturbation theory.
\begin{prop}\label{prop:G_T}
 Let $c_0>-\min \sigma(K)$ and $G_T$ be defined by~\eqref{eq:GTdef}. For $0\leq s < \tfrac14$ there exists a constant $C$ such that 
 \begin{equation*}
  \norm{L^s G_T\psi}_\hilb \leq C \|N^{s-\frac14} \psi\|_\hilb.
 \end{equation*}
Moreover, $1-G_T$ has a continuous inverse on $\hilb$ and $1-G_T$ as well as $(1-G_T)^{-1}$ map the domain of the number operator $D(N)$ to itself.
\end{prop}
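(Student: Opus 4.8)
The plan is to deduce Proposition~\ref{prop:G_T} from the corresponding statements for $G$ (Lemma~\ref{lem:G bound} and Proposition~\ref{prop:G}) by a perturbative argument, exploiting that $K+c_0 = L + T + c_0$ differs from $L + c_0$ only by the operator $T$, which by Lemma~\ref{lem:T} is $L^{1/2}$-bounded, and hence infinitesimally $L$-bounded. Concretely, write $G_T\psi = -\sum_\mu (K+c_0)^{-1}a^*(\delta_{x_\mu})\psi$ and compare with $G_{c_0}\psi := -\sum_\mu (L+c_0)^{-1}a^*(\delta_{x_\mu})\psi$ (the obvious shift of $G$, for which Lemma~\ref{lem:G bound} still holds with $L$ replaced by $L+c_0$, since the proof only used the lower bound $L \geq 1$ and such shifts only help). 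The second resolvent identity gives
\begin{equation*}
 (K+c_0)^{-1} = (L+c_0)^{-1} - (K+c_0)^{-1} T (L+c_0)^{-1},
\end{equation*}
so $G_T\psi = G_{c_0}\psi + (K+c_0)^{-1} T\, G_{c_0}\psi$. Then I would estimate $\norm{L^s G_T\psi}$ term by term: for the first term directly by the $G$-bound, and for the second by writing $L^s(K+c_0)^{-1} = L^s(K+c_0)^{-1/2}\cdot(K+c_0)^{-1/2}$ and using that $L^s(K+c_0)^{-1/2+s'}$ is bounded for $s + ? $ — more carefully, since $s<1/4$ one has $L^s$ controlled by $(K+c_0)^{1/2}$, so $L^s(K+c_0)^{-1}$ is bounded, and it remains to bound $T G_{c_0}\psi$ by $\norm{L^{1/2+\epsilon}G_{c_0}\psi} + \norm{G_{c_0}\psi}$ via Lemma~\ref{lem:T}, which by Lemma~\ref{lem:G bound} (applied with exponent $1/2+\epsilon < 1/4$?) — here is the catch: $1/2+\epsilon$ exceeds $1/4$, so Lemma~\ref{lem:G bound} does \emph{not} control $L^{1/2}G\psi$.

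So the argument has to be arranged to never apply more than $L^{1/4-\delta}$ to $G_{c_0}$. The way to do this is to not factor $T$ off $G_{c_0}$ but instead keep it inside a resolvent: use the resolvent identity in the symmetrized form, or rather iterate it so that $T$ always appears sandwiched as $(K+c_0)^{-1/2} T (L+c_0)^{-1/2}$, which by Lemma~\ref{lem:T} (in the form $\norm{T(L+c_0)^{-1/2}}<\infty$, i.e.\ $T$ is $L^{1/2}$-bounded) and self-adjointness of $K$ is a bounded operator on $\hilb$. Concretely I expect the clean route is: $L^s G_T = L^s(K+c_0)^{-1}(-\sum_\mu a^*(\delta_{x_\mu}))$, and since $a^*(\delta_{x_\mu})(L+c_0)^{-1} = -G_{c_0}$ componentwise is bounded $\hilb\to\hilb$ with the $N$-weight of Lemma~\ref{lem:G bound}, one writes $L^s(K+c_0)^{-1} a^*(\delta_{x_\mu}) = \big[L^s(K+c_0)^{-1}(L+c_0)^{s+1}\big]\cdot\big[(L+c_0)^{-s-1}a^*(\delta_{x_\mu})\big]$ — but again the exponent on $G$ is $s+1 > 1/4$. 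The resolution: estimate the sector norm. On $\hilb\uppar{n}$, $T$ is bounded by $C(1+L^{1/2})$ but actually one should use Lemma~\ref{lem:G bound}'s conclusion $\norm{L^s G\psi\uppar{n}} \le C(n+1)^{s-1/4}\norm{\psi\uppar{n}}$ together with the fact that on $\hilb\uppar{n+1}$ the operator $L$ is comparable, up to the $n$-independent Laplacian part, to $n+1$; thus a Neumann-series expansion $(K+c_0)^{-1} = \sum_{k\ge 0}(L+c_0)^{-1}(-T(L+c_0)^{-1})^k$ — which converges because $T$ is infinitesimally $L$-bounded — applied to $a^*(\delta_{x_\mu})\psi\uppar{n}$, lets each factor $(L+c_0)^{-1}$ absorb one power of $T\sim L^{1/2}$ while picking up a decaying $n$-weight. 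This is the step I expect to be the main obstacle: making the bookkeeping of $L$-powers versus $n$-powers tight enough that the Neumann series converges \emph{and} produces the stated $N^{s-1/4}$ weight, rather than merely $L^s G_T$ bounded by $\norm{\psi}$.

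For the second assertion — invertibility of $1-G_T$ and invariance of $D(N)$ — I would reuse the proof of Proposition~\ref{prop:G} verbatim: $G_T$ maps $\hilb\uppar{n}$ to $\hilb\uppar{n+1}$, so $(1-G_T)^{-1}=\sum_{j\ge 0}G_T^j$ acts as a finite sum on each sector, and the $s=0$ case of the bound just proved (with $N$-weight $N^{-1/4}$) gives exactly the factorially-decaying estimate $\norm{G_T^j\psi\uppar{n-j}} \le C^j (j!)^{-1/4}\norm{\psi\uppar{n-j}}$ needed for norm-convergence of the series on $\hilb$. Likewise, taking $s$ slightly positive (or just reusing the sector estimate analogous to Eq.~\eqref{eq:G bound}) shows $\norm{n(G_T\psi)\uppar{n}} \le 2C\norm{(n-1)\psi\uppar{n-1}}$, so $G_T$ leaves $D(N)$ invariant, and the same tail-of-the-series argument as in Proposition~\ref{prop:G} (splitting off $\sum_{j<j_0}G_T^j$, which trivially preserves $D(N)$, and estimating $\norm{nG_T^j\psi\uppar{n-j}}$ with four of the $(n-i)^{-1/4}$ factors spent on cancelling the factor $n$) shows $(1-G_T)^{-1}$ does too. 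The only genuinely new work is the first bound; everything after it is a transcription of the arguments already given for $G$.
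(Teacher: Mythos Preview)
Your treatment of the second assertion (invertibility of $1-G_T$ and invariance of $D(N)$) is correct and matches the paper's argument exactly: once the sector-wise bound $\|G_T\psi\uppar{n}\|\le C(n+1)^{-1/4}\|\psi\uppar{n}\|$ is available, Proposition~\ref{prop:G} transcribes verbatim.

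For the bound itself, however, you have identified the obstacle but not its resolution, and the resolution is much simpler than the Neumann-series bookkeeping you sketch. The difficulty with your form of the resolvent identity, $G_T=G_{c_0}-(K+c_0)^{-1}T\,G_{c_0}$, is genuine and not merely cosmetic: $G_{c_0}\psi$ is \emph{not} in $D(T)=D(L^{1/2})$ (this is precisely the point of Section~\ref{sect:K} and Remark~\ref{rem:cancellations}), so $TG_{c_0}\psi$ is not an element of $\hilb$, and no amount of rearranging the factor $L^s(K+c_0)^{-1}$ in front will help. The Neumann series has the same problem term by term and, in addition, need not converge for the stated range of $c_0$.

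The paper's fix is to write the resolvent identity in the \emph{other} direction,
\[
 G_T\psi = G\psi - L^{-1}(T+c_0)\,G_T\psi,
\]
i.e.\ $(K+c_0)^{-1}=L^{-1}-L^{-1}(T+c_0)(K+c_0)^{-1}$. Now $T$ acts on $G_T\psi$ rather than $G\psi$, but crucially it is preceded by a full $L^{-1}$: applying $L^s$ gives $L^{s-1}(T+c_0)=L^{s-1/2}\cdot L^{-1/2}(T+c_0)$, and $L^{-1/2}T$ extends to a bounded operator on $\hilb$ by Lemma~\ref{lem:T} and duality (symmetry of $T$). Since $s<\tfrac14<\tfrac12$, the factor $L^{s-1/2}$ is bounded by $(n+1)^{s-1/2}$ on $\hilb\uppar{n+1}$. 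Together with Lemma~\ref{lem:G bound} for the $L^sG$-term and the elementary boundedness of $G_T$ (its adjoint is $-a(\delta_x)(K+c_0)^{-1}=(-a(\delta_x)L^{-1})\cdot(L(K+c_0)^{-1})$, a product of bounded operators by Corollary~\ref{cor:aL^-1} and $D(K)=D(L)$), this yields the sector estimate $\|L^sG_T\psi\uppar{n}\|\le C(n+1)^{s-1/4}\|\psi\uppar{n}\|$ directly, with no iteration.
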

\begin{proof}
 By the resolvent formula we have
 \begin{equation}\label{eq:GT pert}
  G_T\psi = G\psi - L^{-1}(T+c_0) G_T \psi.
 \end{equation}
 The bound on $L^sG_T$ then follows from the bound on $L^s G$, Lemma~\ref{lem:G bound}, and the fact that $L^{-1/2}T$ is bounded on $\hilb$, by Lemma~\ref{lem:T}.
 The continuity and the mapping properties  of 
 \begin{equation*}
  (1-G_T)^{-1}=\sum_{j=0}^\infty G^j_T
 \end{equation*}
 follow from this by exactly the same proof as in Proposition~\ref{prop:G}.
\end{proof}

This proposition implies that $(1-G_T)\psi\in D(N)$ if and only if $\psi \in D(N)$, so we have
 \begin{align} 
 D(H)&= \{ \psi \in D(N): (1-G_T)\psi \in D(L) \}\notag\\
 &= \{ \psi \in \hilb: (1-G_T)\psi \in D(L) \}\notag\\
 &=   (1-G_T)^{-1} D(L).\label{eq:1-G D(H)}
 \end{align}
 This also shows that $D(H)$ is dense in $\hilb$, by continuity  and surjectivity of $(1-G_T)^{-1}$.
Concerning the regularity of vectors in the range of $G_T$, applying the identity~\eqref{eq:GT pert} twice we find for any $n\geq 1$
\begin{equation}\label{eq:G_T decomp}
 G_T\psi\uppar{n}=G\psi \uppar{n} - L^{-1}(T+c_0) G \psi\uppar{n} + (L^{-1}(T+c_0))^2 G_T \psi\uppar{n}.
\end{equation}
The first term is an element of $H^{s}(\R^{3(M+n+1)})$, $s<\frac12$, by Lemma~\ref{lem:G bound} and diverges like $|x_\mu-y_i|^{-1}$ near $\mathscr{C}^{n+1}$ by Lemma~\ref{lem:B}.
The operator $TL^{-1}T$ is bounded on $\hilb\uppar{n}$ by Lemma~\ref{lem:T}, so the last term above is an element of $H^2(\R^{3(M+n+1)})$. It can thus be evaluated on the co-dimension-three set $\mathscr{C}^{n+1}$ of collision configurations. The term  $L^{-1}(T+c_0) G \psi\uppar{n}$ is in $H^1(\R^{3(M+n+1)})$ (even in $H^s$, $s<3/2$), so it should have a less pronounced divergence on $\mathscr{C}^{n+1}$ than the $|x_\mu-y_i|^{-1}$-divergence of $G\psi \uppar{n}$. In fact, we will show in Proposition~\ref{prop:A_T} that this term diverges logarithmically. We can thus define $BG_T \psi=M\psi$ by the same limit~\eqref{eq:Bdef} as for $G$.
We will define a modification $A_T$ of the operator $A$ on the range of $G_T$ in the next section, see Proposition~\ref{prop:A_T}.

\section{Extension of the annihilation operator}\label{sect:A}

In this section we will analyse the divergence of $G_T\psi$ on the sets $\mathscr{C}^n$ in order to define a local boundary operator $A_T$ that extends $\sum_{\mu=1}^M a(\delta_{x_\mu})$ to $D(L)\oplus G_T\hilb$.
%
One should think of functions in the range of $G_T$ as having an expansion of the form
\begin{equation*}
  (G_T\psi)\uppar{n+1}(X,Y) \sim \psi\uppar{n}(X,\hat Y_{n+1}) \left( \frac{b_1}{|x_\mu-y_{n+1}|} + b_2 \log|x_\mu-y_{n+1}| \right) + F(X,Y)
\end{equation*}
near $x_\mu=y_{n+1}$.
Here, the constant $b_1$ comes from the expansion of $G$ given in Eq.~\eqref{eq:G exp}, $b_2$ is determined by the second term in Eq.~\eqref{eq:G_T decomp}, and $F(X,Y)$ has an appropriate limit as $\abs{x_\mu-y_{n+1}}\to 0$. As in Section~\ref{sect:bv}, we would then define boundary value operators $B$, $A_T$ such that any $\phi\in D(L)\oplus G_T \hilb$ has the expansion for $\abs{x_\mu-y_{n+1}}\to 0$
\begin{equation*}
 \phi\uppar{n+1}(X,Y)\sim \frac{(B\phi\uppar{n+1})(X,\hat Y_{n+1})}{M} f(\abs{x_\mu-y_{n+1}}) + (A_T \phi\uppar{n+1})(X,\hat Y_{n+1}) + o(1),
\end{equation*}
where $f(\abs{x_\mu-y_{n+1}})$ is the divergent function in the expansion above. We will justify this intuition by defining the operators $B, A_T$ and showing that they are given by appropriate limits, in the sense of distributions.

We define the the map
\begin{align*}
 B: D(L)\oplus G_T \hilb \to \hilb,\qquad \phi + G_T\psi \mapsto M \psi.
\end{align*}
As a consequence of Proposition~\ref{prop:A_T} below, $B$ is a sum of local boundary operators given by the same expression, Eq.~\eqref{eq:Bdef} (where the limit is taken in $H^{-1}$), as the corresponding operator on $ D(L)\oplus G\hilb$. To define $A_T$, we set for appropriate $\phi\in \hilb\uppar{n+1}$

\begin{align}\label{eq:A_Tdef}
\big(A_T\phi&\big)(X,Y)\\
&:=
\lim_{r\to 0}\sum_{\mu=1}^M
\frac{1}{4\pi}\int\limits_{S^2}
\left(\sqrt{n+1} \phi(X,Y,x_\mu+r\omega)+ f_m(r) (B\phi)(X,Y)\right) \ud \omega,\notag
\end{align}
where $r>0$,
\begin{equation*}
 f_m(r)=\frac1M
\left(\frac{m}{2\pi(2m+1)} \frac{1}{r} + \gamma_m\log(r)\right), 
\end{equation*}
and $\gamma_m$ is the constant
 \begin{equation}\label{eq:alpha}
   \gamma_m=\frac{1}{(2\pi)^3}\left(\frac{2m}{2m+1}\right)^3
 \left(\frac{2\sqrt{m(m+1)}}{2m+1} - (2m+1) \tan^{-1}\left(\frac{1}{2\sqrt{m(m+1)}}\right)\right).
 \end{equation}
Note that for $\phi\in H^2(\R^{3(M+n+1)})$, $A_T\phi$ equals the usual annihilation operator.
We will show that the formula for $A_T$ defines a map
\begin{equation}\label{eq:A_T dom}
 A_T:D(L)\oplus G_T \hilb \to D(L^{-1/2})\,,\qquad \phi+G_T\psi \mapsto  \sum_{\mu=1}^M a(\delta_{x_\mu}) \phi + T\psi +S\psi,
\end{equation}
where $T=AG$ as before and $S:D(L^\eps)\to \hilb$ is symmetric, for any $\eps>0$.

Observe also that 
\begin{equation}\label{eq:alpha_m}
\lim_{m\to \infty} \gamma_m=\frac{1}{(2\pi)^3}\left(1-(\tan^{-1})'(0)\right)=0.
\end{equation}
which explains why there is no logarithmically divergent term in the case of fixed $x$-particles, treated in~\cite{IBCpaper}.
The result of this section is:
\begin{prop}\label{prop:A_T}
 Let  $\psi\in \hilb\uppar{n}$. Then for $\phi=G_T\psi$ the limit in Eq.~\eqref{eq:A_Tdef} exists in $H^{-1}(\R^{3(M+n)})$ and $S\psi:=(A_TG_T-T)\psi$ defines a symmetric operator on
 $D(S)=\hilb\uppar{n} \cap H^\eps(\R^{3(M+n)})$, for any $\eps>0$.
\end{prop}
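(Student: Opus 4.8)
The plan is to reduce everything to the decomposition \eqref{eq:G_T decomp},
\[
 G_T\psi\uppar{n}=G\psi\uppar{n} - L^{-1}(T+c_0)G\psi\uppar{n} + \bigl(L^{-1}(T+c_0)\bigr)^2 G_T\psi\uppar{n},
\]
and to analyse the three terms separately. The first term is handled by Lemma~\ref{lem:A}: its contribution to the limit in \eqref{eq:A_Tdef}, after subtracting the $r^{-1}$-piece of $f_m$, converges in $H^{-1}$ to $T\psi$, exactly as for $A$. The last term lies in $H^2(\R^{3(M+n+1)})$ (because $TL^{-1}T$ is bounded on each sector by Lemma~\ref{lem:T}, and $(L^{-1}(T+c_0))^2$ gains two derivatives), so it has a genuine Sobolev trace on $\mathscr{C}^{n+1}$, the spherical average converges to that trace, and since $TL^{-1}T$ maps $\hilb\uppar n$ to $D(L)$ this piece of $S$ is even bounded on $\hilb\uppar n$ — in particular symmetric there. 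So the real content is the middle term $-L^{-1}(T+c_0)G\psi\uppar{n}$: I must show that its spherical average around $x_\mu=y_{n+1}$ diverges like $\gamma_m\log r$ (times $(B\cdot)/M=\psi$), with the constant $\gamma_m$ of \eqref{eq:alpha}, and that the finite part, call it $S_{\log}\psi$, is a symmetric operator on $H^\eps$.

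For the middle term I would split $T=T_{\mathrm d}+T_{\mathrm{od}}$. Since $G\psi\uppar n$ is in $H^s$, $s<\tfrac12$, $T_{\mathrm{od}}G\psi\uppar n\in\hilb\uppar{n+1}$ by Lemma~\ref{lem:T} (duality pairing of $T_{\mathrm{od}}:H^{1/2}\to H^{-1/2}$ — here one uses that $G\psi\in H^{s}$ for $s$ close to $1/2$, so the product $L^{-1}T_{\mathrm{od}}G\psi$ lies in $H^{t}$, $t<3/2$), and from $t>1$ this term has a trace on $\mathscr{C}^{n+1}$; it contributes only to the finite part. The logarithm therefore comes entirely from $L^{-1}(T_{\mathrm d}+c_0)G\psi\uppar n$, which in Fourier variables is an explicit multiplier: from \eqref{eq:Tdiag}, $T_{\mathrm d}$ multiplies by $\tfrac1{4\pi}(\tfrac{2m}{2m+1})^{3/2}\sum_\mu\omega_\mu(P,K)$ with $\omega_\mu=\sqrt{\,\cdots\,}$, and $G\psi\uppar n$ is given by \eqref{eq:G fourier}. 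Evaluating the spherical average at radius $r$ and extracting the large-$|k_{n+1}|$ behaviour of the resulting $k_{n+1}$-integral produces exactly a term $\propto\log r$; the coefficient is the one-dimensional radial integral
\[
 \frac1{(2\pi)^3}\Bigl(\tfrac{2m}{2m+1}\Bigr)^3\!\int_0^\infty\!\Bigl(\tfrac{\sqrt{q^2+\tfrac1{2m+1}}}{\;q^2+1\;}-\tfrac1q\Bigr)q^2\,\mathrm dq
 \;=\;\gamma_m,
\]
after the change of variables that symmetrises the pair $(x_\mu,y_{n+1})$ into centre-of-mass and relative momenta, and this is precisely \eqref{eq:alpha} once the elementary integral is carried out (the $\tan^{-1}$ comes from $\int q^2(q^2+1)^{-1}\cdots$). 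This is the computation I expect to be the bookkeeping-heavy part, but it is the same kind of semigroup/finite-part argument as in Lemma~\ref{lem:A}, not a new idea: the limit $r\to0$ of the subtracted spherical average is, up to constants, $\tfrac{\mathrm d}{\mathrm dr}\big|_{0}$ of a regularised semigroup generated by the diagonal multiplier, and one reads off the $\log r$ coefficient from the symbol at infinity.

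What remains is symmetry of $S=S_{\mathrm{od}}+S_{\log}+S_2$ on $D(S)=\hilb\uppar n\cap H^\eps$. The $S_2$ part (from $(L^{-1}(T+c_0))^2G_T$) is bounded on $\hilb\uppar n$; I would show it is symmetric by writing $G_T\psi=(K+c_0)^{-1}(LG\psi)$ (using $K_0^*G_T=-c_0G_T$ and $LG=-\sum a^*(\delta_{x_\mu})$) together with $G_T^*(K+c_0)=-\sum a(\delta_{x_\mu})=G^*L$, which turns $S_2$ into an expression sandwiched between $G$ and $G^*$ with self-adjoint operators in the middle, hence symmetric. For $S_{\mathrm{od}}$ I would use the representation \eqref{eq:T sym}: $T_{\mathrm{od}}$ is "$\tau\,L^{-1}\delta$", so $L^{-1}T_{\mathrm{od}}G$ evaluated on $\mathscr C^{n+1}$ is a symmetric kernel $\langle\tau_{x_\mu}(y_{n+1})L^{-1}\delta,\;L^{-1}\delta\rangle$-type object (symmetric because $\tau^*=\delta$ and the resolvents are self-adjoint); the $H^\eps\to\hilb\uppar n$ bound, uniform in $n$, follows by adapting Lemma~\ref{lem:T_od} with one extra resolvent power, which only helps. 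Finally $S_{\log}$: from the Fourier representation its kernel is again built from the multiplier $\omega_\mu(P,K)$ conjugated by resolvents, manifestly real and symmetric under exchange of the two momentum arguments; the $H^\eps$-boundedness uniform in $n$ comes from the fact that after subtracting the $\log r$ piece the remaining $k_{n+1}$-integral converges like $\int(k^2+n)^{-1}\,\mathrm dk$ rather than logarithmically, so the argument of Lemma~\ref{lem:G bound}/\ref{lem:T_od} applies with $\eps$ to spare. The main obstacle is genuinely the identification of $\gamma_m$ and the verification that no further (e.g.\ constant or $1/r$) divergence survives in the middle term — i.e.\ that the only non-trace part of $L^{-1}(T+c_0)G\psi$ is the logarithm with the stated constant; everything after that is a uniform-in-$n$ bound of the type already established for $T$ and $G$.
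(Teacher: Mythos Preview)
There is a genuine gap in your treatment of the middle term. You claim that $L^{-1}T_{\mathrm{od}}G\psi$ lies in $H^t$ for some $t<\tfrac32$ and that ``from $t>1$ this term has a trace on $\mathscr{C}^{n+1}$'', hence contributes only to the finite part. This is wrong on both counts: the trace on a codimension-three set requires $H^s$ with $s>\tfrac32$, not $s>1$, and functions in $H^t$ with $t<\tfrac32$ can perfectly well diverge logarithmically on such a set (indeed $\log|r|$ in three variables is in $H^s$ exactly for $s<\tfrac32$). The paper's proof computes $R_{\mathrm{od}}=-L^{-1}T_{\mathrm{od}}G$ explicitly in the case $M=1$, $n=0$ and finds that it \emph{does} diverge logarithmically, with coefficient proportional to $\tan^{-1}\bigl(1/(2\sqrt{m(m+1)})\bigr)$. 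This is precisely the second term in the formula~\eqref{eq:alpha} for $\gamma_m$; your approach, attributing the whole logarithm to $R_{\mathrm d}=-L^{-1}T_{\mathrm d}G$, would produce only the first term $\tfrac{2\sqrt{m(m+1)}}{2m+1}$ and hence the wrong constant. The integral you write down for $\gamma_m$ reflects exactly this omission.

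For general $M$ and $n$ the situation is more delicate still: both $R_{\mathrm d}$ and $R_{\mathrm{od}}$ contain singular \emph{and} regular terms (depending on which indices in the various sums coincide with $(\mu,n+1)$), and there are additional cancellations --- the paper shows that the $(M-1)$ extra singular contributions from $R_{\mathrm d}$ are cancelled by corresponding ones from $R_{\mathrm{od}}$, so that $\gamma_m$ is independent of $M$ (see Lemma~\ref{lem:S exists}). Your decomposition into ``$T_{\mathrm d}$ gives the log, $T_{\mathrm{od}}$ is regular'' misses this structure entirely. The rest of your outline (the handling of $G\psi$ via Lemma~\ref{lem:A}, the $H^2$-piece via Sobolev trace, and the symmetry arguments) is along the right lines and matches the paper, but the core computation of the logarithmic singularity needs to be redone with both $R_{\mathrm d}$ and $R_{\mathrm{od}}$ treated on the same footing.
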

We will give an outline of the proof here and provide some of the more technical points as separate lemmas in the appendix.
\begin{proof}
 Let $R\psi:=- L^{-1}T G \psi$. Then, in view of Eq.~\eqref{eq:G_T decomp}, we have
 \begin{equation}\label{eq:G_T R}
  G_T\psi = G\psi + R\psi + \left((L^{-1}(T+c_0))^2G_T -c_0 L^{-1} G\right) \psi.
 \end{equation}
The sum of $G\psi$ and the $1/r$-term in $f_m(r)$ converges to $AG\psi=T\psi \in H^{-1}$ by Lemma~\ref{lem:A}. Since the last term in Eq.~\eqref{eq:G_T R} is an element of $H^2(\R^{3(M+n+1)})$, it has a Sobolev trace on $\{x_\mu=y_{n+1}\}$ and the usual annihilation operator is well defined on this term. We denote this evaluation by
\begin{equation}\label{eq:S reg}
S_\mathrm{reg}\psi:= \sum_{\mu=1}^M a(\delta_{x_\mu}) \left((L^{-1}(T+c_0))^2 G_T -c_0 L^{-1} G\right) \psi.
\end{equation}
It then remains to show the convergence of the sum of $R\psi$ and the logarithmic term in $f_m(r)$. 
It is sufficient to prove convergence in $\hilb\uppar{n}$ for $\psi\in H^\eps(\R^{3(M+n)})$, $\eps>0$,  convergence in $H^{-\eps}$ for $\psi \in \hilb\uppar{n}$ then follows by duality.

We will focus on the calculation of the asymptotic behaviour at $r=0$ in the case $M=1$, $n=0$ here, the full argument is provided in Lemma~\ref{lem:S exists}.
We set $R_\mathrm{d}=- L^{-1}T_\mathrm{d} G$, $R_\mathrm{od}=- L^{-1}T_\mathrm{od} G$ and start by analysing $(R_\mathrm{d}\psi)(x,y)$ at $x=y$. By Eqs.~\eqref{eq:G fourier},~\eqref{eq:Tdiag} we have, with a change of variables $\sigma=p+k$, $\rho=k -\frac{1}{(2m+1)} \sigma$,
\begin{align}
 &(R_\mathrm{d}\psi)(x,y)\notag\\
   &=\frac{1}{(2\pi)^{9/2}}\frac{1}{4\pi}\left(\frac{2m}{2m+1}\right)^{3/2}
   \int \frac{\ue^{\ui  px+ \ui ky}\sqrt{1+\frac{1}{2m+1}p+ k^2} }{L(p,k)^2}\hat\psi(p + k) \ud p \ud k\notag\\
 &=\begin{aligned}[t]
&\frac{1}{2(2\pi)^{4}}\left(\frac{2m}{2m+1}\right)^{3/2} \\
&\times\frac{1}{(2\pi)^{3/2}} 
\int \frac{\ue^{\ui \sigma s + \ui \rho r}\sqrt{1+\frac{2m+2}{2m+1}\rho^2 + b_1 \sigma^2 + b_2 \rho\sigma} }{(1+\frac{1}{2m+1} \sigma^2 +  \frac{2m+1}{2m}\rho^2 )^2}\hat\psi(\sigma) \ud \sigma \ud \rho,
\end{aligned}
\label{eq:R_d}
\end{align}
where $s$, $r$ are the centre of mass and relative coordinate and $b_1=\tfrac{4m^2+2m+1}{(2m+1)^3}$, $b_2=\tfrac{2}{(2m+1)^2}$. This acts on $\psi$ as a Fourier multiplier with the function given by the $\rho$-integral. The singularity of this integral depends only on the behaviour of the integrand at infinity, so we replace the square root in the numerator by $\sqrt{(2m+2)/(2m+1)}|\rho|$. The error we make by this replacement is integrable in $\rho$ and the evaluation at $x=y$ gives rise to a Fourier multiplier with a function that grows no faster than $|\sigma|^\eps$ (see Lemma~\ref{lem:S bound}).
We thus have to calculate the asymptotic behaviour as $r\to 0$ of
\begin{equation}\label{eq:asym T^d}
 \frac{1}{2(2\pi)^4}\left(\frac{2m}{2m+1}\right)^{3/2}\sqrt{\frac{2m+2}{2m+1}} \int \frac{\ue^{\ui \rho r}|\rho|}{(1 +\frac1{2m+1}\sigma^2+\frac{2m+1}{2m} \rho^2)^2}\ud \rho.
\end{equation}
The integral
\begin{equation*}
 \int_{\R^3} \frac{\ue^{\ui \lambda \xi}|\xi|}{(1+\xi^2)^2} \ud \xi = 4\pi \int_0^\infty  \frac{\sin(t)t^2}{(\lambda^2+t^2)^2} \ud t= 2\pi \int_0^\infty \frac{\sin(t)+t\cos(t)}{\lambda^2+t^2} \ud t
\end{equation*}
has an expansion given by $-4\pi \log(\lambda) + \mathcal{O}(1)$ as $\lambda\to 0$. We thus find that the expression~\eqref{eq:asym T^d} behaves like
\begin{equation*}
 -\frac{1}{(2\pi)^3}\left(\frac{2m}{2m+1}\right)^{7/2} \sqrt{\frac{2m+2}{2m+1}}\log\left(r\sqrt{\tfrac{2m}{2m+1}(1+\tfrac{1}{2m+1}\sigma^2)}\right),
\end{equation*}
up to remainders that are uniformly bounded in $\sigma$ as $r\to 0$.

We now turn to $R_\mathrm{od}=L^{-1}T_\mathrm{od} L^{-1}\delta_x\psi$. We have (cf.~\eqref{eq:Tod} and note that $T_\mathrm{od}$ here is the operator on the one-particle space)
\begin{align*}
 &\left(\widehat {T_\mathrm{od}L^{-1}\delta_x\psi}\right)(p,k)\\
 &=-\frac{1}{(2\pi)^{9/2}} \int \frac{1}{2+\frac{1}{2m}(p-\xi)^2+ \xi^2+k^2}\frac{\hat\psi(p+k)}{1+\frac1{2m}(p+k-\xi)^2+\xi^2} \ud \xi.
\end{align*}
Using the same variables $\rho, \sigma, s,r$ as before, this gives
\begin{align}
 &\left(R_\mathrm{od}\psi\right)(s,r)\label{eq:R od1}\\
 &= \begin{aligned}[t]-\frac{1}{(2\pi)^{6+3/2}}
 \int& \frac{\ue^{\ui \sigma s}\ue^{\ui \rho r}}
 {(1+\tfrac1{2m+1}\sigma^2+\tfrac{2m+1}{2m}\rho^2)(1 + \frac{1}{2m}(\sigma-\xi)^2 +\xi^2)}\\
& \frac{\widehat\psi(\sigma)}{2+\xi^2 + \tfrac1{2m+1}(\sigma-\xi)^2+\tfrac{2m+1}{2m}(\rho+\tfrac1{2m+1}\xi)^2} \ud \xi \ud\sigma \ud\rho.
 \end{aligned}\notag
\end{align}
Similar to the case of $R_\mathrm{d}$, this acts as a Fourier multiplier by the function given by the integral over $\xi$ and $\rho$. To simplify the calculation of this integral, we replace $(\sigma-\xi)^2$ in the denominator by $\sigma^2+\xi^2$ (the error again has better decay in $\xi$ and $\rho$, see Eq.~\eqref{eq:tau diff}).
 For the expression resulting from the denominator of the last line we then gather the terms
\begin{align*}
 \xi^2+ \tfrac1{2m+1} \xi^2 + \tfrac{2m+1}{2m}(\rho+\tfrac1{2m+1}\xi)^2
 &=\tfrac{2m+1}{2m}(\xi+\tfrac1{2m+1}\rho)^2 + \tfrac{2m+2}{2m+1} \rho^2,
\end{align*}
making apparent that the $\xi$-integral is now a convolution. This can be evaluated using the Fourier transform
\begin{align*}
 \int_{\R^3}  \frac{e^{-\ui kx} e^{-\lambda |x|}}{|x|^2}\ud x
  =4\pi \int_0^\infty  \frac{\sin(|k|t)}{|k|t} e^{-\lambda t}\ud t
 =\frac{4\pi}{|k|} \tan^{-1}\left(\frac{|k|}{\lambda}\right).
\end{align*}
The result is
\begin{align}
 &\int \frac{1}{(\beta+\frac{2m+1}{2m}\xi^2)(\gamma+\frac{2m+1}{2m}(\xi+\frac1{2m+1}\rho)^2)}\ud \xi\notag\\
 &=2\pi^2 \left(\frac{2m}{2m+1}\right)^2 \frac{2m+1}{|\rho|}\tan^{-1}\left(\frac{|\rho|}{\sqrt{2m(2m+1)}(\sqrt \beta +\sqrt \gamma)}\right),
\end{align}
where $\beta=1+\tfrac1{2m}\sigma^2$, $\gamma=2+\tfrac1{2m+1}\sigma^2 + \tfrac{2m+2}{2m+1} \rho^2$.
From this we can see that the divergence of $R_\mathrm{od}\psi$ stems from the insufficient (cubic) decay of the integrand for large $|\rho|$, as for $R_\ud$. For the analysis of this divergence, we can thus replace the $\tan^{-1}$ by its limit as $\rho\to\infty$, which equals (note the $\rho$-dependence of $\gamma$)
 \begin{equation*}
 \lim_{\rho \to \infty}\tan^{-1}\left(\frac{|\rho|}{\sqrt{2m(2m+1)}(\sqrt \beta +\sqrt \gamma)}\right) = \tan^{-1}\left(\frac{1}{2\sqrt{m(m+1)}}\right).
\end{equation*}
The asymptotics of the remaining $\ud \rho$-integral can be evaluated as for $R_\ud$. One finds that $(R_\mathrm{od}\psi)(s,r)$ has the asympotic behaviour
\begin{equation*}
 \frac{1}{(2\pi)^3} \left(\frac{2m}{2m+1}\right)^2 2m  \tan^{-1}\left(\frac{1}{2\sqrt{m(m+1)}}\right)\log (r)\psi(s),
\end{equation*}
with a convergent remainder, as $r\to0$. Bounds on the convergent part as an operator on $H^\eps$ are provided in Lemma~\ref{lem:S bound}.
Consequently,
\begin{equation*}
 (R\psi)(s,r)= -\gamma_m\psi(s)\log(r) + \mathcal{O}(1),
\end{equation*}
with a remainder that converges in $\hilb\uppar{n}$ as $r\to 0$.

This remains true with $\psi(s)$ replaced by $\psi(x)=\psi(s+\tfrac1{2m+1} r)$ after averaging $\omega=(y-x)/\abs{y-x}$, as in Eq.~\eqref{eq:av 0}.
This completes the proof for $M=1$, $n=0$.

For arbitrary $M$ and $n$, the key observation
is that, although $R$ is given in terms of sums corresponding to different combinations of creating and annihilating a particle on the planes $\{x_\nu=y_i\}$, $\nu\in \{1,\dots M\}$, $i\in \{1,\dots, n\}$ only some of the contributions are actually singular. These behave in a similar way as for $M=1$, $n=0$, see Lemma~\ref{lem:S exists} for details.

We then have
\begin{equation*}
 A_T G_T\psi = T\psi +S\psi.
\end{equation*}
Similarly to $T$, the operator $S$ is a sum of real Fourier multipliers (in this case of logarithmic growth) and integral operators. Its symmetry on $D(S)$ is shown in Lemma~\ref{lem:S sym}.
\end{proof}

\section{Proof of Theorem~\ref{thm:sa}}\label{sect:proof}

We will now prove the self-adjointness of the operator 
\begin{equation*}
H =(1-G_T)^*(K+c_0) (1-G_T) -c_0  + S 
\end{equation*}
 on the domain 
\begin{align*}
 D(H)=\left\{ \psi \in \mathscr{H} \Big\vert  \psi - G_T\psi\in D(L) \right\}.
\end{align*}
Equality of this domain and the one given in Theorem~\ref{thm:sa} was shown in Eq.~\eqref{eq:1-G D(H)}.
The remaining statements of the theorem follow from~\eqref{eq:H trafo} and~\eqref{eq:H create} in view of the results of Section~\ref{sect:dom}.

\begin{lem}
 The operator $H_0=(1-G_T)^*K(1-G_T)$ is self-adjoint on $D(H)$ and bounded from below.
\end{lem}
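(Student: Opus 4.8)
**Proof proposal for the lemma that $H_0=(1-G_T)^*K(1-G_T)$ is self-adjoint on $D(H)$ and bounded below.**

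The plan is to exhibit $H_0$ as unitarily equivalent — or at least similar via a bounded operator with bounded inverse — to a self-adjoint operator, using the factorisation through $1-G_T$. The natural candidate is to transport $K$ by the bounded bijection $V=(1-G_T)$. Concretely, I would argue as follows. By Proposition~\ref{prop:G_T}, $V=1-G_T$ is bounded and boundedly invertible on $\hilb$, with both $V$ and $V^{-1}$ preserving $D(N)$; and by Eq.~\eqref{eq:1-G D(H)} we have $D(H)=V^{-1}D(L)=V^{-1}D(K)$. So $V$ maps $D(H)$ bijectively onto $D(K)$. First I would check that $H_0$ as written is well defined on $D(H)$: for $\psi\in D(H)$, $V\psi\in D(K)=D(L)$, so $(K+c_0)V\psi\in\hilb$ (shifting by $c_0$ only for positivity bookkeeping — for this lemma one can equally work with $K$ directly since $K$ is already bounded below by Kato--Rellich, Lemma~\ref{lem:T}), and then $V^*$ applied to it lands in $\hilb$. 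Thus $H_0=V^*KV$ with domain $D(H_0)=\{\psi: V\psi\in D(K)\}=D(H)$.

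The key step is then to prove self-adjointness of $H_0=V^*KV$. Since $V$ is not unitary, $V^*KV$ need not be self-adjoint in general; but here $K$ is self-adjoint and bounded below, say $K\geq -c_0$, and the associated quadratic form $q_K(\xi)=\langle\xi,(K+c_0)\xi\rangle$ is closed on the form domain $D(K^{1/2})$. I would introduce the pulled-back form $q_0(\psi)=q_K(V\psi)=\langle V\psi,(K+c_0)V\psi\rangle$ on the form domain $D(q_0)=\{\psi: V\psi\in D(K^{1/2})\}=V^{-1}D(K^{1/2})$. Because $V$ and $V^{-1}$ are bounded, the map $\xi\mapsto V^{-1}\xi$ is a homeomorphism of $D(K^{1/2})$ (with its graph norm) onto $D(q_0)$ (with the norm $\|\psi\|^2+q_0(\psi)$), so $q_0$ is a closed, densely defined, nonnegative form. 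Its associated self-adjoint operator $\tilde H_0$ is characterised by $\psi\in D(\tilde H_0)$, $\tilde H_0\psi=\eta$ iff $\psi\in D(q_0)$ and $q_0(\psi,\chi)=\langle\eta,\chi\rangle$ for all $\chi\in D(q_0)$; unpacking $q_0(\psi,\chi)=\langle(K+c_0)V\psi,V\chi\rangle=\langle V^*(K+c_0)V\psi,\chi\rangle$ and using that $V$ has dense range, one identifies $\tilde H_0=V^*(K+c_0)V-$... i.e. $\tilde H_0+(-c_0)\cdot$... so that $\tilde H_0 = H_0 + c_0\,\mathbf{1}$ is precisely $V^*(K+c_0)V$ restricted to those $\psi$ with $V\psi\in D(K)$. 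Hence $H_0=\tilde H_0-c_0$ is self-adjoint on $D(H_0)=D(H)$, and bounded below by $-c_0$ since $q_0\geq 0$.

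A cleaner alternative, which I would probably present instead, avoids forms: show directly that $H_0\pm i$ are bijective from $D(H)$ onto $\hilb$. Surjectivity of $H_0+i$: given $\eta\in\hilb$, we must solve $V^*(K+c_0)V\psi - (c_0-i)\psi=\eta$. Write $\psi=V^{-1}\xi$; then the equation becomes $V^*(K+c_0)\xi=(c_0-i)V^{-1}\xi+\eta$, which we recast as $(K+c_0)\xi=(V^*)^{-1}\big((c_0-i)V^{-1}\xi+\eta\big)$, i.e. a fixed-point/resolvent equation $\xi=(K+c_0)^{-1}(V^*)^{-1}\big((c_0-i)V^{-1}\xi+\eta\big)$; since $(K+c_0)^{-1}$ is bounded and $c_0-i$ can be absorbed, one inverts $\mathbf{1}-(c_0-i)(K+c_0)^{-1}(V^*)^{-1}V^{-1}$ — this is invertible because $(K+c_0)^{-1}$ has norm $\leq$ (spectral gap)$^{-1}$ and, more robustly, because $K+c_0>0$ makes $V^*(K+c_0)V$ a genuinely nonnegative operator with trivial kernel of $H_0\pm i$ from the symmetry computation $\langle H_0\psi,\psi\rangle=q_K(V\psi)\in\R$ for all $\psi\in D(H)$, giving symmetry and hence that $\mathrm{ran}(H_0\pm i)$ is closed and its orthogonal complement is $\ker(H_0\mp i)=\{0\}$.

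The main obstacle, and the point needing care, is the identification of the \emph{operator domain}: the form construction only yields that $\psi\in D(\tilde H_0)$ implies $V\psi\in D(K^{1/2})$, not $V\psi\in D(K)$, so one must check that the form domain condition $V\psi\in D(K^{1/2})$ together with $V^*(K+c_0)V\psi\in\hilb$ upgrades to $V\psi\in D(K)=D(L)$; equivalently, that $D(H)$ as defined (with $V\psi\in D(L)$) is exactly the operator domain and not strictly smaller. This is where one uses that $V^*=(1-G_T)^*=1-G_T^*$ is again boundedly invertible (same geometric-series argument as Proposition~\ref{prop:G}, since $G_T^*$ maps $\hilb\uppar{n+1}$ to $\hilb\uppar{n}$ with the same $(j!)^{-1/4}$-type bounds via Corollary~\ref{cor:aL^-1} and Lemma~\ref{lem:T}), so that $V^*(K+c_0)V\psi\in\hilb$ forces $(K+c_0)V\psi=(V^*)^{-1}[\,V^*(K+c_0)V\psi\,]\in\hilb$, hence $V\psi\in D(K)=D(L)$. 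With that, $D(\tilde H_0)=D(H)$ and the lemma follows. Boundedness from below is immediate: $\langle H_0\psi,\psi\rangle=\langle(K+c_0)V\psi,V\psi\rangle-c_0\|\psi\|^2\geq -c_0\|\psi\|^2$.
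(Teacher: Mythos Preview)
Your proof is correct and rests on the same idea as the paper's: the invertibility of $V=1-G_T$ (and hence of $V^*$) on $\hilb$, together with self-adjointness of $K$ on $D(K)=D(L)$. The paper's proof is literally the one-liner ``This follows directly from the invertibility of $(1-G_T)$'', and your final paragraph is exactly the unpacking of that sentence: since $V^*$ is boundedly invertible, $V^*(K+c_0)V\psi\in\hilb$ is equivalent to $(K+c_0)V\psi\in\hilb$, i.e.\ $V\psi\in D(K)$, so the operator domain coincides with $D(H)=V^{-1}D(K)$; symmetry and lower-boundedness are then immediate. The detours through closed quadratic forms and the fixed-point resolvent equation are unnecessary --- the direct domain computation $D(H_0^*)=\{\phi:\langle K\xi,V\phi\rangle=\langle\xi,(V^*)^{-1}\eta\rangle\ \forall\,\xi\in D(K)\}=\{\phi:V\phi\in D(K)\}=D(H_0)$ via self-adjointness of $K$ and bijectivity of $V,V^*$ is all that is required.
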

\begin{proof}
This follows directly from the invertibility of $(1-G_T)$.
\end{proof}

\begin{lem}
 Let  $(S,D(S))$ be the symmetric operator on $\hilb\uppar{n}$ defined in Proposition~\ref{prop:A_T} and denote its extension to $\hilb$ by the same symbol. Then $(S,D(S))$ is infinitesimally $H_0$-bounded.
\end{lem}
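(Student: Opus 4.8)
The goal is to show that $(S, D(S))$, with $D(S)\uppar n = \hilb\uppar n \cap H^\eps(\R^{3(M+n)})$ for some fixed $\eps>0$, is infinitesimally bounded with respect to $H_0=(1-G_T)^*K(1-G_T)$. Since $H_0$ is self-adjoint and bounded from below on $D(H)=(1-G_T)^{-1}D(L)$ (previous lemma), it suffices to produce, for every $\delta>0$, a constant $C_\delta$ with
\begin{equation*}
 \norm{S\psi}_\hilb \leq \delta \norm{H_0\psi}_\hilb + C_\delta\norm{\psi}_\hilb
 \qquad \text{for all } \psi\in D(H).
\end{equation*}
The natural route is to factor through $L$: write $\psi = (1-G_T)^{-1}\xi$ with $\xi=(1-G_T)\psi\in D(L)$, so that $\norm{H_0\psi}_\hilb$ is comparable (via $H_0=(1-G_T)^*(K+c_0)(1-G_T)-c_0$ and boundedness of $(1-G_T)^{\pm 1}$, Proposition~\ref{prop:G_T}, together with the Kato--Rellich comparability of $K$ and $L$ from Lemma~\ref{lem:T}) to $\norm{L\xi}_\hilb+\norm{\xi}_\hilb$. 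So the estimate to prove reduces to bounding $\norm{S(1-G_T)^{-1}\xi}_\hilb$ by $\delta\norm{L\xi}_\hilb + C_\delta\norm{\xi}_\hilb$.

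The first step is therefore to control $S$ on $D(L^\eps)$ directly. Proposition~\ref{prop:A_T} gives that $S\colon D(L^\eps)\to\hilb$ is well defined and symmetric; I would extract from its proof (and the appendix Lemma~\ref{lem:S bound} referenced there) the quantitative bound $\norm{S\chi}_{\hilb\uppar n}\leq C\,\norm{(1+L^\eps)\chi}_{\hilb\uppar n}$, with $C$ independent of $n$ — this uniformity is exactly what Lemma~\ref{lem:S bound} is designed to supply, since $S$ is a sum of Fourier multipliers of logarithmic growth (hence $L^\eps$-bounded) plus integral operators of Moser--Seiringer type whose $n$-dependence is controlled as for $T_\mathrm{od}$. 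Using $\norm{L^\eps\chi}\leq \nu\norm{L\chi} + C_\nu\norm{\chi}$ (interpolation / Young's inequality, valid for any $\nu>0$ since $\eps<1$), this already makes $S$ infinitesimally $L$-bounded on $D(L)$.

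The second step is to transfer this from $\xi\in D(L)$ to $(1-G_T)^{-1}\xi$. Here the point is that $(1-G_T)^{-1}=\sum_{j\geq 0}G_T^j$ and, by Proposition~\ref{prop:G_T}, $L^s G_T$ is bounded for $s<\tfrac14$ with the summable operator norms $C^j/(j!)^{1/4}$ from the estimate~\eqref{eq:G^k est}. Thus $(1-G_T)^{-1}$ maps $D(L^\eps)$ boundedly into $D(L^\eps)$ for any fixed $\eps<\tfrac14$: indeed $L^\eps G_T^j \xi$ can be estimated, putting one factor $L^\eps$ on the outermost $G_T$ and nothing on the rest, giving $\norm{L^\eps G_T^j\xi}\leq C^{j}(\,\cdot\,)\norm{G_T^{j-1}\xi}$ and then iterating the plain $\hilb$-bound, with a convergent series in $j$; and simultaneously $L^\eps(1-G_T)^{-1}$ is $L$-bounded with small relative bound because the $\eps<\tfrac14$ fractional power leaves room — concretely, $\norm{L^\eps(1-G_T)^{-1}\xi} \leq \norm{L^\eps\xi}\sum_j \norm{L^\eps G_T^j L^{-\eps}}\cdot(\text{correction})$, which one bounds by $\delta\norm{L\xi}+C_\delta\norm\xi$ after one more interpolation. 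Composing with the Step-1 bound on $S$ and recalling $\norm{L\xi}+\norm\xi \lesssim \norm{H_0\psi}_\hilb+\norm\psi_\hilb$ finishes the argument.

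**Main obstacle.** The delicate point is the uniformity in the particle number $n$: $S$, like $T_\mathrm{od}$, is a sum over collision planes whose number grows linearly in $n$, so the naive triangle-inequality bound is useless. I expect to have to invoke (or re-run) the Moser--Seiringer-type argument of Lemma~\ref{lem:T_od}/Lemma~\ref{lem:S bound} to get an $n$-independent operator norm for the integral part of $S$ from $H^\eps\uppar n$ to $\hilb\uppar n$, and to be careful that the logarithmic Fourier multipliers in $S$ genuinely cost only $L^\eps$ and not more — both are asserted in Proposition~\ref{prop:A_T}, so in the write-up this lemma is essentially an application of those results combined with the elementary fractional-power interpolation and the summability of $\sum_j C^j/(j!)^{1/4}$; the only real work is bookkeeping the relative bound through the two compositions $S\circ(1-G_T)^{-1}$ and $(1-G_T)$ back onto $H_0$.
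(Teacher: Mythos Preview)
Your approach is correct and ultimately the same as the paper's, but you have made Step~2 more complicated than necessary, and the displayed inequality $\norm{L^\eps(1-G_T)^{-1}\xi} \leq \norm{L^\eps\xi}\sum_j \norm{L^\eps G_T^j L^{-\eps}}\cdot(\text{correction})$ is garbled (the factor $\norm{L^\eps\xi}$ outside the sum is wrong; the $j=0$ term gives $\norm{L^\eps\xi}$ and the $j\geq 1$ terms give a multiple of $\norm{\xi}$, not of $\norm{L^\eps\xi}$). The paper bypasses the whole discussion of $(1-G_T)^{-1}$ preserving $D(L^\eps)$ by writing the decomposition directly as
\[
S\psi = S G_T\psi + S(1-G_T)\psi.
\]
For the first term, $G_T$ maps $\hilb$ into $D(L^\eps)$ for any $\eps<\tfrac14$ (Proposition~\ref{prop:G_T}), so $SG_T$ is simply bounded on $\hilb$ by Lemma~\ref{lem:S bound}. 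For the second term, $(1-G_T)\psi\in D(L)$, so $\norm{S(1-G_T)\psi}\leq C_\eps\norm{L^\eps(1-G_T)\psi}\leq \delta\norm{L(1-G_T)\psi}+C_\delta\norm{(1-G_T)\psi}$, and then $\norm{L(1-G_T)\psi}\leq \norm{(1-G_T)^{-1}}\,\norm{H_0\psi}$ (up to the $K$ vs.\ $L$ and $c_0$ adjustments you already noted). This is exactly your plan with the redundant layer removed: your ``$(1-G_T)^{-1}$ preserves $D(L^\eps)$'' is just the identity $(1-G_T)^{-1}\xi = \xi + G_T\psi$ read backwards. Your identification of the $n$-uniformity as the only substantive input, supplied by Lemma~\ref{lem:S bound}, is correct; note that the $\log(n+1)$ term there is absorbed into the $L^\eps$-bound since $\log(n+1)\leq C_\eps\, n^\eps\leq C_\eps L^\eps$ on $\hilb\uppar{n}$.
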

\begin{proof}
 We decompose $S\psi=SG_T\psi + S(1-G_T)\psi$ and estimate both terms separately. For the action of $S$ on the singular part, $G_T\psi$, we have by Lemma~\ref{lem:S bound} and Proposition~\ref{prop:G_T}
 \begin{align*}
  \norm{SG_T \psi}_\hilb \leq C_\eps \norm{L^\eps G_T \psi}_\hilb \leq C'_\eps \norm{\psi}_\hilb,
 \end{align*}
for any $0<\eps<\tfrac14$. On the regular part we have, again by Lemma~\ref{lem:S bound},
\begin{align*}
  \norm{S(1-G_T) \psi}_\hilb 
  &\leq C_\eps \norm{L^\eps (1-G_T) \psi}_\hilb\\
  &\leq \delta \norm{ L (1-G_T) \psi}_\hilb + C_{\delta} \norm{(1-G_T) \psi}\\
  &\leq \delta \|(1-G_T)^{-1} \| \norm{H_0 \psi}_\hilb + C_\delta  \norm{(1-G_T) }\norm{\psi}_\hilb,
\end{align*}
for arbitrary $\delta>0$. This proves the claim.
\end{proof}

In view of Eq.~\eqref{eq:H trafo} this proves that $H$ is self-adjoint on $D(H)$, by the Kato-Rellich theorem. It is also immediate that $H$ is bounded from below.

\begin{rem}\label{rem:Dressing}
The transformation $(1-G_T)$ achieves something similar to the Gross transformation in the Nelson model, in that it provides a transformation relating the interacting operator and a perturbation of the free operator. 
 Applying the inverse of $(1-G_T)$ to Eq.~\eqref{eq:H trafo} we have explicitly
 \begin{equation*}
  (1-G_T^*)^{-1}H (1-G_T)^{-1}=L+T+c_0 - c_0 (1-G_T^*)^{-1}(1-G_T)^{-1}+(1-G_T^*)^{-1}S(1-G_T)^{-1},
 \end{equation*}
 and our proof shows that this is indeed s self-adjoit operator on $D(L)$.
The key difference is that the Gross transformation is a Weyl operator, and thus constructed starting from a one-particle function, while $G_T$ contains in $T$ an $n$-body interaction that cannot be expressed in such a way.

Applying the corresponding transformation to the model with an unltraviolet cutoff $\Lambda>0$ allows for a reformulation of our result in the language of renormalisation, as in~\cite{LaSch18}.
More precisely, one introduces the cutoff interaction $v_\Lambda$ and the associated objects (here for $M=1$)
\begin{align*}
 G_\Lambda &= - L^{-1} a^*(v_\Lambda),\\
 T_\Lambda &= - a(v_\Lambda)L^{-1} a^*(v_\Lambda)+ \frac{m\Lambda}{\pi^2 (2m+1)},  \\
 K_\Lambda &= L + T_\Lambda,\\
 G_{T_\Lambda} &=- (K_\Lambda + c_0)^{-1}a^*(v_\Lambda).
\end{align*}
Then, by the same algebra that leads to Eqs.~\eqref{eq:H create},\eqref{eq:H trafo},
\begin{equation*}
 L+ a^*(v_\Lambda) + a(v_\Lambda)= (1-G_{T_\Lambda})^* K_\Lambda (1-G_{T_\Lambda}) + S_\Lambda - c_0 - \frac{m \Lambda}{\pi^2 (2m+1)},
\end{equation*}
with
\begin{equation*}
 S_\Lambda 
 =- a(v_\Lambda)( K_\Lambda^{-1}-L^{-1} ) a^*(v_\Lambda)=a(v_\Lambda)K_\Lambda^{-1} T_\Lambda L^{-1}  a^*(v_\Lambda).
\end{equation*}
Along the lines of~\cite[Sect.3.4]{LaSch18} one can then obtain a renormalisation procedure as follows.
From our estimates on $T$ and $S$ one deduces that, as $\Lambda \to \infty$, $T_\Lambda$ converges to $T$, and  $S_\Lambda - c \gamma_m \log\Lambda$ converges to $S$ (for some $c\in \R$), strongly as operators from $D(T)$, respectively $D(S)$, to $\hilb$.
Together with similar convergence results for $G_\Lambda$, $G_{T_\Lambda}$, one then obtains strong resolvent convergence of
\begin{equation*}
  L+ a^*(v_\Lambda) + a(v_\Lambda)+\frac{m \Lambda}{\pi^2 (2m+1)} - c \gamma_m \log\Lambda 
\end{equation*}
to the operator $H$.
This shows exactly the divergence in $\Lambda$, with a linear and a logarithmic term, observed numerically in~\cite{grusdt2015} (confirming this observation in view of earlier work~\cite{vlietinck2015}, where convergence without the logarithmic term was claimed). Additionally, our analysis for $M>1$ shows that the logarithmic term is proportional to $M$, even though one might naively expect it to be of order $M^2$, given the form of $S_\Lambda$.
\end{rem}

\appendix
\section{Technical Lemmas}

In this appendix we spell out the details concerning the bounds on $T$ and $S$.
These bounds are obtained using variants of the Schur test, similar to those derived in~\cite{MoSe17}, for sums of integral operators that give control on the growth in $n$ as the number of summands increases. Applying the basic Schur test to every summand would yield a bound that grows like the number of summands. In the following lemma we use the symmetry of the functions in $\hilb\uppar{n}$ to obtain an improvement that is reflected in the order of the sum and the supremum in the constants $\Lambda$, $\Lambda'$ below. In the cases relevant to us, this will lead to bounds that are independent of the number of summands. We also remark that the same lemma holds for antisymmetric wavefunctions, since only the symmetry of $\abs{\psi}^2$ is used.

\begin{lem}\label{lem:Schur}
 Let $\ell < n$ and $d$ be positive integers, $M\in \N$, and
 \begin{equation*}
  \mathscr{J}:=\left\{ J:\{1,\dots,\ell\}\to \{1,\dots,n\}\Big\vert J \text{ one-to-one}\right\}.
 \end{equation*}
For every $J\in \mathscr{J}$ let $\kappa_J\in L^1_\mathrm{loc}(\R^{dM}\times \R^{dn}\times \R^{d\ell})$ be a real, non-negative function and let $F:\R^{d\ell}\times \R^{d\ell}\to \R^{dM}$ be measurable. 

Define the operator $I:\mathscr{D}(\R^{d(M+n)})\to \mathscr{D}'(\R^{d(M+n)})$ by
 \begin{equation*}
  (I\psi)(P,Q)=\sum_{J\in \mathscr{J}} \int_{\R^{d\ell}} \kappa_J(P,Q,R) \psi(P+F(Q_J,R),\hat Q_J, R) \ud R,
 \end{equation*}
where $Q_J=(q_{J(1)},\dots,q_{J(\ell)})=(q_{j_1},\dots, q_{j_\ell})$, and $\hat Q_J$ denotes the vector in $\R^{d(n-\ell)}$ formed by $q_1, \dots, q_n\in \R^d$ without the entries of $Q_J$. Denote by $\kappa_J^t(Q,R)$ the kernel obtained from $\kappa_J(Q,R)$ by exchanging $r_i$ with $q_{j_i}$ for $i=1, \dots,\ell$.

If there exists a positive function $g\in L^\infty_\mathrm{loc}(\R^d)$ for which the quantities
\begin{equation*}
\Lambda:= \sup_{Q\in \R^{d\ell}}\sum_{J\in \mathscr{J}}   
\int_{\R^{dn}} \prod_{i=1}^\ell  \frac{g(q_{j_i})}{g(r_i)}\sup_{P\in \R^{dM}} \kappa_J (P,Q,R)  \ud R
\end{equation*}
and
\begin{equation*}
\Lambda':= \sup_{Q\in \R^{dn}}\sum_{J\in \mathscr{J}} 
\int_{\R^{d\ell}}\prod_{i=1}^\ell  \frac{g(q_{j_i})}{g(r_i)} \sup_{P\in \R^{dM}}\kappa_J^t(P,Q,R)  \ud R
\end{equation*}
are finite, then $I$ extends to a bounded operator from $\hilb\uppar{n}=L^2(\R^{dM})\otimes L^2(\R^{d})^{\otimes_{\mathrm{sym}} n}$ to $L^2(\R^{d(M+n)})$ with norm at most $\sqrt{\Lambda\Lambda'}$.
\end{lem}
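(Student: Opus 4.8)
The plan is to bound $I$ by duality: for $\psi\in\hilb\uppar{n}$ it suffices to estimate the sesquilinear form $\langle\phi,I\psi\rangle$ against an \emph{arbitrary} $\phi\in L^2(\R^{d(M+n)})$ (since $I\psi$ need not be symmetric), and by density I may take $\phi$ and $\psi$ to be test functions so that all applications of Fubini and all changes of variables below are legitimate. The argument is a weighted Schur test adapted to the sum structure: for each fixed $J\in\mathscr{J}$ I split the non-negative kernel of the $J$-th summand as $\kappa_J=(\kappa_J\,w_J)^{1/2}(\kappa_J\,w_J^{-1})^{1/2}$ with the weight $w_J(Q,R)=\prod_{i=1}^\ell g(q_{j_i})/g(r_i)$, and apply the Cauchy--Schwarz inequality. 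This produces, for each $J$, the product of a ``$\phi$-factor'' $A_J=\int|\phi(P,Q)|^2\,\kappa_J(P,Q,R)\,w_J(Q,R)$ and a ``$\psi$-factor'' $B_J=\int|\psi(P+F(Q_J,R),\hat Q_J,R)|^2\,\kappa_J(P,Q,R)\,w_J(Q,R)^{-1}$, all integrals over $(P,Q,R)$.

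For the $\phi$-factor I would bound $\kappa_J$ by $\sup_{P}\kappa_J$, carry out the $R$-integration, and then sum over $J$; because in the definition of $\Lambda$ the supremum over $Q$ is taken outside the sum over $J$, this gives exactly $\sum_{J}A_J\le\int|\phi(P,Q)|^2\,\Lambda\,dP\,dQ=\Lambda\|\phi\|^2$. For the $\psi$-factor I would again replace $\kappa_J$ by $\sup_{P}\kappa_J(P,Q,R)$, so that the kernel no longer depends on $P$; the $P$-integral then hits only $|\psi(P+F(Q_J,R),\hat Q_J,R)|^2$, and the translation $P\mapsto P-F(Q_J,R)$, which has unit Jacobian since $F$ is independent of $P$, turns it into $\int|\psi(P,\hat Q_J,R)|^2\,dP$ times $\sup_{P}\kappa_J(P,Q,R)\,w_J^{-1}$.

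The key step is then to bring this into the form of $\Lambda'$ by relabelling the $Q$-integration so that the extracted block $Q_J=(q_{j_1},\dots,q_{j_\ell})$ and the integration variables $R=(r_1,\dots,r_\ell)$ exchange roles. Under this exchange $\kappa_J$ becomes $\kappa_J^t$ and the weight $w_J^{-1}$ becomes $w_J$ in the new variables; moreover $\psi$ ends up evaluated at the $n$-tuple of $q$-variables consisting of $\hat Q_J$ together with the new block, and here the symmetry of $|\psi|^2$ is essential, since it lets me reorganise these $n$ arguments into the standard tuple $\tilde Q$ with no combinatorial factor, so that $\int|\psi|^2\,dP$ over the kept variables becomes integration of $|\psi(\tilde P,\tilde Q)|^2$ over all of $\R^{d(M+n)}$. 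Integrating out the remaining $\ell$ variables, bounding by $\sup_{P}\kappa_J^t$ and summing over $J$ reproduces the integrand of $\Lambda'$ and gives $\sum_{J}B_J\le\Lambda'\|\psi\|^2$. Finally, a Cauchy--Schwarz inequality on the sum over $J$ yields $|\langle\phi,I\psi\rangle|\le\sum_{J}A_J^{1/2}B_J^{1/2}\le(\sum_{J}A_J)^{1/2}(\sum_{J}B_J)^{1/2}\le\sqrt{\Lambda\Lambda'}\,\|\phi\|\,\|\psi\|$, and taking the supremum over $\|\phi\|\le1$ gives $\|I\psi\|\le\sqrt{\Lambda\Lambda'}\,\|\psi\|$, which is the claim.

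The only real obstacle is the bookkeeping in the relabelling step: carefully tracking which $d$-dimensional blocks play the role of ``kept'', ``annihilated'' and ``created'' variables, checking that exchanging $Q_J$ with $R$ indeed produces precisely the transposed kernel $\kappa_J^t$ and converts $w_J^{-1}$ into $w_J$, and confirming that the symmetry of $\psi$ (only of $|\psi|^2$, so that the same proof applies to antisymmetric wavefunctions) is exactly what prevents the $|\mathscr{J}|$ summands from contributing a factorial loss. Everything else is the classical Schur test applied summand by summand.
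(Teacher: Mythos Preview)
Your proposal is correct and follows essentially the same route as the paper: a weighted Schur test with weight $w_J(Q,R)=\prod_i g(q_{j_i})/g(r_i)$, the translation in $P$ and the exchange $Q_J\leftrightarrow R$ for the $\psi$-factor, and the crucial use of the symmetry of $|\psi|^2$ so that the sum over $J$ can be taken \emph{inside} the supremum over $Q$ rather than outside. The only cosmetic difference is that the paper obtains the bilinear estimate via the elementary inequality $0\le|\delta a-\delta^{-1}b|^2$ (applied to real and imaginary parts separately) and then optimises over $\delta$, whereas you apply Cauchy--Schwarz directly in the $(P,Q,R)$-integral and once more in the sum over $J$; both yield exactly $\sqrt{\Lambda\Lambda'}$.
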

\begin{proof}
Since $\kappa_J$ is non-negative, we have for any $\phi,\psi \in \mathscr{D}(\R^{d(M+n)})$ with $\phi,\psi\in \hilb\uppar{n}$ and any $\delta>0$
\begin{align*}
 0\leq \sum_{J\in \mathscr{J}} \int& \left \vert\delta \phi(P,Q)\prod_{i=1}^\ell\sqrt{\frac{g(q_{j_i})}{ g(r_i)}} - \frac1\delta \psi(P+F(Q_J,R),\hat Q_J,R)
\prod_{i=1}^\ell\sqrt{\frac{ g(r_i)}{g(q_{j_i})}}
\right\vert^2  \\
&\times \kappa_J(P,Q,R)\ud P\ud Q \ud R.
\end{align*}
After expanding the square, the quadratic term in $\phi$ can be estimated by
\begin{align*}
   \delta^2\int \bigg(\sum_{J\in \mathscr{J}}\int \kappa_J(P,Q,R)\prod_{i=1}^\ell\frac{g(q_{j_i})}{g(r_i)}\ud R \bigg) \vert \phi(P,Q) \vert^2 \ud P\ud Q \leq \delta^2 \Lambda \Vert \phi \Vert^2_{L^2}.
\end{align*}
For the term with $\abs{\psi}^2$, changing variables to $\hat Q'_J=\hat Q_J$, $Q'_J=R$, $R'=Q_J$, and
using the permutation-symmetry of $\psi$, gives
\begin{align*}
 & \sum_{J\in \mathscr{J}} \int \kappa_J(P,Q,R) \vert\psi(P+F(Q_J,R),\hat Q_J,R) \vert^2 \prod_{i=1}^\ell\frac{g(r_i)}{  g(q_{j_i})}\ud P\ud Q \ud R\\
  &= \sum_{J\in \mathscr{J}} \int \kappa_J^t(P,Q',R') \vert\psi(P+F(R',Q_J'), Q') \vert^2 \prod_{i=1}^\ell \frac{ g(q'_{j_i})}{ g(r_i')}\ud P\ud Q' \ud R'.
 \end{align*}
 Using first the Hölder inequality in $P$ and then changing variables to $P'=P+F(R',Q_J')$, we can bound this by
 \begin{align*}
  &\sum_{J\in \mathscr{J}} \int \left(\sup_{P\in \R^{dM}} \kappa_J^t(P,Q',R')\right) \vert\psi(P', Q') \vert^2 \prod_{i=1}^\ell \frac{ g(q'_{j_i})}{ g(r_i')}\ud P'\ud Q' \ud R'\\
  &\leq \Lambda' \norm{\psi}_{\hilb\uppar{n}}^2.
   \end{align*}

Together, these estimates imply that
\begin{align*}
  &2 \mathrm{Re}(\langle \phi, I \psi \rangle) \\
  &= \sum_{J\in \mathscr{J}} \int \kappa_J(P,Q,R) 2 \mathrm{Re}(\overline{\phi}(P,Q)\psi(P+F(Q_J,R),\hat Q_J,R)) \ud P\ud Q \ud R \\
  &\leq \delta^2 \Lambda \Vert \phi \Vert^2_{L^2} + \frac{\Lambda'}{\delta^2}\Vert \psi \Vert^2_{\hilb\uppar{n}}.
\end{align*}
The same holds true for the negative of the real part of $\langle \phi, I\psi \rangle$, by replacing $\psi$ by $-\psi$, and the imaginary part, replacing $\psi$ by $\ui \psi$.
This yields
\begin{equation*}
\sup_{\substack{\|\phi\|_{L^2}=1 \\ \|\psi\|_{\hilb\uppar n}=1}} |\langle \phi, I \psi\rangle | \leq \frac{1}{ 2} \sqrt{2(\delta^2 \Lambda^2 + \Lambda'^2/\delta^2)} ,
\end{equation*}
so $I$ is bounded from $\hilb\uppar{n}$ to $L^2(\R^{d(M+n)})$. Choosing $\delta=\sqrt{\Lambda'/\Lambda}$ gives the bound on the norm $\norm{I}\leq \sqrt{\Lambda\Lambda'}$.
\end{proof}

Since the operator $T$ of Lemma~\ref{lem:T} is not bounded on $\hilb\uppar{n}$ we need to slightly adapt the technique of Lemma~\ref{lem:Schur} for this case. 

\begin{lem}\label{lem:T_od}
There exists a constant $C$ such that for all $n\in \N$ and $\psi\in \hilb\uppar{n} \cap H^{1}(\R^{3(M+n)})$
 \begin{equation*}
\norm{T_\mathrm{od}\psi}_{\hilb\uppar{n}} \leq C  \norm{\psi}_{H^1(\R^{3(M+n)})}.
 \end{equation*}
\end{lem}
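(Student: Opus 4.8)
The plan is to prove the bound on $T_\mathrm{od}$ by applying (a variant of) the Schur test of Lemma~\ref{lem:Schur}, with the key point being to exploit the $H^1$-norm on the right to tame the numerators in~\eqref{eq:Tod}. Looking at the kernel of $T_\mathrm{od}$ in~\eqref{eq:Tod}, each summand transports a variable $k_i$ (or none, in the $\mu\neq\nu$ terms) to the new variable $\xi$ and shifts the $x$-momenta accordingly; the structure is exactly that of the operator $I$ in Lemma~\ref{lem:Schur} with $d=3$ and $\ell=1$ (transporting a single $y$-momentum). The obstruction to directly applying Lemma~\ref{lem:Schur} is that the resulting kernel $\kappa_J$ is not integrable in the transferred variable: for large $\xi$ the integrand in~\eqref{eq:Tod} decays only like $|\xi|^{-2}$ from each resolvent, i.e. the $\ud\xi$-integral of $\sup_P\kappa_J$ diverges logarithmically. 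This is precisely why $T_\mathrm{od}$ is unbounded on $\hilb\uppar{n}$ and why we must use the $H^1$-regularity of $\psi$.

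The fix is to split each resolvent factor using $1 \le \big(\tfrac{L(P-e_\mu\xi,\hat K_i,\xi)}{n+1+\tfrac1{2m}(P-e_\mu\xi)^2+K^2+\xi^2}\big)$, and more importantly to absorb a factor $\langle p_\mu \rangle$ or $\langle\xi\rangle$ into $\psi$: write $\widehat\psi(\dots,\xi) = \langle(\dots,\xi)\rangle^{-1}\,\langle(\dots,\xi)\rangle\widehat\psi(\dots,\xi)$, where $\langle\cdot\rangle = (1+|\cdot|^2)^{1/2}$, so that $\langle\cdot\rangle\widehat\psi$ has $L^2$-norm controlled by $\norm{\psi}_{H^1}$. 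After this substitution one applies the Cauchy--Schwarz/Schur argument of Lemma~\ref{lem:Schur} to the \emph{modified} kernel $\tilde\kappa_J$ obtained by multiplying $\kappa_J$ with the extra $\langle\cdot\rangle^{-1}$ weight coming from the shifted argument; the weight gives an extra power of decay which makes the $\ud\xi$-integral converge. Concretely I would choose the auxiliary function $g$ in Lemma~\ref{lem:Schur} of the form $g(q)=\langle q\rangle^{s}$ for a suitable $s\in(0,\tfrac12)$ (or, following Moser--Seiringer, $g(q)=|q|$ with care near the origin), so that the ratios $g(q_{j_i})/g(r_i)$ combine with the extra weight to produce both finiteness of $\Lambda$ and $\Lambda'$ \emph{and} uniformity in $n$. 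The uniformity in $n$ is the same mechanism as in Lemma~\ref{lem:Schur}: the sum over the $n$ summands is compensated by the fact that the supremum over the transferred variable is taken after summing, and the resolvent denominators contain $n+1$, so rescaling $\xi\mapsto\xi\sqrt{n+1}$ (as in the proof of Lemma~\ref{lem:G bound}) converts the growing number of terms into a bounded integral.

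In more detail, the steps are: (1) reduce to $M=1$ as in Lemma~\ref{lem:G bound}, bounding the finite sum over $x$-particles by Cauchy--Schwarz (the $\mu\neq\nu$ terms of~\eqref{eq:Tod}, of which there are $M(M-1)$, are handled the same way with $\ell=0$, i.e. as a genuine Fourier multiplier of at most linear growth, controlled by $\norm{\psi}_{H^1}$); (2) insert the weight $\langle\cdot\rangle^{\pm1}$ to trade one power of momentum against the $H^1$-norm; (3) identify the resulting operator as an instance of $I$ from Lemma~\ref{lem:Schur} with $d=3$, $\ell=1$, the transported variable being $y_i$, and $\kappa_J$ the product of the two resolvent denominators times the weight; (4) estimate $\Lambda$ and $\Lambda'$ by explicit (spherically symmetric) integrals, using the rescaling $\xi \mapsto \sqrt{n+1+\hat K^2}\,\xi$ to extract the $n$-independent bound, exactly as in the passage from the first to the second displayed inequality in the proof of Lemma~\ref{lem:G bound}. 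The main obstacle is bookkeeping: choosing the weight exponent $s$ and the function $g$ so that \emph{both} $\Lambda<\infty$ and $\Lambda'<\infty$ hold simultaneously while keeping the constants independent of $n$; the borderline logarithmic divergence in $\xi$ means there is essentially no room to spare, so the weight must be placed on exactly the right variable, and one has to check that the transpose kernel $\kappa_J^t$ (which exchanges the roles of the created and annihilated momenta, and hence of the two resolvent factors) satisfies the same bound — this is where the approximate symmetry of the two denominators in~\eqref{eq:Tod} under $\xi\leftrightarrow k_i$ is used.
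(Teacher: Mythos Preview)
Your overall strategy is correct and matches the paper's: a weighted Cauchy--Schwarz/Schur-type estimate in which an extra power of momentum is shifted onto $\widehat\psi$ to produce the $H^1$-norm, with the weight in the transported variable chosen to make the remaining integrals converge and to compensate the sum over $i$ via the $K^2$ in the denominator. The paper does not literally invoke Lemma~\ref{lem:Schur} but instead runs the same elementary inequality by hand in an asymmetric form (different powers of $\kappa$ and different weights on the $\phi$- and $\psi$-terms); the $n$-independence is obtained not by rescaling but by the observation that $\sum_{i=1}^n(1+k_i^2)=n+K^2$ is dominated by $n+1+K^2$ from the resolvent, and that $\sum_i(\tfrac1n+k_i^2)^{1+\eps}\le (1+K^2)^{1+\eps}$ by comparison of $\ell^p$-norms. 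Your rescaling heuristic captures the same cancellation.

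There is one genuine slip: the $\mu\neq\nu$ terms in~\eqref{eq:Tod} are \emph{not} Fourier multipliers. The argument of $\widehat\psi$ is $P-e_\mu\xi+e_\nu\xi$, which depends on the integration variable $\xi$, so these are honest integral operators (now shuffling $x$-momenta rather than $y$-momenta). The paper treats them by the ordinary Schur test, and the required $\sup_P$ bound forces an application of the Hardy--Littlewood rearrangement inequality to deal with the shift $p_\nu+\xi$ in the weight; your plan should be amended accordingly. Apart from this, your sketch is a faithful outline of the paper's proof.
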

\begin{proof}
 Recall the definition of $T_\mathrm{od}$ from Eq.~\eqref{eq:Tod}
 \begin{align}
  \widehat {T_\mathrm{od} \psi}(P,K)
  =-\frac{1}{(2\pi)^{3}}\bigg( 
&\sum_{\mu,\nu=1}^M \sum_{i=1}^n 
\int \frac{\hat \psi(P-e_\mu \xi + e_\nu k_i,\hat K_i,\xi)}
{L(P-e_\mu \xi, K,\xi)} \ud \xi \label{eq:T_od 1}\\
& + \sum_{\mu\neq \nu=1}^M \int \frac{\hat \psi(P-e_\mu \xi + e_\nu \xi),K)}
{L(P-e_\mu \xi, K,\xi)}\ud \xi\bigg).\label{eq:T_od 2}
 \end{align}
 Since we are not interested in the exact dependence of the norm of $T_\mathrm{od}$ on $M$ and $m$ we will just estimate the operator for fixed indices $\mu, \nu$ and $m=\tfrac12$. Set $\kappa(K,\xi)=\frac{1}{n+1+K^2+\xi^2}$.
 To bound the sum over $i$, we argue as in Lemma~\ref{lem:Schur} and obtain for $\psi\in \hilb\uppar{n} \cap H^{1}(\R^{3(M+n)})$, $\phi \in \hilb\uppar{n}$  and $0<\eps<\frac12$ the inequality
\begin{align*}
 &\sum_{i=1}^n  \int \frac{2\mathrm{Re}\left(\overline \phi(P,K) \hat \psi(P-e_\mu \xi + e_\nu k_i),\hat K_i,\xi)\right)}
{n+1+ (P-e_\mu \xi )^2 +K^2 + \xi^2}\ud P \ud K \ud \xi\\
&\leq  \sum_{i=1}^n \int \kappa(K,\xi)
\begin{aligned}[t]
\bigg(&\kappa(K,\xi)^{-1/2+\eps} |\hat \psi|^2(P-e_\mu \xi + e_\nu k_i),\hat K_i,\xi) 
\frac{(\frac1n+\xi^2)^{1+\eps}}{(1+k_i^2)}\\
&+ \kappa(K,\xi)^{1/2-\eps} | \phi|^2(P,K) \frac{(1+k_i^2)}{{(\frac1n+\xi^2)}^{1+\eps}} \bigg)\ud P \ud K \ud \xi.
\end{aligned}
\end{align*}
The term with $|\phi|^2$ is bounded by
\begin{align*}
 & \norm{\phi}^2_{\hilb\uppar{n}} \sup_{K\in \R^{3n}}   \sum_{i=1}^n (1+k_i^2)\int \kappa(K,\xi)^{3/2-\eps} (1+\xi^2)^{-1-\eps} \ud \xi\\
 &\leq   \norm{\phi}^2_{\hilb\uppar{n}} \sup_{K\in \R^{3n}} \frac{n+K^2}{(n+1 + K^2)} \int \frac{1}{(1+\eta^2)^{3/2-\eps}\eta^{2(1+\eps)}} \ud \eta\\
 &\leq  \Lambda \norm{\phi}^2_{\hilb\uppar{n}} ,
\end{align*}
where $\Lambda$ is clearly independent of $n$.
In the $i$-th term with $|\hat \psi|^2$ we perform the change of variables $(k_i, \xi) \mapsto (\eta, k_i)$ which gives a bound by
\begin{align*}
 & \sum_{i=1}^n \int |\hat\psi |^2(P,K)  \kappa(K,\eta)^{1/2+\eps}\frac{(\frac1n+k_i^2)^{1+\eps}}{1+\eta^2} \ud P \ud K \ud \eta\\
 &\leq  \sum_{i=1}^n \norm{\sqrt{1+K^2}\psi}^2_{\hilb\uppar{n}} \sup_{K\in \R^{3n}}  
 \frac{\sum_{i=1}^n(\frac1n + k_i^2)^{1+\eps}}{(1 + K^2)^{1+\eps}} 
 \int \frac{1}{(1+\eta^2)^{1/2+\eps}\eta^{2}} \ud \eta\\
 &\leq  \Lambda' \norm{\psi}^2_{H^1}. 
\end{align*}
Here $\Lambda'$ is independent of $n$ because the $(1+\eps)$-norm of the vector $(1/n+q_1^2,...,1/n+q_n^2)\in \R^n$ is bounded by its $1$-norm.
By the argument of Lemma~\ref{lem:Schur} this proves that the sum over $i$ in $T_\mathrm{od}$,~\eqref{eq:T_od 1}, defines an operator that is bounded from $H^1$ to $\hilb\uppar{n}$ by $\sqrt{\Lambda \Lambda'}$, which is independent of $n$. 

The remaining operator~\eqref{eq:T_od 2} we have to estimate is
\begin{align*}
 \psi\mapsto \int \frac{\hat \psi(P-e_\mu \xi + e_\nu \xi),K)}
{n+1+(P-e_\mu \xi )^2 +K^2 + \xi^2}\ud \xi
\end{align*}
with $\mu\neq \nu$. Since the number of these terms is independent of $n$ the necessary bound can be obtained by the standard Schur test.
Explicitly, we have, as above,
\begin{align*}
& \int \frac{2\mathrm{Re}\left(\overline{\phi}(P,K) \hat \psi(P-e_\mu \xi + e_\nu \xi),K)\right)}
{n+1+(P-e_\mu \xi )^2 +K^2 + \xi^2}\ud \xi \ud P \ud K\\
&\leq \int \frac{1}{1+(P-e_\mu \xi )^2 + \xi^2}
\bigg(
\begin{aligned}[t]
&|\hat\psi|^2(P-e_\mu\xi + e_\nu \xi, K) \frac{(1+(p_\nu+\xi)^2)^{1+\eps}}{(1+p_\mu^2)^{1/2+\eps}}\\
&+\abs{\phi}^2(P,K) \frac{(1+p_\mu^2)^{1/2+\eps}}{(1+(p_\nu+\xi)^2)^{1+\eps}}\bigg)\ud \xi \ud K \ud P.
 \end{aligned}
\end{align*}
The term with $\phi^2$ is bounded using
\begin{align*}
 &\int \frac{\abs{\phi}^2(P,K)}{1+(P-e_\mu \xi )^2 + \xi^2} \frac{(1+p_\mu^2)^{1/2+\eps}}{(1+(p_\nu+\xi)^2)^{1+\eps}} \ud \xi \ud P \ud K\\
 &\leq \norm{\phi}_{\hilb\uppar{n}}^2 \sup_{P\in \R^{3M}} (1+p_\mu^2)^{1/2+\eps} \int \frac{\ud \xi}{\left(1+\frac12 p_\mu^2 +2 (\xi-\frac12 p_\mu)^2\right)(1+(p_\nu+\xi)^2)^{1+\eps}}  \\
 &\leq \norm{\phi}_{\hilb\uppar{n}}^2 \sup_{P\in \R^{3M}} (1+p_\mu^2)^{1/2+\eps} \int \frac{\ud \xi}{\left(1+\frac12 p_\mu^2 +2 \xi^2\right)(1+\xi^2)^{1+\eps}}\\
&\leq \norm{\phi}_{\hilb\uppar{n}}^2 2\int \frac{\ud \eta}{(1+4\eta^2)\eta^{2+2\eps}},
\end{align*}
where we have used the Hardy-Littlewood inequality. 
After changing variables to $Q=P-e_\mu \xi + e_\nu \xi$ the argument for the term with $|\psi|^2$ is essentially the same, and we conclude as before.
\end{proof}

\begin{lem}\label{lem:S exists}
 Let $\eps>0$, $\psi\in \hilb\uppar{n} \cap H^\eps(\R^{3(M+n)})$ and $R\psi=- L^{-1}T G \psi$. Then 
 \begin{align*}
 S_\mathrm{sing}\psi:= \lim_{r\to 0}\sum_{\mu=1}^M
\frac{1}{4\pi}\int\limits_{S^2}
\left(\sqrt{n+1}\left(R\psi\right)(X,Y,x_\mu+r\omega)+\gamma_m \log(r)\psi(X,Y)\right) \ud \omega
 \end{align*}
 exists in $\hilb\uppar{n}$. 
\end{lem}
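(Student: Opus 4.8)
The plan is to reduce the statement to the explicit asymptotic computation already carried out for $M=1$, $n=0$ in the proof of Proposition~\ref{prop:A_T}, by isolating the handful of genuinely singular contributions to $R\psi\uppar n$ near a fixed collision plane $\{y_{n+1}=x_\mu\}$ and controlling the (growing number of) remaining contributions uniformly in $n$. First I would write $R=R_\mathrm{d}+R_\mathrm{od}$ with $R_\mathrm{d}=-L^{-1}T_\mathrm{d}G$, $R_\mathrm{od}=-L^{-1}T_\mathrm{od}G$, and expand $G\psi\uppar n=-\tfrac{1}{\sqrt{n+1}}\sum_{\nu=1}^M\sum_{i=1}^{n+1}L^{-1}\delta_{x_\nu}(y_i)\psi\uppar n(X,\hat Y_i)=:\sum_{\nu,i}G_{\nu,i}\psi\uppar n$. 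For a fixed $\mu$, the summand $G_{\nu,i}\psi\uppar n$ diverges (like $|x_\nu-y_i|^{-1}$) only on $\{x_\nu=y_i\}$; since $T_\mathrm{d}$ is a Fourier multiplier and $L^{-1}$ is smoothing, $R_\mathrm{d}G_{\nu,i}\psi\uppar n$ is still singular only there, and the singular support of $R_\mathrm{od}G_{\nu,i}\psi\uppar n$ is contained in the union of the planes entering $T_\mathrm{od}G_{\nu,i}$. Hence near $\{y_{n+1}=x_\mu\}$ only the terms built from $G_{\mu,n+1}\psi\uppar n$ can carry a logarithmic divergence; among these, $T_\mathrm{d}^\mu$ together with the part of $T_\mathrm{od}$ that hops the particle created at $x_\mu$ back to $x_\mu$ reproduces exactly the $M=1$, $n=0$ situation, while the remaining contributions — those of $T_\mathrm{d}^\lambda$ for $\lambda\neq\mu$ and of the non-matching pieces of $T_\mathrm{od}$ — either give regular boundary values or have cancelling logarithmic parts, so that no $M$- or $n$-dependence of the logarithmic coefficient survives (cf.~Remark~\ref{rem:cancellations}).

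For the regular terms I would bound the sphere-average $\tfrac{1}{4\pi}\int_{S^2}\sqrt{n+1}\,(\cdot)(X,Y,x_\mu+r\omega)\,\ud\omega$ as an operator from $\hilb\uppar n\cap H^\eps(\R^{3(M+n)})$ to $\hilb\uppar n$ uniformly in $n$, using the Schur-type estimates of Lemma~\ref{lem:Schur} in the form already employed for Lemma~\ref{lem:T_od} and spelled out in Lemma~\ref{lem:S bound}. The mechanism is the same: each term is an integral operator with a non-negative kernel, the permutation symmetry of $\psi\in\hilb\uppar n$ allows one to trade the summation index against an integration variable, and the extra regularity $\psi\in H^\eps$ supplies a factor $(1+|\text{momentum}|)^\eps$ that makes the relevant $\xi$- and $\rho$-integrals converge; the number of summands grows linearly in $n$ but the resulting bound does not. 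Convergence of these averages as $r\to0$ then follows by dominated convergence, since the $r$-dependent kernel differs from the $r=0$ kernel only by the oscillatory factor $\ue^{\ui(\cdot)r}$, exactly as at the end of the proof of Lemma~\ref{lem:A}.

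For the singular terms I would compute the expansion in $r$ by the Fourier method of the proof of Proposition~\ref{prop:A_T}: after passing to centre-of-mass and relative coordinates for the pair $(x_\mu,y_{n+1})$, each such term acts in the remaining variables as a Fourier multiplier (for the $T_\mathrm{d}$-part) or a one-dimensional convolution (for the $T_\mathrm{od}$-part), and the dependence on $r$ enters only through integrals of the type $\int_{\R^3}\ue^{\ui\rho r}|\rho|(\beta+\rho^2)^{-2}\,\ud\rho=-4\pi\log r+\mathcal{O}(1)$; replacing the square root (respectively the $\tan^{-1}$) by its limit at $\rho=\infty$ isolates the logarithm, the remainder being an integrable correction that contributes a multiplier of growth at most $|\sigma|^\eps$ (Lemma~\ref{lem:S bound}), hence bounded on $H^\eps$. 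Collecting the coefficients — the $T_\mathrm{d}$-part carrying the $\tfrac{2\sqrt{m(m+1)}}{2m+1}$-term and the $T_\mathrm{od}$-part the $\tan^{-1}$-term of $\gamma_m$ in~\eqref{eq:alpha} — the total coefficient of $\log r$ is precisely $-\gamma_m\,\psi\uppar n$, independently of $M$ and $n$, so that adding $\gamma_m\log(r)\psi\uppar n$ in each of the $M$ summands cancels the divergence and leaves a remainder converging in $\hilb\uppar n$. Finally, replacing $\psi\uppar n$ evaluated at $x_\mu$ by $\psi\uppar n$ evaluated at the pair centre of mass $\tfrac{2m}{2m+1}x_\mu+\tfrac1{2m+1}y_{n+1}$ changes the averaged expression only by a term that vanishes as $r\to0$, by the spherical-mean identity~\eqref{eq:av 0} (using $\psi\in H^\eps$); combining the three steps yields the claimed limit $S_\mathrm{sing}\psi$ in $\hilb\uppar n$.

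The main obstacle is the bookkeeping underlying the first paragraph: among the $\mathcal{O}(Mn)$ summands making up $R$, one must identify exactly which ones contribute a logarithm on a given collision plane and check that, after the cancellations peculiar to this model, their logarithmic coefficients add up to $-\gamma_m$ with no residual dependence on $M$ or $n$ — in particular that the a priori singular-looking contributions of $T_\mathrm{d}^\lambda$ with $\lambda\neq\mu$ and of the off-diagonal $T_\mathrm{od}$-pieces reorganise into regular boundary values or mutually cancel. Making the Schur bounds of the second paragraph uniform in $n$ despite the growing number of terms is the other point requiring genuine care rather than routine estimation; both are precisely what the auxiliary Lemma~\ref{lem:S bound} in the appendix provides, while the present lemma assembles these ingredients.
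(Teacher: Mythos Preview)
Your overall plan—split $R=R_\mathrm{d}+R_\mathrm{od}$, separate singular from regular contributions near $\{y_{n+1}=x_\mu\}$, reduce the singular part to the $M=1$, $n=0$ computation of Proposition~\ref{prop:A_T} plus a cancellation of the $(M-1)$ extra logarithms, and control the regular part via Lemma~\ref{lem:S bound}—is exactly the paper's approach.

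There is, however, a concrete error in your identification of the singular terms for $R_\mathrm{od}$. It is \emph{not} true that only the summands built from $G_{\mu,n+1}\psi$ can carry a logarithmic divergence at $\{y_{n+1}=x_\mu\}$. The operator $T_\mathrm{od}$ annihilates one particle and creates another, so it can absorb the $G$-created particle (at $x_\nu$, say) and create a new one at $(x_\mu,y_{n+1})$; this produces a logarithm at $\{y_{n+1}=x_\mu\}$ starting from $G_{\nu,n+1}\psi$ for \emph{every} $\nu\in\{1,\dots,M\}$, not just $\nu=\mu$. In the paper's bookkeeping (unfolding to the $(n+2)$-particle intermediate with indices $i,j\in\{1,\dots,n+2\}$), the singular condition is that the first argument of $\hat\psi$ become independent of $k_{n+1}$ after setting $Q=P+e_\mu k_{n+1}$; this forces $(\lambda,i)=(\mu,n+1)$ (the $T_\mathrm{od}$-creation) and $(\omega,j)=(\nu,n+2)$ (the $G$-creation matching the $T_\mathrm{od}$-evaluation), with $\nu$ free. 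The $(M-1)$ cancellation you correctly anticipate is then between the $T_\mathrm{d}^\nu$ ($\nu\neq\mu$) terms acting on $G_{\mu,n+1}\psi$ and the $R_\mathrm{od}$ terms with evaluation index $\nu\neq\mu$ acting on $G_{\nu,n+1}\psi$—so the two sides of the cancellation live over \emph{different} $G$-summands, contrary to your first paragraph. (The reference to Remark~\ref{rem:cancellations} is also misplaced: that remark concerns the absence of cancellations in $TG$ itself, not the $(M-1)$-cancellation in the logarithmic coefficient.)

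This mis-identification does not break the strategy—you would notice and correct it upon actually computing—but since you flag the bookkeeping as the main obstacle, you should be aware that the correct organising principle is the Fourier-side criterion above rather than the singular-support heuristic you propose.
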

\begin{proof}
Let $R_\mathrm{d}=- L^{-1}T_\mathrm{d} G$, $R_\mathrm{od}=- L^{-1}T_\mathrm{od} G$.

In this proof we will focus on the exact form of the logarithmic divergence and discard some regular terms. Precise estimates of these will be given in the proof of Lemma~\ref{lem:S bound}.

When expanded, the expression for $\sqrt{n+1} R_\mathrm{d}=- \sqrt{n+1} L^{-1}T_\mathrm{d} G$ will contain a sum over
 $\nu\in \{1,\dots, M\}$ coming from $T_\ud$, see Eq.~\eqref{eq:Tdiag}, and a second sum over pairs $(\lambda,i)\in \{1,\dots, M\}\times \{1,\dots n+1\}$ coming from the creation operator in $G$. We then want to evaluate this expression on the plane $\{x_\mu=y_{n+1}\}$. For a fixed set of indices, we can write the corresponding term using the Fourier transform, obtaining an expression similar to~\eqref{eq:R_d}. This is, up to a prefactor,
\begin{align}\label{eq:R_d n}
 \int   \frac{\sqrt{n+2+ \frac{1}{2m +1} p_\nu^2 + \frac{1}{2m} \hat P_\nu^2+ K^2}}{L(P,K)^2 }\ue^{\ui X P + \ui Y K} \hat \psi(P+  e_\lambda k_i, \hat K_i) \ud P \ud K.
\end{align}
 To analyse the behaviour as $x_\mu-y_{n+1}\to 0$ it is instructive to set $Q=P+e_\mu k_{n+1}$.
The function $\psi$ then appears as
 \begin{equation*}
 \hat \psi(Q-e_\mu k_{n+1} + e_\lambda k_i, \hat K_{i}).
 \end{equation*}
The operator defined by~\eqref{eq:R_d n} is of a very different nature for $(\mu,n+1)=(\lambda,i)$ and all other cases. In the case of equality, it is essentially a Fourier multiplier by the value of the $k_{n+1}$-integral, since then $\hat\psi$ no longer depends on this variable. This is singular as $x_\mu-y_{n+1}\to 0$, because the integral is not absolutely convergent. One the other hand, if  $(\mu,n+1)\neq(\lambda,i)$ then this is an integral operator that can be bounded on $L^2$ by the Schur test and depends continuously on $x_\mu-y_{n+1}$. We prove uniform bounds in the particle number on this operator in Lemma~\ref{lem:S bound}. 
Consider now the singular terms, with $(\mu,n+1)=(\lambda,i)$. For given $\mu$, there are still $M$ of  these, indexed by $\nu$. 
For $\nu\neq \mu$ the the integral over $k_{n+1} $  can be rewritten as
\begin{align*}
\int \frac{\ue^{\ui \rho(y_{n+1}-x_\mu)}
\sqrt{n+2+\frac{1}{2m+1}\sigma^2 + \frac{2m+1}{2m} \rho^2 + \frac{1}{2m+1}p_\nu^2 + \frac{1}{2m}\hat P_{\mu,\nu}^2   + \hat K_{n+1}^2}}
{(n+1 +\frac{1}{2m+1}\sigma^2 + \frac{2m+1}{2m} \rho^2+ \frac{1}{2m}\hat P_\mu^2 + \hat K^2_{n+1})^2}\ud \rho.
\end{align*}
This can be analysed starting from Eq.~\eqref{eq:asym T^d}, by replacing $\sigma^2 $ with the appropriate expression and the prefactor $\sqrt{(2m+2)/(2m+1)}$ by $\sqrt{(2m+1)/(2m)}$. The term for $\mu=\nu$ has the same prefactor as~\eqref{eq:asym T^d}, so the total prefactor of the divergent term $\log \abs{x_\mu-y_{n+1}}$ is
\begin{equation}\label{eq:alpha M diag}
 -\frac{1}{(2\pi)^3}\left(\frac{2m}{2m+1}\right)^{3}\left(\frac{ 2\sqrt{m(m+1)}}{2m+1} + (M-1) \right).
\end{equation}

For $R_\mathrm{od}$ we have a sum over $(\nu,\lambda,i)\in \{1,\dots, M\}^2\times\{1,\dots,n+2\}$ with $(\nu,n+2)\neq (\lambda,i)$ coming from $T_\mathrm{od}$, Eq.~\eqref{eq:Tod}, and a sum over $(\omega,j)$, with $\omega\in \{1,\dots, M\}$ and $i\neq j\in \{1,\dots, n+2\}$, coming from the creation operator. All the summands can be written in a form similar to~\eqref{eq:R od1}, proportional to
\begin{align}\label{eq:R_od reg}
 \int \frac{\ue^{\ui P X+\ui \hat K_{n+2} \hat Y_{n+2} }\hat\psi( P-e_\nu k_{n+2}+e_\lambda k_i+e_\omega k_j, \hat K_{i,j}) }
{L( P,\hat K_{n+2})L(P-e_\nu k_{n+2},K)L(P-e_\nu k_{n+2}+e_\lambda k_i,\hat K_{i})}
\ud P \ud K
,
\end{align}
where $K=(k_1,\dots, k_{n+2})$ and $L$ denotes the operator on the space with the number of particles corresponding to the dimension of its argument (which is either $n+1$ or $n+2$).
Here, $\hat\psi $ occurs in the form (with $Q=P+e_\mu k_{n+1}$)
 \begin{equation*}
  \psi(Q-e_\mu k_{n+1} - e_\nu k_{n+2} + e_\lambda k_i+ e_\omega k_j, \hat K_{i,j}).
 \end{equation*}
  As in the case $n=0$, the corresponding operator is singular only when the first argument of $\psi$ equals $Q$. Since $(\nu,n+2)\neq (\lambda,i)$ this only happens for $(\nu,{n+2})=(\omega,j)$ and $(\mu,{n+1})=(\lambda,i)$.
 In the singular case, the operator is a Fourier multiplier by a function  of $(Q, \hat K_{n+1,n+2})$ proportional to
  \begin{align*}
 \int \begin{aligned}[t]
         &\frac{1}{n+1+\frac{1}{2m}(Q - e_\mu k_{n+1})^2 + \hat K^2_{n+2}}
         \frac{1}{n+1+\frac{1}{2m}(Q -e_\nu k_{n+2})^2 + \hat K^2_{n+1}}
         \\
         &\times\frac{1}{n+2+\frac{1}{2m}(Q - e_\mu k_{n+1}-e_\nu k_{n+2})^2 + K^2}
         \ud k_{n+1} \ud k_{n+2}.
       \end{aligned}
 \end{align*}
 
 For fixed $\mu$ this gives one term with $\mu=\nu$ that behaves exactly like the expression~\eqref{eq:R od1} for $M=1$, $n=0$. 
 The $M-1$ terms with $\mu\neq \nu$ give a prefactor of $\log (r)$ that exactly cancels the $(M-1)$-term in Eq.~\eqref{eq:alpha M diag}, so $\sqrt{n+1}R\psi$ has the same singularity for $x_\mu-y_{n+1}\to0$ as in the case $M=1$, $n=0$ that was treated in the proof of Proposition~\ref{prop:A_T}. 
 
This proves that the sum of the singular terms and $\gamma_m \log(r) \psi$ converges as $r\to 0$. Convergence of the remaining terms is implied by the bounds of Lemma~\ref{lem:S bound}, as argued in Lemma~\ref{lem:A}.
%
\end{proof}

\begin{lem}\label{lem:S bound}
Let $S_\mathrm{reg}$ be given by~\eqref{eq:S reg} and $S_\mathrm{sing}=S-S_\mathrm{reg}$ be the operator from Lemma~\ref{lem:S exists}.
  There exists a constant $C$ such that for all $\eps>0$, $n\in \N$ and $\psi\in \hilb\uppar{n}\cap H^\eps(\R^{3(M+n)})$  
 \begin{equation}
  \norm{S\psi}_{\hilb\uppar{n}} \leq C\left(\frac1\eps \norm{\psi}_{H^\eps} + (1+\log(n+1)) \norm{\psi}_{\hilb\uppar{n}}\right).
 \end{equation}
\end{lem}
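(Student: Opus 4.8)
The plan is to estimate the two summands $S=S_\mathrm{reg}+S_\mathrm{sing}$ from Proposition~\ref{prop:A_T} separately, noting that $S_\mathrm{reg}$ is purely ``regular'' and bounded uniformly in $n$, while all of the $\eps$- and $n$-dependence in the asserted bound comes from $S_\mathrm{sing}$. For $S_\mathrm{reg}$, recall~\eqref{eq:S reg}: the two functions to which the annihilation operators are applied lie in $H^2(\R^{3(M+n+1)})$ (by the remarks following~\eqref{eq:G_T decomp} and Lemma~\ref{lem:G bound}), so we may use $-G^*=\sum_\mu a(\delta_{x_\mu})L^{-1}$ (Corollary~\ref{cor:aL^-1}) and apply $L$ inside to get
\[
 S_\mathrm{reg}\psi = -G^*\Big((T+c_0)L^{-1}(T+c_0)G_T - c_0\,G\Big)\psi .
\]
Here $G^*$ is bounded by Corollary~\ref{cor:aL^-1}, the operators $G_T$ and $G$ are bounded on $\hilb\uppar n$ by Proposition~\ref{prop:G_T} and Lemma~\ref{lem:G bound}, and $(T+c_0)L^{-1}(T+c_0)=\big((T+c_0)L^{-1/2}\big)\big(L^{-1/2}(T+c_0)\big)$ is bounded on $\hilb\uppar{n+1}$ by Lemma~\ref{lem:T} (using symmetry of $T$ and duality), all with constants independent of $n$. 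Hence $\norm{S_\mathrm{reg}\psi}_{\hilb\uppar n}\le C\norm{\psi}_{\hilb\uppar n}$, which is absorbed in the $(1+\log(n+1))\norm{\psi}$-term.

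For $S_\mathrm{sing}$ I would revisit the computations in the proofs of Proposition~\ref{prop:A_T} and Lemma~\ref{lem:S exists}, keeping track of constants. Writing $R=R_\mathrm{d}+R_\mathrm{od}=-L^{-1}(T_\mathrm{d}+T_\mathrm{od})G$ and expanding into the sums over pairs of creation/annihilation planes, each summand is either (i) a \emph{diagonal} term, in which the annihilated and the created particle carry the same label, which acts as a Fourier multiplier $m(P,K)$ in the $n$-particle variables, or (ii) an \emph{off-diagonal} term, which is a genuine integral operator. For the diagonal terms I would carry out the relative-momentum asymptotics as in~\eqref{eq:asym T^d}, using $\int_{\R^3}\frac{\ue^{\ui\lambda\xi}\abs\xi}{(1+\xi^2)^2}\ud\xi=-4\pi\log\lambda+\mathcal O(1)$ as $\lambda\to0$; by the choice~\eqref{eq:alpha} of $\gamma_m$ and the cancellation of the $\mu\neq\nu$ contributions noted in Lemma~\ref{lem:S exists}, the $\log r$-divergence cancels against $\gamma_m\log(r)\psi$, leaving, for each of the $M$ values of $\mu$, one Fourier multiplier with
\[
 \abs{m(P,K)}\le C\big(1+\log(n+1)+\log(1+\abs P^2+\abs K^2)\big),
\]
plus the error multipliers produced by replacing the numerator $\sqrt{\,\cdot\,}$ by $\sqrt{(2m+2)/(2m+1)}\abs\rho$ (and the analogous replacements for $R_\mathrm{od}$), which grow at most like $(1+\abs P^2+\abs K^2)^{\eps}$ for any $\eps>0$. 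Using the elementary inequality $\log(1+t)\le\eps^{-1}(1+t)^{\eps}$ for $t\ge0$, $0<\eps\le1$, the multiplier part of $S_\mathrm{sing}$ is then bounded by $C\big(\tfrac1\eps\norm{\psi}_{H^\eps}+(1+\log(n+1))\norm{\psi}_{\hilb\uppar n}\big)$ after renaming $\eps$. For the off-diagonal terms, which are finitely many sums over the particle index $i$ (or over the pair $i,j$) with a number of summands growing linearly in $n$, I would apply the Schur test of Lemma~\ref{lem:Schur} with an appropriate weight $g$, exactly as in the proof of Lemma~\ref{lem:T_od}: the kernels decay like $\abs k^{-3}$ in the integrated relative momentum and have the convolution structure of $T_\mathrm{od}$, so the Schur constants $\Lambda,\Lambda'$ are finite and independent of $n$, giving an $n$-uniform bound $C\norm{\psi}_{\hilb\uppar n}$ (no $H^\eps$ is needed here, since the outer $L^{-1}$ in $R$ supplies enough smoothing for the trace to be well defined on $\hilb\uppar n$). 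Adding the diagonal, off-diagonal and $S_\mathrm{reg}$ contributions yields the claim.

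The hard part is the diagonal contribution: one must extract the exact Fourier multiplier that remains after subtracting $\gamma_m\log(r)\psi$, tracking the dependence on $n$ — which enters only through the ``$n+1$'' in the denominators $L(P,K)$ — carefully enough that it produces an additive $\log(n+1)$ rather than anything worse, and verify that the coefficients of $\log r$ cancel precisely as asserted in Lemma~\ref{lem:S exists} (this is where the $\mu\neq\nu$ cancellation between $R_\mathrm{d}$ and $R_\mathrm{od}$ and the value of $\gamma_m$ enter). The off-diagonal Schur-test bookkeeping is the longest part, but it is routine given Lemmas~\ref{lem:Schur} and~\ref{lem:T_od}; likewise the precise estimates on the convergent remainders follow the scheme already indicated in Section~\ref{sect:A}.
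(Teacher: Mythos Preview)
Your proposal is correct and follows the same overall architecture as the paper: split $S=S_\mathrm{reg}+S_\mathrm{sing}$, then for $S_\mathrm{sing}$ separate the Fourier-multiplier (``diagonal'') terms, which carry the $\tfrac{1}{\eps}\norm{\psi}_{H^\eps}$ and the $\log(n+1)$ contributions, from the integral-operator (``off-diagonal'') terms, which are bounded uniformly in $n$ via Lemma~\ref{lem:Schur} with the weight $g(k)=1+k^2$. Your identification of where the $\log(n+1)$ enters (through the $n$-dependence of $L$ in the multiplier after the $\log r$ cancellation) and your use of $\log(1+t)\le \eps^{-1}(1+t)^\eps$ match the paper's computations.

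The one genuine difference is your treatment of $S_\mathrm{reg}$. You rewrite it as $-G^*\big((T+c_0)L^{-1}(T+c_0)G_T-c_0G\big)$ and invoke Corollary~\ref{cor:aL^-1} together with the duality bound $\norm{G^*}_{\hilb\uppar{n+1}\to\hilb\uppar{n}}=\norm{G}_{\hilb\uppar{n}\to\hilb\uppar{n+1}}\le C(n+1)^{-1/4}$ from Lemma~\ref{lem:G bound}, obtaining a bound on $S_\mathrm{reg}$ that is \emph{uniform} in $n$. The paper instead estimates $\norm{a(\delta_{x_\mu})L^{-1}}$ by a direct Cauchy--Schwarz in $k_{n+1}$, which only gives $C(n+1)^{1/4}$; combined with the $(n+1)^{-1/4}$ from the inner factor this leads the paper (after an auxiliary parameter $s$) to a $C(1+\log(n+1))$ bound for $S_\mathrm{reg}$. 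Your route is cleaner and yields a sharper estimate for this piece; both are of course sufficient for the lemma.
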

\begin{proof}
For the regular part $S_\mathrm{reg}$ we have
\begin{align*}
\norm{S_\mathrm{reg}\psi}_{\hilb\uppar{n}}
 \leq M &\norm{a(\delta_{x_1})L^{-1}}_{\hilb\uppar{n+1}\to\hilb\uppar{n}}\\
 & \times \norm{\left((T+c_0)L^{-1}(T+c_0) G_T-c_0 G \right)\psi}_{\hilb\uppar{n+1}}.
\end{align*}
By Lemma~\ref{lem:T}, and Proposition~\ref{prop:G_T}, Lemma~\ref{lem:G bound} we have
\begin{equation*}
 \norm{\left((T+c_0)L^{-1}(T+c_0) G_T-c_0 G \right) \psi}_{\hilb\uppar{n+1}} \leq C (n+1)^{-1/4} \norm{\psi}_{\hilb\uppar{n}},
\end{equation*}
and for $0< s\leq 1/2$ we can estimate by the Cauchy-Schwarz inequality
\begin{align*}
 \norm{a(\delta_{x_1})L^{-1}}_{\hilb\uppar{n+1}\to\hilb\uppar{n}}
 &\leq c\sqrt{n+1} \left( \int_{\R^3}\frac1{(n+1+K^2)^{2}}\ud k_{n+1}\right)^{1/2}\\
 &\leq C \frac{(n+1)^{1/4 +s}}{\sqrt s}.
\end{align*}
Choosing $s=1/\log(n+1)^2$ for $n\geq 1$ then gives the bound
\begin{equation*}
\norm{S_\mathrm{reg}\psi}_{\hilb\uppar{n}}  \leq C(1+ \log (n+1))\norm{\psi}_{\hilb\uppar{n}} .
\end{equation*}

For the singular part $S_\mathrm{sing}$ we give quantitative improvements on the proofs of Proposition~\ref{prop:A_T} and Lemma~\ref{lem:S exists}.
Since we are not interested in the exact dependence on $m$ here, we set the mass of the $x$-particles to $m=\tfrac12$ during this proof.

We first estimate the errors made by the simplifications in the calculation of the singularity for $M=1$, $n=0$ in Proposition~\ref{prop:A_T}. These generalise to the corresponding calculations for arbitrary $M, n$ in Lemma~\ref{lem:S exists} in a straightforward way. The replacement made from Eq.~\eqref{eq:R_d} to Eq.~\eqref{eq:asym T^d} produces an error given by
\begin{equation}\label{eq:R_d err}
 c \int \frac{\ue^{\ui \sigma s +\ui \rho r}\left(\sqrt{1+\frac32 \rho^2 + \frac38 \sigma^2 - \frac12 \rho \sigma}-\sqrt{\frac32} \abs{\rho} \right)}{(1+\frac12 \sigma^2 + 2 \rho^2)^2}\hat \psi(\sigma)\ud \sigma \ud \rho.
\end{equation}
The integrand is bounded by
\begin{equation*}
\frac{1+ \tfrac{3}{8}\sigma^2 + \frac12 \sigma \rho}{(1+ \tfrac{3}{8}\sigma^2 + \frac12 \sigma \rho + \tfrac32\rho^2)^{1/2}(1+\frac1{2}\sigma^2 +2\rho^2)^2}
  \leq C \frac{1+|\sigma|^\eps}{(1+\frac1{2}\sigma^2 +2\rho^2)^{3/2+\eps/2}},
\end{equation*}
for $0<\eps<1/2$.
This is integrable in $\rho$, so~\eqref{eq:R_d err} can be evaluated at $r=0$, leading to an estimate by
\begin{equation*}
 \Vert \eqref{eq:R_d err} \vert_{r=0}\Vert_{\hilb\uppar{0}} \leq \frac C \eps \Vert \psi \Vert_{H^\eps}.
\end{equation*}
The simplified singular part of $R_\ud$,~\eqref{eq:asym T^d} contributes a Fourier multiplier by $\log(1+\frac1{\sqrt 2} \abs{\sigma})$ which can be estimated on $H^\eps$ in the same way. This completes the case of $R_\ud$ for $M=1$, $n=0$. The reasoning for the divergent part and arbitrary $n$ is essentially the same, except that there is a term growing like $\log(n+1)$ due to the $n$-dependence of $L$.

In the calculation for $R_\mathrm{od}$ with $M=1$, $n=0$ we made a simplification in replacing $(\sigma-\xi)^2$ by $\sigma^2 + \xi^2$ in the denominator of~\eqref{eq:R od1}, i.e. replacing
\begin{align}\notag
\tau(\sigma, \rho, \xi)=\frac{1}
 {(1+\frac1{2}\sigma^2+2\rho^2)(1 + (\sigma-\xi)^2 +\xi^2)(2+\xi^2 + \frac1{2}(\sigma-\xi)^2+2(\rho+\frac1{2}\xi)^2)}
\end{align}
by
\begin{align}\notag
\tau_0(\sigma, \rho, \xi)=\frac{1}
 {(1+\frac1{2}\sigma^2+2\rho^2)(1 + \sigma^2 +2 \xi^2)(1+\frac32 \xi^2 +\frac12 \sigma^2 +2(\rho+\frac1{2}\xi)^2)}.
\end{align}
We have, with $0<\eps<\frac12$,
\begin{align}\label{eq:tau diff}
 \abs{\tau-\tau_0} \leq C \frac{|\sigma|^\eps}{(1+2\rho^2)^{1/2+\eps/2}}\frac{1}{(1+\xi^2)^{3/2}} \frac{1}{2+\xi^2+\rho^2}.
\end{align}
And this implies that
\begin{equation*}
 \sup_{\sigma \in \R^3} (1+\sigma^2)^{-\eps/2} \int |\tau -\tau_0|(\sigma, \rho, \xi) \ud \rho \ud \xi \leq \frac{C}{\eps},
\end{equation*}
which gives the desired estimate for the error as an operator from $H^\eps$ to $\hilb\uppar{0}$.
The contribution of the simplified $R_\mathrm{od}$, with $\tau_0$ instead of $\tau$, can be bounded by $\log(1+\abs{\sigma})$, as for $R_\ud$.

For general $M$ and $n$ we still need to bound the evaluations of the terms in $\sqrt{n+1}R\psi$ that are regular at $x_\mu=y_{n+1}$. We will prove that the sum of these terms gives rise to a bounded operator on $\hilb\uppar{n}$, whose norm is a bounded function of $n$. 

For $\sqrt{n+1} R_\ud \psi$, the regular terms are the evaluations of~\eqref{eq:R_d n} with $(\mu,n+1)\neq (\lambda,i)$ at $x_\mu=y_{n+1}$.
Denote by $\vartheta_{\mu,\nu,\lambda,i}\psi$ the Fourier transform of this function, that is
\begin{align*}
 \vartheta_{\mu,\nu,\lambda,i}\psi(P,\hat K_{n+1})=
 \frac{1}{2(2\pi)^4 2^{3/2}}\int \kappa_{\mu,\nu}(P,K) \hat \psi(P+e_\lambda k_i-e_\mu k_{n+1},\hat K_i) \ud k_{n+1},
\end{align*}
with
\begin{align*}
 \kappa_{\mu,\nu}(P,K)
 &=\left\lbrace
 \begin{aligned}
 & \frac{\sqrt{n+2+ \frac{1}{2m +1} p_\nu^2 +\frac{1}{2m}  (\hat P_{\nu}-e_\mu k_{n+1})^2+  K^2}}{L(P-e_\mu k_{n+1},K)^2 } \qquad \mu\neq \nu\\
 & \frac{\sqrt{n+2+ \frac{1}{2m +1} (p_\mu-k_{n+1})^2+  \hat P_{\mu}^2+  K^2}}{L(P-e_\mu k_{n+1},K)^2 }
 \qquad \mu=\nu
 \end{aligned}\right.\\
 &\leq \frac{2}{L(P-e_\mu k_{n+1},K)^{3/2}}.
\end{align*}
Applying, for fixed $\mu, \nu, \lambda$, Lemma~\ref{lem:Schur} with kernel $\kappa_{\mu,\nu}(P,\hat K_{n+1}, k_{n+1})$ and weight function $g(k)=1+k^2$ we obtain
\begin{align*}
\norm{\sum_{i=1}^{n} \vartheta_{\mu,\nu,\lambda,i} \psi}_{\hilb\uppar{n}} 
&\leq \frac{\Vert \psi\Vert_{\hilb\uppar{n}}}{(2\pi)^4 2^{3/2}}\sum_{i=1}^{n} \int \frac{(1 +k_i^2)}{(n+1 + K^2 )^{3/2} |k_{n+1}|^2}\ud k_{n+1}\\
&\leq C \Vert \psi \Vert_{\hilb\uppar{n}}.
\end{align*}
The operators $\vartheta_{\mu,\nu,\lambda,i}$ with $i=n+1$ are bounded by the standard Schur test (see also Lemma~\ref{lem:T_od}).
This gives a bound on the evaluation of the regular terms in $\sqrt{n+1}R_\ud$ that is independent of $n$.

For $\sqrt{n+1} R_\mathrm{od}\psi$, the regular terms are given by~\eqref{eq:R_od reg} with indices $(\nu,n+2)\neq (\omega,j)$ or $(\mu,n+1)\neq (\lambda,i)$ and $x_\mu=y_{n+1}$. 
Their Fourier transforms are 
\begin{align*}
 &\Theta_{\mu,\nu,\lambda,\omega,i,j} \psi(P, \hat K_{n+1,n+2})\\
 &=-\frac{1}{(2\pi)^6} \int
 \begin{aligned}[t]
&\frac{1}{L(P-e_\mu k_{n+1},\hat K_{n+1})L(P-e_\nu k_{n+2}-e_\mu k_{n+1}, K)}\\
&\frac{\hat \psi(P-e_\mu k_{n+1} - e_\nu k_{n+2} + e_\lambda k_i + e_\omega k_j)}{L(P-e_\nu k_{n+2}-e_\mu k_{n+1}+ e_\lambda k_i, \hat K_i)}\ud k_{n+1} \ud k_{n+2}.
 \end{aligned}
 \end{align*}
For fixed $\mu,\nu,\lambda,\omega$, there are $n(n+1)$ of these terms. For $j,i<n+1$  we apply Lemma~\ref{lem:Schur} with $\ell=2$. Let $\kappa_{i,j}$ be the kernel of the operator $\Theta_{\mu,\nu,\lambda,\omega,i,j}$, then
\begin{align*}
 \kappa_{i,j}(P,\hat K_{n+1,n+2},k_{n+1},k_{n+2})&\leq\frac{1}{(n+1+\hat K_{n+2}^2)(n+2+K^2)(n+1+\hat K_i^2)}\\
 &\leq \frac{1}{(n+1+\hat K_{n+2}^2)^{3/2}(n+1+\hat K_{i,n+1}^2)^{3/2}}.
\end{align*}
Choosing again the weight function $g(k)=1+k^2$, Lemma~\ref{lem:Schur} gives 
\begin{align*}
 \norm{\sum_{i=1}^{n}\sum_{i\neq j=1}^{n} \Theta_{\mu,\nu,\lambda,\omega,i,j}\psi }_{\hilb\uppar{n}}
 \leq \frac{\norm {\psi}_{\hilb\uppar{n}}}{(2\pi)^6} \sqrt{\Lambda \Lambda'},
\end{align*}
with 
\begin{align*}
 \Lambda &\leq 
 \sup_Q \sum_{i=1}^n \int 
 \frac{(1+q_i^2)(n-1+\hat Q_i^2)}{(1+\xi^2)(n+1+\hat Q^2+\xi^2)^{3/2} (1+\eta^2) (n+1+\hat Q^2_i+\eta^2)^{3/2}}\ud \eta\ud \xi\\
 & \leq \left(\int \frac{1}{\eta^2 (1+\eta^2)^{3/2}} \ud \eta \right)^2,
 \end{align*}
and the same bound for $\Lambda'$.
If $j=n+2$, $i<n+1$, we apply Lemma~\ref{lem:Schur} with $\ell=1$, $\kappa_i=\kappa_{i,n+2}$ (note that $\kappa_i$ and $F=e_\lambda k_i - e_\mu k_{n+1} - e_\nu k_{n+2}$ depend on the additional variable $\xi=k_{n+2}$, but this changes nothing in the proof of Lemma~\ref{lem:Schur}). This gives the bound
\begin{align*}
&\norm{\sum_{i=1}^{n} \Theta_{\mu,\nu,\lambda,\omega,i,n+2}\psi }_{\hilb\uppar{n}}\\
&\leq \frac{\norm {\psi}_{\hilb\uppar{n}}}{(2\pi)^6}
\sup_Q \int \frac{n+Q^2}{(1+\eta^2)(n +Q^2+\eta^2)(n+Q^2+\eta^2+\xi^2)(1+\xi^2)}\ud \eta \ud \xi\\
&\leq \frac{\norm {\psi}_{\hilb\uppar{n}}}{(2\pi)^6}\left(\int \frac{1}{\eta^2 (1+\eta^2)^{3/2}} \ud \eta \right)\left(\int \frac{1}{\xi^2 (1+\xi^2)} \ud \xi \right).
\end{align*}
The estimate for the sum with $i=n+1$, $j<n+1$ is the same.
%
The remaining operators with $(i,j)=(n+1, n+2)$ (but restrictions on $\mu,\nu,\lambda,\omega$) are again bounded by the usual Schur test.
This completes the proof of the lemma.
\end{proof}

\begin{lem}\label{lem:S sym}
 For any $\eps>0$ and $n\in \N$ the operator $S$ is symmetric on the domain $D(S)=\hilb\uppar{n} \cap H^\eps(\R^{3(M+n)})$.
\end{lem}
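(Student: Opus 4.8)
The plan is to split $S$, on each fixed sector $\hilb\uppar{n}$, into the diagonal and off-diagonal parts $S=S_{\mathrm{d}}+S_{\mathrm{od}}$ that are implicit in the proofs of Proposition~\ref{prop:A_T} and of Lemmas~\ref{lem:S exists} and~\ref{lem:S bound}. Here $S_{\mathrm{d}}$ collects the contributions of those index combinations in which the particle annihilated by $A_T$ coincides with one created in $G$ (the terms $(\mu,n+1)=(\lambda,i)$, and $(\mu,n+1)=(\lambda,i)$ together with $(\nu,n+2)=(\omega,j)$, in Lemma~\ref{lem:S exists}), plus the bounded real remainders estimated in Lemma~\ref{lem:S bound}; $S_{\mathrm{od}}$ collects everything else, i.e.\ the evaluations on $\{x_\mu=y_{n+1}\}$ of the pieces of $\sqrt{n+1}\,R\psi$ that are regular there, together with $S_{\mathrm{reg}}$ from~\eqref{eq:S reg}. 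By Lemma~\ref{lem:S bound} both parts map $D(S)=\hilb\uppar{n}\cap H^\eps(\R^{3(M+n)})$ into $\hilb\uppar{n}$, so it is enough to show $\langle S\psi,\phi\rangle=\langle\psi,S\phi\rangle$ for $\psi,\phi\in D(S)$, treating $S_{\mathrm{d}}$ and $S_{\mathrm{od}}$ separately.

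The part $S_{\mathrm{d}}$ is, on each sector, a Fourier multiplier by the real-valued function computed explicitly above: it is assembled from the real constant $\gamma_m$ of~\eqref{eq:alpha}, the real prefactors in~\eqref{eq:alpha M diag}, the logarithms and $\tan^{-1}$-terms appearing in the proof of Proposition~\ref{prop:A_T}, and the real remainders of Lemma~\ref{lem:S bound}, all of which are real (with positive arguments). Since $S_{\mathrm{d}}\psi\in\hilb\uppar{n}$ for $\psi\in D(S)$, the identity $\langle S_{\mathrm{d}}\psi,\phi\rangle=\langle\psi,S_{\mathrm{d}}\phi\rangle$ is just Plancherel's theorem.

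For $S_{\mathrm{od}}$ I would mirror the symmetry argument for $T_{\mathrm{od}}$ in Lemma~\ref{lem:T}. Each summand is a composition of a Sobolev trace $\tau_{x_\mu}(y_{n+1})$, a co-trace $a^*(\delta_{x_\nu})$, the self-adjoint resolvents $L^{-1}$ and $(K+c_0)^{-1}$ and the bounded symmetric operator $(T+c_0)L^{-1}(T+c_0)$; concretely, this is the statement that, modulo the diagonal part already in $S_{\mathrm{d}}$, $S$ is the finite part of $-G_T^*(K+c_0)G_T$ obtained by a spherically symmetric regularisation of the two $\delta$-distributions occurring in $G_T$ and in $G_T^*$. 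Using $\tau_x^*=\delta_x$ and the self-adjointness of the resolvents and of $(T+c_0)L^{-1}(T+c_0)$, the adjoint of such a composition is again of the same form with trace and co-trace, and hence the labels $(\mu,n+1)$ and $(\nu,i)$ and the $x$-particle indices, interchanged; since the summation ranges are invariant under this interchange (just as $(\mu,n+1)\neq(\nu,i)$ is a symmetric condition in~\eqref{eq:T sym}), relabelling the summation variables returns $S_{\mathrm{od}}$. This gives $\langle S_{\mathrm{od}}\psi,\phi\rangle=\langle\psi,S_{\mathrm{od}}\phi\rangle$.

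The delicate point, which I expect to be the main obstacle, is precisely the identification of $S_{\mathrm{od}}$ with a manifestly symmetric composition: it is not immediate from $S=A_TG_T-T$, because the finite-part subtraction $f_m(r)(B\,\cdot\,)$ in~\eqref{eq:A_Tdef} treats the collision points $x_\mu+r\omega$ and $x_\mu$ asymmetrically, so the operator before the limit is \emph{not} symmetric, and the cross terms it produces (exactly those collected in $S_{\mathrm{reg}}$ and in the regular part of $\sqrt{n+1}\,R\psi$) reassemble into the symmetric form only as $r\to0$. If tracking these terms directly becomes unwieldy, the cleanest substitute is to use the ultraviolet-regularised operators of Remark~\ref{rem:Dressing}: for every finite $\Lambda$ the operator $S_\Lambda$ is of the form $a(v_\Lambda)\,Z_\Lambda\,a^*(v_\Lambda)$ with $v_\Lambda\in L^2(\R^3)$ and $Z_\Lambda$ bounded and self-adjoint (a difference of resolvents of $K_\Lambda$ and $L$), hence $S_\Lambda$ is bounded and symmetric on each $\hilb\uppar{n}$; subtracting the real constant $c\,\gamma_m\log\Lambda$ preserves symmetry; and by the estimates of this appendix $S_\Lambda-c\,\gamma_m\log\Lambda$ converges to $S$ strongly on $D(S)$ as $\Lambda\to\infty$. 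A strong limit of symmetric operators sharing the dense domain $D(S)$ is symmetric, which yields the claim.
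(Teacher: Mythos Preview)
Your primary approach—real Fourier multipliers for the diagonal part plus the duality $\tau_x^*=\delta_x$ for the off-diagonal compositions—is exactly the paper's proof, up to the cosmetic regrouping $S_{\mathrm{d}}+S_{\mathrm{od}}$ versus the paper's $S_{\mathrm{reg}}+S_{\mathrm{sing}}$. The one substantive difference is that the paper treats $S_{\mathrm{reg}}$ not by unpacking it into $\tau/\delta$ terms but by the resolvent identity
\[
L^{-1}(T+c_0)L^{-1}(T+c_0)(K+c_0)^{-1}=(K+c_0)^{-1}(T+c_0)L^{-1}(T+c_0)L^{-1},
\]
which directly exhibits the sandwiched factor as self-adjoint; this is a little cleaner than tracking the index structure you propose, though both work. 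Your alternative route via the symmetric cutoff operators $S_\Lambda$ and their strong convergence on $D(S)$ is genuinely different from the paper's argument and is conceptually attractive, but note that it rests on the convergence $S_\Lambda-c\gamma_m\log\Lambda\to S$ that is only asserted, not proved, in Remark~\ref{rem:Dressing}; if you take that route you would need to supply that convergence independently.
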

\begin{proof}
The operator $S_\mathrm{reg}$, defined in~\eqref{eq:S reg}, can be written as
\begin{equation*}
 -\sum_{\mu=1}^M\sum_{\nu=1}^M a(\delta_{x_\mu}) L^{-1} \left( (T+c_0) L^{-1} (T+c_0)  (L+T+c_0)^{-1} - c_0 L^{-1}\right) a^*(\delta_{x_\nu}).
\end{equation*}
The operator $aL^{-1}:\hilb\uppar{n+1}\to \hilb\uppar{n}$ is bounded and $(aL^{-1})^*=L^{-1}a^*$, so the second term above is bounded and symmetric. For the first term, observe additionally that
\begin{equation*}
  L^{-1} (T+c_0) L^{-1} (T+c_0)  (L+T+c_0)^{-1}=(L+T+c_0)^{-1} (T+c_0) L^{-1} (T+c_0)L^{-1},  
\end{equation*}
by the resolvent formula. This implies that  $S_\mathrm{reg}$ is bounded and symmetric.

As shown in Lemma~\ref{lem:S exists}, the divergent terms in $S_\mathrm{sing}$ give rise to real Fourier multipliers. These are bounded from $H^\eps(\R^{3(M+n)})$ to $\hilb\uppar{n}$ by Lemma~\ref{lem:S bound}, and thus symmetric on this domain. The regular terms in $S_\mathrm{sing}$ give rise to a bounded operator by the proof of Lemma~\ref{lem:S bound}. 
Since these terms are regular, the limit in the definition of $S_\mathrm{sing}$ just gives the evaluation at $y_{n+1}=x_\mu$. Denote this evaluation map by $\tau_{x_\mu}(y_{n+1})$. Then the contribution of the regular part of $R_\ud$ to $S_\mathrm{sing}$ is
\begin{equation*}
 \sum_{\mu=1}^M\sum_{(\lambda,i)\neq (\mu,n+1)}\tau_{x_\mu}(y_{n+1}) L^{-1} T_\ud L^{-1} \delta_{x_\lambda}(y_i) \psi(X,\hat Y_i).
\end{equation*}
This defines a symmetric operator because $(\tau_x L^{-1})^* = L^{-1} \delta_x$, $T_\ud$ is symmetric and both $T_\ud$ and $L$ commute with permutations of the $y_i$. Similarly,
%
the regular terms in $R_\mathrm{od}$ are the sum of
\begin{align*}
\tau_{x_\mu}(y_{n+1}) L^{-1} \tau_{x_\nu}(y_{n+2})  L^{-1} \delta_{x_\lambda}(y_i) L^{-1}\delta_{x_\omega}(y_j)\psi\uppar{n}(X,\hat Y_{i,j})
\end{align*}
over all indices with $i\neq j$ and $(\mu,n+1)\neq (\lambda,i)$ or $(\nu,n+2)\neq (\omega,j)$ (see also Eq.~\eqref{eq:T sym}). This operator is symmetric for the same reason as in the case of $R_\ud$.
\end{proof}

\section*{Acknowledgements}

I am grateful to Stefan Keppeler, Julian Schmidt, Stefan Teufel and Roderich Tumulka for many interesting discussions on the subject of interior-boundary conditions.

\newcommand{\etalchar}[1]{$^{#1}$}
\providecommand{\bysame}{\leavevmode\hbox to3em{\hrulefill}\thinspace}
\providecommand{\MR}{\relax\ifhmode\unskip\space\fi MR }
\providecommand{\MRhref}[2]{%
  \href{http://www.ams.org/mathscinet-getitem?mr=#1}{#2}
}
\providecommand{\href}[2]{#2}

\end{document}